\newtheorem{theorem}{Theorem}[section]
\newtheorem{lemma}[theorem]{Lemma}
\newtheorem{definition}[theorem]{Definition}
\newtheorem{proposition}[theorem]{Proposition}
\newtheorem{corollary}[theorem]{Corollary}
\newtheorem{example}[theorem]{Example}
\let\al=\alpha
\let\vf=\varphi
\let\sg=\sigma
\let\dl=\delta
\let\Dl=\Delta
\let\om=\omega
\def\cA{\mathcal A}
\def\cC{\mathcal C}
\def\cD{\mathcal D}
\def\cP{\mathcal P}
\def\cX{\mathcal X}
\def\cY{\mathcal Y}
\def\zA{\mathbb A}
\def\zB{\mathbb B}
\def\zC{\mathbb C}
\def\zF{\mathbb F}
\def\zZ{\mathbb Z}
\def\zR{\mathbb R}
\def\zN{\mathbb N}
\def\GF{\mathsf{GF}}
\def\ba{\mathbf{a}}
\def\bb{\mathbf{b}}
\def\bc{\mathbf{c}}
\def\bd{\mathbf{d}}
\let\ov=\overline
\let\sse=\subseteq
\let\tm=\times
\def\vc#1#2{#1_1,\dots,#1_{#2}}
\def\cl#1#2{\arraycolsep0pt
\left(\begin{array}{c} #1\\ #2 \end{array}\right)}
\def\zd{,\dots,}
\def\ang#1{\langle #1\rangle}
\def\bs{\mathbf{s}}
\def\bx{\mathbf{x}}
\def\pr{\mathrm{pr}}
\newcommand{\Field}{\mathbb{F}}
\newcommand{\Real}{\mathbb{R}}
\newcommand{\Complex}{\mathbb{C}}
\newcommand{\CSP}{\textsc{CSP}}
\newcommand{\MCSP}{\textsc{CSP}}
\newcommand{\IMP}{\textsc{IMP}}
\newcommand{\xIMP}{\mbox{$\chi$\textsc{IMP}}}
\newcommand{\Pol}{\textsf{Pol}}
\newcommand{\Sos}{\textsf{SOS}}
\newcommand{\Sol}{\textsf{Sol}}
\newcommand{\I}{\emph{\texttt{I}}}
\newcommand{\J}{\emph{\texttt{J}}}
\def\zQ{\mathbb Q}
\let\Gm=\Gamma
\newcommand{\mc}[1]{\mathcal{#1}}
\newcommand{\mb}[1]{\mathbf{#1}}
\newcommand{\Variety}[1]{{\textbf{V}}\left( #1 \right)}
\newcommand{\Ideal}[1]{\left\langle #1 \right\rangle}
\newcommand{\lex}{\textsf{lex }}
\newcommand{\grlex}{\textsf{grlex }}
\newcommand{\grlexns}{\textsf{grlex}}
\newcommand{\coNPc}{\text{\textbf{coNP}-complete}}
\newcommand{\multideg}{\textnormal{multideg}}
\newcommand{\LM}{\textnormal{LM}}
\newcommand{\LT}{\textnormal{LT}}
\newcommand{\LC}{\textnormal{LC}}
\newcommand{\LCM}{\textnormal{lcm}}
\newcommand{\GB}{\text{Gr\"{o}bner} Basis}
\newcommand{\GBs}{\text{Gr\"{o}bner} Bases}
\title{The Ideal Membership Problem and Abelian Groups\footnote{An extended abstract of this work appeared in the \emph{Proceedings of the 39th International Symposium on Theoretical Aspects of Computer Science (STACS 2022) \cite{stacs/BulatovR22}}.}}
 \author{Andrei A. Bulatov\thanks{{abulatov@sfu.ca}. Department of Computing Science, Simon Fraser University, Burnaby, BC, Canada. Research supported by an NSERC Discovery Grant.} \and Akbar Rafiey\thanks{{akbar.rafiey@gmail.com, ar9530@nyu.edu}. Department of Computer Science and Engineering, Tandon
School of Engineering, New York University, NY,  USA}}
\date{}
\begin{document}

\maketitle
\begin{abstract}
    Given polynomials $f_0,\vc fk$ the Ideal Membership Problem, IMP for short, asks if $f_0$ belongs to the ideal generated by $\vc fk$. In the search version of this problem the task is to find a proof of this fact. The IMP is a well-known fundamental problem with numerous applications. For instance, it underlies many proof systems based on polynomials such as Nullstellensatz, Polynomial Calculus, and Sum-of-Squares. Although the IMP is in general intractable, in many important cases it can be efficiently solved.
    
    Mastrolilli [SODA'19] initiated a systematic study of IMPs for ideals arising from Constraint Satisfaction Problems (CSPs), parameterized by constraint languages, denoted $\IMP(\Gm)$. The ultimate goal of this line of research is to classify all such IMPs accordingly to their complexity. Mastrolilli achieved this goal for IMPs arising from $\CSP(\Gm)$ where $\Gamma$ is a Boolean constraint language, while Bulatov and Rafiey [STOC'22] advanced these results to several cases of CSPs over finite domains. In this paper we consider IMPs arising from CSPs over `affine' constraint languages, in which constraints are subgroups (or their cosets) of direct products of Abelian groups. This kind of CSPs include systems of linear equations and are considered one of the most important types of tractable CSPs. Some special cases of the problem have been considered before by Bharathi and Mastrolilli [MFCS'21] for linear equations modulo 2, and by Bulatov and Rafiey [STOC'22] for systems of linear equations over $\GF(p)$, $p$ prime. Here we prove that if $\Gm$ is an affine constraint language then $\IMP(\Gm)$ is solvable in polynomial time assuming the input polynomial has bounded degree. 
\end{abstract}




\newpage

\section{Introduction}

\paragraph{The Ideal Membership Problem.}
Representing combinatorial problems by polynomials and then using algebraic techniques to approach them is one of the standard methods in algorithms and complexity. The Ideal Membership Problem (IMP for short) is an important algebraic framework that has been instrumental in such an approach. The IMP  underlies many proof systems based on polynomials such as Nullstellensatz, Polynomial Calculus, and Sum-of-Squares, and therefore plays an important role in such areas as proof complexity and approximation.

Let $\zF$ be a field and $\zF[\vc xn]$ be the ring of polynomials over $\zF$. Given polynomials $f_0$, $\vc fk\in\zF[\vc xn]$ the IMP asks if $f_0$ belongs to the ideal $\ang{\vc fk}$ of $\zF[\vc xn]$ generated by $\vc fk$. This fact is usually demonstrated by presenting a \emph{proof}, that is, a collection of polynomials $\vc hk$ such that the following polynomial identity holds $f_0=h_1f_1+\dots+h_kf_k$. Many applications require the ability to produce such a proof. We refer to this as \emph{finding a membership proof} problem. Note that by the Hilbert Basis Theorem any ideal of $\zF[\vc xn]$ can be represented by a finite set of generators meaning that the above formulation of the problem covers all possible ideals of $\zF[\vc xn]$ (assuming also that the elements of $\zF$ are finitely presented). 

The general IMP is a difficult problem and it is not even obvious whether or not it is decidable. The decidability was established in \cite{hermann1926frage,richman1974constructive,seidenberg1974constructions}. Then Mayr and Meyer~\cite{mayr1982complexity} were the first to study the complexity of the IMP. They proved an exponential space lower bound for the membership problem for ideals generated by polynomials with integer and rational coefficients. Mayer~\cite{Mayr89} went on establishing an exponential space upper bound for the $\IMP$ for ideals over $\zQ$, thus proving that such IMPs are \textbf{EXPSPACE}-complete. The source of hardness here is that a proof that $f_0\in\ang{\vc fk}$ may require polynomials of exponential degree. (There is also the issue of exponentially long coefficients that we will mention later.)

\paragraph{Combinatorial ideals.}
To illustrate the connection of the IMP to combinatorial problems we consider the following simple example. We claim that the graph in Fig.~\ref{fig:example1} is 2-colorable if and only if polynomials
\[
x(1-x), y(1-y), z(1-z), x+y-1, x+z-1,y+z-1
\]
have a common zero. Indeed, denoting the two possible colors 0 and 1, the first three polynomials guarantee that the only zeroes this collection of polynomials can have are such that $x,y,z\in\{0,1\}$. Then the last three polynomials make sure that in every common zero the values of $x,y,z$ are pairwise different, and so correspond to a proper coloring of the graph. Of course, the graph in the picture is not 2-colorable, and by the Weak Nullstellensatz this is so if and only if the constant polynomial 1 belongs to the ideal generated by the polynomials above. A proof of that can be easily found 
\[
1=(-4)\left[x(x-1)\right]+\left(2x-1\right)\left([x+y-1]-[y+z-1]+[x+z-1]\right).
\]

\begin{figure}[ht]
\centerline{\includegraphics[totalheight=3cm,keepaspectratio]{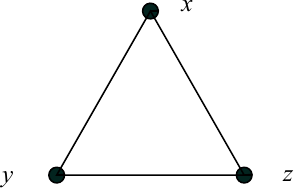}}
\caption{Graph 2-colorability}
\label{fig:example1}
\end{figure}

The example above exploits the connection between polynomial ideals and sets of zeroes of polynomials, also known as \emph{affine varieties}. While this connection does not necessarily hold in the general case, as Hilbert's Nullstellensatz requires certain additional properties of ideals, it works for so called \emph{combinatorial ideals} that arise from the majority of combinatorial problems similar to the example above. The varieties corresponding to combinatorial ideals are finite, and the ideals themselves are zero-dimensional and radical. These properties make the IMP significantly easier, in particular, it can be solved in single exponential time~\cite{DickensteinFGS91}. Also, Hilbert's Strong Nullstellensatz holds in this case, which means that if the IMP is restricted to radical ideals, it is equivalent to (the negation of) the question: given $f_0,\vc fk$ does there exist a zero of $\vc fk$ that is not a zero of $f_0$.

The special case of the $\IMP$ with $f_0=1$ has been studied for combinatorial problems in the context of lower bounds on Polynomial Calculus and Nullstellensatz proofs, see e.g. \cite{BeameIKPP94,BussP96,Grigoriev98}. A broader approach of using polynomials to represent finite-domain constraints has been explored in  \cite{CleggEI96,JeffersonJGD13}. Clegg et al., \cite{CleggEI96}, discuss a propositional proof system based on a bounded degree version of Buchberger's algorithm~\cite{BUCHBERGER2006475} for finding proofs of unsatisfiability. Jefferson et al., \cite{JeffersonJGD13} use a modified form of Buchberger's algorithm that can be used to achieve the same benefits as the local-consistency algorithms which are widely used in constraint processing.

\paragraph{Applications in other proof systems.}
The bit complexity of various (semi)algebraic proof systems is another link that connects approximation algorithms and the IMP. As is easily seen, if the degree of polynomials $\vc hk$ in a proof $f_0=h_1f_1+\dots+h_kf_k$ is bounded, their coefficients can be found by representing this identity through a system of linear equations. A similar approach is used in other (semi)algebraic proof systems such as Sum-of-Squares (\Sos), in which bounded degree proofs can be expressed through an SDP program. Thus, if in addition to low degree the system of linear equations or the SDP program has a solution that can be represented with a polynomial number of bits (thus having low \emph{bit complexity}), a proof can be efficiently found. 

However, O'Donnell~\cite{ODonnell17} proved that low degree of proofs does not necessarily imply its low bit complexity. He presented a collection of polynomials that admit bounded degree \Sos\ proofs of nonnegativity, all such proofs involve polynomials with coefficients of exponential length. This means that the standard methods of solving SDPs such as the Ellipsoid Method would take exponential time to complete. Raghavendra and Weitz \cite{RaghavendraW17} suggested some sufficient conditions on combinatorial ideals that guarantee a low bit complexity \Sos\ proof exists whenever a low degree one does. Two of these conditions hold for the majority of combinatorial problems, and the third one is so-called $k$-effectiveness of the IMP part of the proof. In \cite{Bulatov20:ideal} we showed that for problems where the IMP part is of the form $\IMP(\Gm)$ (to be introduced shortly) only one of the first two conditions remains somewhat nontrivial and $k$-effectiveness can be replaced with the requirement that a variant of $\IMP(\Gm)$ is solvable in polynomial time.

\paragraph{The IMP and the CSP.}
In this paper we consider IMPs that arise from a specific class of combinatorial problems, the Constraint Satisfaction Problem or the \CSP\ for short. In a \CSP\ we are given a set of variables, and a collection of constraints on the values that variables are allowed to be assigned simultaneously. The question in a CSP is whether there is an assignment of variables that satisfies all the constraints. The CSP provides a general framework for a wide variety of combinatorial problems, and it is therefore very natural to study the IMPs that arise from constraint satisfaction problems. 

Following \cite{JeffersonJGD13,vandongenPhd,Mastrolilli19,Mastrolilli21:complexity}, every CSP can be associated with a polynomial ideal. Let CSP instance $\cP$ be given on variables $\vc xn$ that can take values from a set $D=\{0,\dots,d-1\}$. The ideal $\I(\cP)$ of $\Field[\vc xn]$ ($\Field$ is supposed to contain $D$, and it therefore usually is considered to be a numerical field, it is $\zR$ or $\zC$ in this paper) whose corresponding variety equals the set of solutions of $\cP$ is constructed as follows. First, for every $x_i$ the ideal $\I(\cP)$ contains a \emph{domain} polynomial $f_D(x_i)$ whose zeroes are precisely the elements of $D$. Then for every constraint $R(x_{i_1},\dots, x_{i_k})$, where $R$ is a predicate on $D$, the ideal $\I(\cP)$ contains a polynomial $f_R(x_{i_1},\dots, x_{i_k})$ that interpolates $R$, that is, for $(x_{i_1},\dots, x_{i_k})\in D^k$ it holds that $f_R(x_{i_1},\dots, x_{i_k})=0$ if and only if $R(x_{i_1},\dots, x_{i_k})$ is true. It is important for what follows that $\I(\cP)$ is always radical, see \cite[Lemma 8.13]{becker93grobner}. This model generalizes a number of constructions used in the literature to apply Nullstellensatz or \Sos\  proof systems to combinatorial problems, see, e.g., \cite{BeameIKPP94,BussP96,Grigoriev98,RaghavendraW17}. 

One of the major research directions in the CSP research is the study of CSPs in which the allowed types of constraints are restricted. Such restrictions are usually represented by a \emph{constraint language} that is a set of relations or predicates on a fixed set. The CSP parametrized by a constraint language $\Gm$ is denoted $\CSP(\Gm)$. 

Mastrolilli in \cite{Mastrolilli19,Mastrolilli21:complexity} initiated a systematic study of IMPs that arise from problems of the form $\CSP(\Gm)$, denoted $\IMP(\Gm)$. More precisely, for a constraint language $\Gm$ over domain $D=\{0\zd d-1\}\sse\zF$, in an instance of $\IMP(\Gm)$ we are given an instance $\cP$ of $\CSP(\Gm)$ with set of variables $X=\{\vc xn\}$, and a polynomial $f_0\in\zF[\vc xn]$. The question is whether or not $f_0$ belongs to $\I(\cP)$. Observe, that using Hilbert's Strong Nullstellensatz the problem can also be reformulated as, whether there exists a solution to $\cP$ that is not a zero of $f_0$. Sometimes we need to restrict the degree of the input polynomial, the IMP in which the degree of $f_0$ is bounded by $d$ is denoted by $\IMP_d(\Gm)$.

\paragraph{The complexity of the IMP.}
The main research question considered in \cite{Mastrolilli19,Mastrolilli21:complexity} is to classify the problems $\IMP(\Gm)$ according to their complexity. As according to the observation before this paragraph no-instances $(f_0,\cP)$ of the $\IMP(\Gm)$ can be certified by exhibiting a solution of $\cP$ that is not a zero of $f_0$, $\IMP(\Gm)$ is in co-NP. Since the only solution algorithm for $\IMP(\Gm)$ available at that point was generating a \GB, \cite{Mastrolilli19,Mastrolilli21:complexity} stated the main research problem in a different way: For which constraint languages $\Gm$ a \GB\ of $\I(\cP)$ can be constructed in polynomial time for every instance $\cP$ of $\CSP(\Gm)$? We \cite{Bulatov20:ideal} showed that the two questions are actually equivalent in all known cases. We will return to this issue later. 

Mastrolilli \cite{Mastrolilli19,Mastrolilli21:complexity} along with Mastrolilli and Bharathi \cite{Bharathi-21-DD+LIN} succeeded in characterizing the complexity of $\IMP_d(\Gm)$ for constraint languages $\Gm$ over a 2-element domain. Their results are best presented using the language of polymorphisms. A \emph{polymorphism} of a constraint language $\Gm$ over a set $D$ is a multivariate operation on $D$ that can be viewed as a multi-dimensional symmetry of relations from $\Gm$. By $\Pol(\Gm)$ we denote the set of all polymorphisms of $\Gm$. As in the case of the CSP, polymorphisms of $\Gm$ is what determines the complexity of $\IMP(\Gm)$, see \cite{Bulatov20:ideal}. For the purpose of this paper we mention three types of polymorphisms. Two of them are given by equations they satisfy. A \emph{semilattice} operation is a binary operation $f$ that satisfies the equations of idempotency $f(x,x)=x$, commutativity $f(x,y)=f(y,x)$, and associativity $f(x,f(y,z))=f(f(x,y),z)$. A \emph{majority} operation is a ternary operation $g$ that satisfies the equations $g(x,x,y)=g(x,y,x)=g(y,x,x)=x$. An important example of a majority operation is the \emph{dual-discriminator} given by $g(x,y,z)=x$ unless $y=z$, in which case $g(x,y,z)=y$. Finally, for an Abelian group $\zA$ the operation $h(x,y,z)=x-y+z$ of $\zA$ is called the \emph{affine} operation of $\zA$. 

\begin{theorem}[\cite{Mastrolilli19,Mastrolilli21:complexity,Bharathi-Minority}]\label{the:mastrolilli-intro}
Let $\Gm$ be a constraint language over $D=\{0,1\}$ and such that the \emph{constant relations} $R_0,R_1\in\Gm$, where $R_i=\{(i)\}$. Then 
\begin{itemize}
    \item[(1)]
    If $\Gm$ is invariant under a semilattice, or a majority, or affine operation (of $\zZ_2$) then $\IMP_d(\Gm)$ is polynomial time for any $d$.
    \item[(3)]
    Otherwise $\IMP_0(\Gm)$ is \coNPc.
\end{itemize}
\end{theorem}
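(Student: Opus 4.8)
The plan is to reduce the classification to Schaefer's dichotomy for Boolean \CSP{}s and then handle each tractable case by exhibiting a small \GB\ of $\I(\cP)$ in polynomial time. First I would note that, since $R_0,R_1\in\Gm$, the language $\Gm$ is neither $0$-valid nor $1$-valid, so by Schaefer's theorem $\CSP(\Gm)$ is in $\mathbf P$ exactly when $\Gm$ is closed under one of $\wedge$ (Horn), $\vee$ (dual Horn), the ternary majority operation (bijunctive), or the affine operation $x+y+z\pmod 2$, and is $\mathbf{NP}$-complete otherwise. Since a semilattice operation on $\{0,1\}$ is $\wedge$ or $\vee$, the three items of the statement correspond exactly to these cases, and it suffices to prove the \IMP\ bound under each of the four tractable polymorphism assumptions and the hardness bound in the remaining case.

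For item (3) I would use that combinatorial ideals are radical and zero-dimensional and that the domain polynomials $x_i^2-x_i\in\I(\cP)$ confine its variety to $\{0,1\}^n$; by the Nullstellensatz, for a nonzero constant $c$ we have $c\in\I(\cP)$ iff $1\in\I(\cP)$ iff $\cP$ is unsatisfiable (the case $f_0=0$ being trivial). Hence, up to the trivial case, $\IMP_0(\Gm)$ is the complement of $\CSP(\Gm)$: a satisfying assignment of $\cP$ is a polynomial-size certificate of non-membership, so $\IMP_0(\Gm)\in\mathbf{coNP}$, while the reduction $\cP\mapsto(\cP,1)$ from co-$\CSP(\Gm)$ shows $\IMP_0(\Gm)$ is $\coNPc$ whenever $\CSP(\Gm)$ is $\mathbf{NP}$-complete, i.e.\ in the ``otherwise'' case.

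For the tractable items the common strategy is: compute in polynomial time a \GB\ $G$ of $\I(\cP)$ for a degree-compatible monomial order (so each $x_i^2-x_i$ has leading monomial $x_i^2$ and normal forms are multilinear), bound the degrees occurring in $G$, and then decide the \IMP\ by reducing $f_0$ modulo $G$. In the affine case $\I(\cP)$ is generated by the $x_i^2-x_i$ together with one polynomial $(-1)^b\prod_{j\in S}(1-2x_j)-1$ per equation $\sum_{j\in S}x_j\equiv b\pmod 2$; I would run Buchberger's algorithm restricted to polynomials of degree at most $D:=d+O(1)$, show that the $S$-polynomials arising are (after reducing modulo the $x_i^2-x_i$) essentially the Gaussian-elimination combinations of the input equations, so the procedure halts after polynomially many steps with polynomially bounded coefficients, and then argue that this truncated basis still decides membership correctly for any $f_0$ with $\deg f_0\le d$ because bounded-degree \IMP\ proofs suffice over affine ideals. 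The semilattice case (say $\wedge$; $\vee$ is its image under $0\leftrightarrow1$) is similar: the relations of $\Gm$ are exactly those defined by Horn CNFs, so after unit propagation the generators are monomials $\prod_{j\in S}x_j$ and ``implication'' polynomials $\prod_{j\in S}x_j-x_\ell\prod_{j\in S}x_j$, and a forward-chaining-style completion yields $G$ of polynomial size and degree $O(d)$. For the bijunctive item (2) the relations of $\Gm$ are $2$-decomposable, which pins $G$ to degree at most $2$ and makes it computable in polynomial time; a direct analysis of such a quadratic basis then gives a polynomial-time membership test with no bound on $\deg f_0$, which is why item (2) carries no degree restriction.

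The technical heart, and the step I expect to be the main obstacle, is the degree bound underlying item (1), most visibly in the affine case: one must show not merely that $\I(\cP)$ has some \GB, but that every degree-$d$ member of $\I(\cP)$ admits a membership proof of degree $O(d)$ (equivalently, that the truncated Buchberger run computes a correct degree-$d$ \GB) and that the run never manufactures a polynomial of degree above $D$ --- note that Gaussian elimination on bounded-arity equations can by itself produce solved-form generators of unbounded degree, so some such truncation argument is unavoidable. This ``bounded-degree proofs suffice'' phenomenon is precisely what fails for the general \IMP, which is why item (1) needs the degree restriction that item (2) does not, and establishing it while controlling the $S$-polynomial cascade is the crux of the whole argument; it is also the piece that the extension to Abelian groups elsewhere in the paper must re-prove in greater generality.
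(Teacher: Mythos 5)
The paper does not prove this theorem; it is quoted from \cite{Mastrolilli19,Bharathi-Minority}, so there is no in-paper proof to compare against. Your framing via Schaefer's dichotomy (using $R_0,R_1\in\Gm$ to rule out the $0$- and $1$-valid cases) is the right one, and your item (3) argument --- $\IMP_0$ for a nonzero constant is the complement of $\CSP(\Gm)$, hence \coNPc\ exactly when $\CSP(\Gm)$ is \textbf{NP}-complete --- is correct.

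The genuine gap is exactly where you flag it: the affine part of item (1). Encoding $\sum_{j\in S}x_j\equiv b$ as $(-1)^b\prod_{j\in S}(1-2x_j)-1$ and then ``running Buchberger truncated at degree $d+O(1)$'' is not a proof. After Gaussian elimination the solved-form equations involve $\Theta(n)$ variables, so their $\{0,1\}$-interpolants have degree $\Theta(n)\gg d$; truncating at degree $d+O(1)$ throws these generators away, and you establish neither that the surviving polynomials form a correct $d$-truncated \GB\ nor that the truncated run terminates in polynomial time with polynomial-size coefficients. Establishing such a claim is precisely the content of \cite{Bharathi-Minority}, which is a paper-length argument for $\zZ_2$ alone and does not proceed via truncated Buchberger. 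The present paper sidesteps the degree blowup by a structural change of encoding when it reproves and generalizes the affine case in Section 5: it moves the domain from $\zZ_{p^m}$ to the $p^m$-th roots of unity (\Cref{lem:equivalent}), so that $x_j=\alpha_1 y_1+\dots+\alpha_r y_r+c$ becomes the two-term polynomial $x_j-\omega^c\, y_1^{\alpha_1}\cdots y_r^{\alpha_r}$ with leading monomial $x_j$. The resulting generators have pairwise relatively prime leading monomials, so \Cref{prop:prime-LM} together with Buchberger's criterion (\Cref{th:crit}) gives a \GB\ immediately (\Cref{lem:GB-P'}), with no $S$-polynomial cascade to control. The moral your proposal misses is that the blowup is an artifact of the $\{0,1\}$ encoding, not of the problem; a change of variables dissolves it, whereas truncated Buchberger over $\{0,1\}$ is not known to work and is not the route taken by the cited papers or by this one.
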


Theorem~\ref{the:mastrolilli-intro} has been improved in \cite{Bulatov20:ideal} by showing that $\IMP_d(\Gm)$ remains polynomial time when $\Gm$ has an arbitrary semilattice polymorphism, not only on a 2-element set, an arbitrary dual-discriminator polymorphism, or an affine polymorphism of $\zZ_p$, $p$ prime.

\paragraph{Solving the IMP.}
The IMP is mostly solved using one of the two methods. The first one is the method of finding an IMP or \Sos\ proof of bounded degree using systems of linear equations or SDP programs. The other approach uses the standard polynomial division to verify whether a given polynomial has zero remainder when divided by generators of an ideal: if this is the case, the polynomial belongs to the ideal. Unfortunately the ring $\zF[\vc xn]$ is not Eucledian, which means that the result of division depends on the order in which division is performed. To avoid this complication, one may construct a \emph{Gr\"obner Basis} of the ideal that guarantees that the division algorithm always produces a definitive answer to the IMP. However, constructing a \GB\ is not always feasible, as even though the original generating set is small, the corresponding \GB\ may be huge. Even if there is an efficient way to construct a \GB, the algorithm may be quite involved and tends to work only in a limited number of cases. Moreover, in the cases where the input polynomial does not have a bounded degree, it is impossible to ensure that the division algorithm terminates in polynomial time. Even having a \GB\ with respect to graded lexicographic order (\textsf{grlex}) does not help in such case. The following example highlights this. We therefore focus on the cases where the input polynomial has a bounded degree.

\begin{example}
Let $\mc{I}$ be the ideal of $\mathbb{R}[x_1,...,x_n,y_1,...,y_n]$ generated by polynomials $x_1-y_1-1,..., x_n-y_n-1$. One can verify, this set of polynomials is a \GB\ with respect to \grlex order with $x_1 \succ x_2\succ \dots \succ x_n\succ y_1\succ \dots\succ y_n$. Now let the input polynomial be $\prod_{i=1}^n x_i $ (just one monomial, the product of all x's). If we apply the division algorithm, we obtain the expansion of the polynomial $\prod_{i=1}^n(y_i+1)$, which contains exponentially many monomials. Thus, while the division algorithm solves the problem correctly, it produces exponentially long intermediate results, and therefore is exponential time.
\end{example}

A more sophisticated approach was suggested in \cite{Bulatov20:ideal}. It involves reductions between problems of the form $\IMP(\Gm)$ before arriving to one for which a \GB\ can be constructed in a relatively simple way. Moreover, \cite{Bulatov20:ideal} also introduces a slightly different form of the IMP, called the $\chi$IMP, in which the input polynomial has indeterminates as some of its coefficients, and the problem is to find values for those indeterminates (if they exist) such that the resulting polynomial belongs to the given ideal. The results of \cite{Bulatov20:ideal} show that $\chi\IMP_d$ is solvable in polynomial time for every known case of polynomial time solvable $\IMP_d$, and that $\chi\IMP_d$ helps to find a proof of membership for $\IMP_d$.

\begin{theorem}[\cite{Bulatov20:ideal}]\label{the:ximp-intro}
\begin{itemize}
    \item[(1)]
    If $\Gm$ has a semilattice, the dual-discriminator, or the affine polymorphism of $\zZ_p$, $p$ prime, then $\chi\IMP_d(\Gm)$ is solvable in polynomial time for every $d$. 
    \item[(2)]
    If $\chi\IMP_d(\Gm)$ is polynomial time solvable then for every instance $\cP$ of $\CSP(\Gm)$ a degree $d$ \GB\ with respect to \grlex for $\I(\cP)$ can be found in polynomial time.
\end{itemize}
\end{theorem}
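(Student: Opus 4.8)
The plan is to handle both parts by reducing everything to linear algebra over $\zF$ together with the polynomial reduction/division procedures that are already available for plain $\IMP_d(\Gm)$ in the three cases.

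\textbf{Part (1).} The key elementary observation is that polynomial reduction is linear in the dividend. Fix a monomial order and a finite set $G\sse\zF[\vc xn]$, and let $\redu{p}$ denote the remainder of multivariate division of $p$ by $G$ (when $G$ is a Gr\"obner basis, $\redu p$ is the unique normal form of $p$). For \emph{fixed} $G$ the map $p\mapsto\redu p$ is $\zF$-linear, since each elementary cancellation step is. Now, in each of the three cases — $\Gm$ preserved by a semilattice operation, by a dual-discriminator, or by the affine operation of $\zZ_p$ — the polynomial-time algorithm for $\IMP_d(\Gm)$ of \cite{Bulatov20:ideal} effectively supplies, for an instance $\cP$ and bound $d$, a polynomial-size device (a bounded-degree fragment of a Gr\"obner basis of $\I(\cP)$, or a short chain of reductions ending in such an ideal) that computes correct normal forms of all polynomials of degree at most $d$ in polynomial time and polynomial bit-length. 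Given a $\chi\IMP_d(\Gm)$ instance $(\cP,f_0)$ with indeterminate coefficients $\chi_1,\dots,\chi_m$, we run this same reduction on $f_0$, but keeping the $\chi_i$ symbolic. Because the reduction is linear in $f_0$ and $f_0$ is linear in the $\chi_i$, every polynomial produced along the way — in particular the final remainder $\redu{f_0}$ — has coefficients that are linear forms over $\zF$ in $\chi_1,\dots,\chi_m$. An assignment to the $\chi_i$ puts $f_0$ into $\I(\cP)$ if and only if it makes $\redu{f_0}$ vanish identically, i.e. if and only if it satisfies the linear system over $\zF$ obtained by setting all those coefficient-forms to zero; Gaussian elimination then either returns a valid assignment or certifies that none exists.

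\textbf{Part (2).} Fix a degree-compatible monomial order (e.g. $\grlexns$ with a fixed variable order). We reconstruct the reduced Gr\"obner basis $G$ of $\I(\cP)$ one monomial at a time. Maintain a set $\cS$ of monomials already certified \emph{standard} (not in $\LT(\I(\cP))$) and the partial basis $G$ built so far; process the monomials in increasing order. When $m$ is reached, first check whether the leading monomial of some current element of $G$ divides $m$: if so, $m\in\LT(\I(\cP))$ and nothing new happens. Otherwise every proper divisor of $m$ is already in $\cS$, and $m$ is the leading monomial of a reduced Gr\"obner basis element exactly when $m$ is congruent modulo $\I(\cP)$ to an $\zF$-combination of monomials from $\cS$ — precisely the question whether $m-\sum_{m'\in\cS}\chi_{m'}m'\in\I(\cP)$ for indeterminate coefficients $\chi_{m'}$, a $\chi\IMP$ instance of degree $\deg m$. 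If the $\chi\IMP$ oracle answers ``yes'' with values $\chi_{m'}^{*}$, add $m-\sum_{m'\in\cS}\chi_{m'}^{*}m'$ to $G$; if ``no'', add $m$ to $\cS$. Since standard monomials form a basis of $\zF[\vc xn]/\I(\cP)$, this classification is correct and $G$ converges to the reduced Gr\"obner basis. After processing all monomials up to a given degree we test Buchberger's criterion on $G$ (reduce all $S$-polynomials by $G$); this eventually succeeds, namely once we have passed the largest degree $d^{*}$ occurring in the reduced Gr\"obner basis, which is finite by Dickson's lemma, and at that point we output $G$.

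\textbf{Where the difficulty lies.} In part (1) the only things to verify carefully are that the reduction used for $\IMP_d(\Gm)$ genuinely is a linear map of the input polynomial, and that executing it symbolically keeps both the number of monomials and the bit-length polynomial — but both are inherited verbatim from the plain-$\IMP_d$ analysis, since those quantities depend on $G$ and on $d$, not on the coefficients of the dividend. The substantive obstacle is in part (2): turning the $\chi\IMP$ oracle into a \emph{polynomial-time} Gr\"obner basis algorithm needs a polynomial (in fact constant) bound on the degree $d^{*}$ of the reduced Gr\"obner basis of $\I(\cP)$ — this controls both the number of monomials to enumerate and the number of indeterminates in each $\chi\IMP$ call — and such a bound must be extracted from the combinatorial structure of $\CSP(\Gm)$ rather than from the oracle. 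For the semilattice, dual-discriminator, and $\zZ_p$-affine cases this structural bound is exactly what the analysis behind part (1) provides.
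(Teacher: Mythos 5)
Theorem~\ref{the:ximp-intro} is stated in this paper only as a citation to \cite{Bulatov20:ideal}; there is no in-paper proof to compare against, so I judge your argument on its own terms. Part~(1) is the right mechanism: normal-form reduction with respect to a fixed Gr\"obner basis $G$ is an $\zF$-linear map of the dividend, so running the $\IMP_d(\Gm)$ algorithm symbolically on $f_0$ with indeterminate coefficients $\chi_1,\dots,\chi_m$ yields a remainder whose coefficients are linear forms in the $\chi_i$, and the existence question reduces to a linear system over $\zF$ solvable by Gaussian elimination. The polynomial bounds on monomial count and bit-length are inherited from the plain $\IMP_d$ analysis, since they depend only on $G$ and $d$ and not on the dividend's coefficients.

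Your part~(2), and the difficulty you correctly flag at the end, both hinge on reading ``Gr\"obner basis'' as the full reduced Gr\"obner basis. What the cited result actually delivers --- as Theorem~\ref{thm:GB+xIMP} in this paper makes explicit --- is a \emph{$d$-truncated} Gr\"obner basis, which is precisely what is needed to decide $\IMP_d$. Under a graded order your monomial-by-monomial reconstruction then terminates after the $O(n^d)$ monomials of degree at most $d$ have been classified; each $\chi\IMP$ call involves $O(n^d)$ indeterminates, giving total time $O(n^d)$, and no bound $d^*$ on the degree of the full reduced Gr\"obner basis is needed or claimed. With this correction your construction coincides with the one the citation encodes. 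One minor formal point: in the query $m-\sum_{m'\in\cS}\chi_{m'}m'\in\I(\cP)$ the coefficient of $m$ is pinned at $1$; that matches the informal description of $\chi\IMP$ as a polynomial with \emph{some} coefficients unknown, but if one works with the all-indeterminate vector form $\mathbf{c}M$ from the formal definition, the constraint $c_1=1$ must be imposed separately, else the trivial assignment $\mathbf{c}=\mathbf{0}$ is always accepted.
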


\subsection*{Our contribution}

\paragraph{Affine operations.}
In this paper we consider IMPs over languages invariant under affine operations of arbitrary finite Abelian groups. This type of constraint languages played a tremendously important role in the study of the CSP for three reasons. First, it captures a very natural class of problems. Problems $\CSP(\Gm)$ where $\Gm$ is invariant under an affine operation of a finite field $\zF$ can express systems of linear equations over $\zF$ and therefore often admit a classic solution algorithm such as Gaussian elimination or coset generation. In the case of a general Abelian group $\zA$ the connection with systems of linear equations is more complicated, although it is still true that every instance of $\CSP(\Gm)$ in this case can be thought of as a system of linear equations with coefficients from some ring --- the ring of endomorphisms of $\zA$ (See Proposition \ref{pro:CSP-to-LIN}).

Second, it has been observed that there are two main algorithmic approaches to solving the CSP. The first one is based on the local consistency of the problem. CSPs that can be solved solely by establishing some kind of local consistency are said to have \emph{bounded width} \cite{Bulatov08:dualities,Barto14:local}. The property to have bounded width is related to a rather surprising number of other seemingly unrelated properties. For example, Atserias and Ochremiak \cite{Atserias19:proof} demonstrated that for many standard proof systems the proof complexity of $\CSP(\Gm)$ is polynomial time if and only if $\CSP(\Gm)$ has bounded width. Also, Thapper and Zivny \cite{Thapper18:limits} discovered a connection between bounded width and the performance of various types of SDP relaxations for the Valued CSP. CSP algorithms of the second type are based on the \emph{few subpowers} property and achieve results similar to those of Gaussian elimination: they construct a concise representation of the set of all solutions to a CSP \cite{Bulatov06:simple,Idziak10:tractability}.  Problems $\CSP(\Gm)$ where $\Gm$ has an affine polymorphism were pivotal in the development of few subpowers algorithms, and, in a sense, constitute the main nontrivial case of them. Among the existing results on the IMP,  $\IMP(\Gm)$ for $\Gm$ invariant under a semilattice or majority polymorphism belong to the local consistency part of the algorithmic spectrum, while those for $\Gm$ invariant with respect to an affine operation are on the `few subpowers' part of it. It is therefore important to observe differences in approaches to the IMP in these two cases.

Third, the few subpowers algorithms \cite{Bulatov06:simple,Idziak10:tractability} when applied to systems of linear equations serve as an alternative to Gaussian elimination that also work in a more general situation and are less sensitive to the algebraic structure behind the problem. There is, therefore, a hope that studying IMPs with  an affine polymorphism may teach us about proof systems that use the IMP and do not quite work in the affine case.

The main result of this paper is 

\begin{theorem}\label{the:main-intro}
Let $\zA$ be a finite Abelian group and $\Gm$ a constraint language such that the affine operation $x-y+z$ of $\zA$ is a polymorphism of $\Gm$. Then $\IMP_d(\Gm)$ can be solved in polynomial time for any $d$. Moreover, given an instance $\cP$ of $\CSP(\Gm)$ a degree $d$ \GB\ of $\I(\cP)$ (over $\zC$) can be constructed in polynomial time.
\end{theorem}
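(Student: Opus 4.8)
The plan is to reduce an arbitrary instance of $\IMP_d(\Gm)$ over a finite Abelian group $\zA$ to a manageable canonical form, and then construct a small Gr\"obner basis directly. First I would invoke the structure theorem for finite Abelian groups to write $\zA\cong\zZ_{p_1^{k_1}}\times\dots\times\zZ_{p_m^{k_m}}$, which lets us think of each CSP variable as a tuple of coordinates. Since the affine operation $x-y+z$ is a polymorphism of $\Gm$, every relation in an instance $\cP$ of $\CSP(\Gm)$ is a coset of a subgroup of a power of $\zA$; hence the solution set $\Sol(\cP)$ is itself a coset $\bv_0 + L$ of a subgroup $L\le\zA^n$, and this coset can be computed in polynomial time (e.g. by a few-subpowers / coset-generation argument or Gaussian elimination over the endomorphism ring of $\zA$). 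This is the CSP side of the picture; the real work is translating a concise generating set for the coset $\bv_0+L$ into a Gr\"obner basis of the associated combinatorial ideal $\I(\cP)$ under a suitable monomial order (grlex, so that degree is respected and the ``degree $d$'' truncation makes sense).

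The key technical step is therefore: given the coset $\bv_0+L\subseteq\zA^n$, exhibit an explicit finite set $G$ of polynomials that (i) generates an ideal whose variety is exactly $\bv_0+L$, and (ii) is a Gr\"obner basis with respect to grlex. I would build $G$ in two layers. The first layer consists of the domain polynomials $f_{\zA}(x_i)=\prod_{a\in\zA}(x_i-a)$ for each coordinate, which already form a Gr\"obner basis for the ideal of $\zA^n$ and pin down the leading monomials $x_i^{|\zA|}$. The second layer encodes the linear constraints: working coordinate-wise over the prime-power cyclic factors, a subgroup of $\zZ_{p^k}$ is determined by congruence conditions, and one can interpolate a low-degree polynomial that vanishes exactly on the allowed values; for the coupling between coordinates coming from $L$ one writes down polynomials expressing each "dependent" coordinate as a (polynomial) function of the "free" ones, chosen so that their leading terms are fresh variables not appearing as leading terms of earlier elements. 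Because the ideal is zero-dimensional and radical (a combinatorial ideal), and the variety has been matched exactly, $\langle G\rangle=\I(\cP)$ follows from the Strong Nullstellensatz; the Gr\"obner-basis property I would verify via Buchberger's criterion, showing that every S-polynomial of a pair from $G$ reduces to zero — the pairs split into domain/domain pairs (handled by the classical argument), linear/linear pairs (reducing to linear algebra over the residue rings), and mixed pairs, where one uses that a dependent variable's defining polynomial has degree below $|\zA|$ in each variable so reduction by the domain polynomials never increases the relevant degrees.

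The main obstacle I anticipate is controlling the \emph{degrees} — and hence the sizes — of the polynomials in $G$ and of the intermediate S-polynomials, so that everything stays polynomial-size and the whole construction stays within degree $d$ for the truncated version. Over $\zZ_p$ this is easy because the residue ring is a field and linear constraints stay linear; over $\zZ_{p^k}$ (and products of such) zero divisors appear, a subgroup need not be "cut out" by genuinely linear polynomials, and the naive interpolating polynomials can have degree up to $|\zA|-1$ in several variables simultaneously, so a priori an S-polynomial could blow up. Handling this is where I would expect to need the bulk of the argument: one must choose the generators for $L$ in a echelon-like normal form (pivoting coordinate by coordinate through the prime-power factors, smallest prime power first), argue that reduction against the domain polynomials together with the already-chosen echelon generators keeps each S-polynomial of bounded degree, and possibly appeal to the earlier reduction machinery of \cite{Bulatov20:ideal} or the $\chi\IMP$ technology of Theorem~\ref{the:ximp-intro} to actually extract proofs / Gr\"obner bases efficiently rather than just argue their existence. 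Once the degree bound is secured, the polynomial-time Gr\"obner basis and the polynomial-time algorithm for $\IMP_d(\Gm)$ (by one round of multivariate division of $f_0$ by $G$, checking for zero remainder) both follow immediately.
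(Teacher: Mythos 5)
You correctly identify the high-level skeleton (structure theorem for $\zA$, coset structure of the solution set, echelon-type normal form over the prime-power cyclic factors) and you even put your finger on exactly the right obstacle: a linear constraint modulo $p^k$ involving many ``free'' variables can only be represented over the original domain by an interpolating polynomial whose degree and number of monomials blow up. But your proposed workarounds — a careful echelon normal form and ``appealing to the earlier reduction machinery'' — do not actually resolve that obstacle, and the paper's introduction explicitly warns that this is where naive approaches die: an entire prior paper was needed just to handle the $\zZ_2$ case by this route. The missing key idea is the \emph{change of domain to roots of unity}. After reducing to multi-sorted systems of linear equations over the $\zZ_{p_i^{m_i}}$ (your Proposition-level work corresponds to the paper's Propositions~\ref{pro:pp-interpret}, \ref{pro:csp-decomposition} and Lemma~\ref{lem:to-equations}), the paper maps $\zZ_{p^m}$ to the group $U_{p^m}$ of $p^m$-th roots of unity, turning modular addition into multiplication in $\Complex$. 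The constraint $x_j = \alpha_1 y_1 + \dots + \alpha_r y_r + c \pmod{p^m}$ becomes the \emph{binomial} $x_j - \omega^{c}\, y_1^{\alpha_1}\cdots y_r^{\alpha_r}$, which has exactly two terms and leading monomial $x_j$ under a lex order. Division by such a generator simply replaces one monomial by another single monomial, so nothing blows up; and since the leading monomials of the whole generating set ($x_{i,j}$ for the equations and $y_{i,j}^{p_i^{m_i}}$ for the domain polynomials) are pairwise coprime, Buchberger's criterion via Proposition~\ref{prop:prime-LM} is satisfied ``for free,'' giving a Gr\"obner basis with no S-polynomial computations at all (Lemmas~\ref{lem:GB-P'}, \ref{lem:GB-L'}). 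Your plan to verify Buchberger's criterion directly over $\zA$ by grinding through domain/domain, linear/linear, and mixed S-pairs would founder precisely at the degree and size bound you flag, because those mixed pairs really do involve the exponentially long interpolants.

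A smaller gap: for the ``degree $d$ Gr\"obner basis'' part of the statement, the paper's lex-order basis is of unbounded degree, so the decision algorithm (reduce to roots of unity, divide) and the construction of a $d$-truncated (grlex) Gr\"obner basis are handled separately. The latter goes through the $\chi$IMP / substitution-reduction machinery (Theorems~\ref{thm:GB+xIMP}, \ref{thm:GB-XIMP}, \ref{thm:sub+GB} and Lemma~\ref{lem:sub+pp-int}), which you gesture at but do not develop; you would need to show that both the pp-interpretation step and the roots-of-unity step are substitution reductions, and then invoke the existence of a Gr\"obner basis for the target to solve $\chi\IMP$ and hence build the truncated basis.
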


\paragraph{The tractability of affine IMPs.}
In \cite{Bharathi-21-DD+LIN,Bulatov20:ideal,Mastrolilli19,Mastrolilli21:complexity} IMPs invariant under an affine polymorphism are represented as systems of linear equations that are first transformed to a reduced row-echelon form using Gaussian elimination, and then further converted into a \GB\ of the corresponding ideal. If $\Gm$ is a constraint language invariant under the affine operation of a general Abelian group $\zA$, none of these three steps work: an instance generally cannot be represented as a system of linear equations, Gaussian elimination does not work on systems of linear equations over an arbitrary Abelian group, and a reduced row-echelon form cannot be converted into a \GB. We therefore need to use a completely different approach. We solve $\IMP(\Gm)$ in four steps. 

First, given an instance $(f_0,\cP)$ of $\IMP_d(\Gm)$ we use the Fundamental Theorem of Abelian groups and a generalized version of pp-interpretations for the IMP \cite{Bulatov20:ideal} to reduce $(f_0,\cP)$ to an instance $(f'_0,\cP')$ of \emph{multi-sorted}  $\IMP(\Dl)$ (see below), in which every variable takes values from a set of the form $\zZ_{p^\ell}$, $p$ prime, where the parameters $p,\ell$ may be different for different variables. Second, we show that $\cP'$ can be transformed in such a way that constraints only apply to variables having the same domain. This means that $\cP'$ can be thought of as a collection of disjoint instances $\vc{\cP'}k$ where $\cP'_i$ has the domain $\zZ_{p_i^{\ell_i}}$, $p_i$ prime. If we only had to solve the CSP, we could just solve these instances separately and report that a solution exists if one exists for each $\cP_i'$. This is however not enough to solve the IMP, as the polynomial $f'_0$ involves variables from each of the instances $\cP'_i$. Therefore, the third step is to generate some form of a system of linear equations and its reduced row-echelon form for each $\cP'_i$ separately. For each subinstance $\cP'_i$ we identify a set of variables, say, $\vc xr$, that can take arbitrary values from $\zZ_{p_i^{\ell_i}}$. For each of the remaining variables $x$ of $\cP'_i$ we construct an expression 
\begin{equation}\label{equ:p-expression}
x=\al_1x_1 \oplus \dots \oplus \al_r x_r \pmod {p_i^{\ell_i}}
\end{equation} such that for any choice of values of $\vc xr$ assigning the remaining variables according to these expressions results in a solution of $\cP'_i$. Unfortunately, expressions like this cannot be easily transformed into a \GB\ --- the entire paper \cite{Bharathi-Minority} is devoted to such transformation in the case of $\zZ_2$. The reason for that is that modular linear expressions can only be interpolated by polynomials of very high degree, and therefore if such an expression contains many variables, an interpolating polynomial is exponentially long. In order to convert constraints of $\cP'_i$ into a \GB\ we need one more step. This step is changing the domain of the problem. Instead of the domain $\zZ_{p_i^{\ell_i}}$ we map the problem to the domain $U_{p_i^{\ell_i}}$ of $p_i^{\ell_i}$-th roots of unity. This allows us to replace modular addition with complex multiplication, and so \eqref{equ:p-expression} is converted into a polynomial with just two monomials. The input polynomial $f'_0$ is transformed accordingly. Finally, we prove that the resulting polynomials form a \GB, and so the problem can now be solved in the usual way.

\paragraph{Multi-sorted IMPs.}
In order to prove Theorem~\ref{the:main-intro} we introduce two techniques new to the IMP research, although the first one has been extensively used for the CSP. The first technique is multi-sorted problems mentioned above, in which every variable has its own domain of values. This framework is standard for the CSP, and also works very well for the IMP, as long as the domain of each variable can be embedded into the field of real or complex numbers. However, many concepts used in proofs and solution algorithms such as pp-definitions, pp-interpretations, polymorphisms have to be significantly adjusted, and several existing results have to be reproved in this more general setting. However, in spite of this extra work, the multi-sorted IMP may become the standard framework in this line of research.

\paragraph{A general approach to $\chi$IMP.}
In \cite{Bulatov20:ideal} we introduced \xIMP, a variation of the IMP, in which given a CSP instance $\cP$ and a polynomial $f_0$ some of whose coefficients are unknown, the goal is to find values of the unknown coefficients such that the resulting polynomial $f'_0$ belongs to $\I(\cP)$; or report such values do not exist. This framework has been instrumental in finding a \GB\ and therefore finding a proof of membership when $f_0$ has bounded degree, as well as in establishing connections between the IMP and other proof systems such as \Sos. We again use $\chi$IMP to prove the second part of Theorem~\ref{the:main-intro}. The key idea is that whenever the bounded-degree search version of $\chi$IMP is solvable in polynomial time, one can construct a bounded-degree \GB. We therefore focus on instances of $\chi$IMP that arise from CSPs over finite Abelian groups and are invariant under the affine polymorphism. To this end, we refine the approach of constructing bounded-degree \GB\ via $\chi$IMP in two ways (see Section~\ref{sec:sub}). First, we adapt it for multi-sorted problems. Second, while in \cite{Bulatov20:ideal} reductions for $\chi$IMP are proved in an ad hoc manner, here we develop a unifying construction based on substitution reductions that covers all the useful cases so far.

\paragraph{Organization of the paper.} The paper is organized as follows. In \Cref{sec:prelim} we provide the preliminaries and the notation we use throughout the paper. In \Cref{sec:multi-sorted-reductions} we focus on multi-sorted constraint languages and their corresponding CSPs. We show that in the multi-sorted case, similar to the one-sorted case, there are reductions for \IMP s with respect to pp-definability and pp-interpretability. These reductions are instrumental for us, as in \Cref{sec:Abelian-csps} we show that any constraint language invariant under an affine operation of some finite Abelian group can be pp-interpreted by a multi-sorted constraint language over very simple groups, where the CSPs can be transformed into systems of linear equations over these simple groups. Having these transformations, we begin by proving the decision version $\IMP_d$ in \Cref{sec:solving-imp}—arguably the easier task. This is done using a substitution technique that formulates the problem over roots of unity. This result does not necessarily give us a way of constructing degree $d$ \GB\ for the original problem and can only help us with the decision version. We, however, in \Cref{sec:sub}, use our substitution technique to present a method for computing a degree $d$ \GB. This, in turn, allows us not only to solve $\IMP_d$ but also to construct a membership proof, when one exists.

\section{Preliminaries}
\label{sec:prelim}
\subsection{Ideals and varieties}

Let $\Field$ denote an arbitrary field. Let $\Field[x_1,\dots, x_n]$ be the ring of polynomials over the field $\Field$ and indeterminates $x_1, \dots,x_n$. Sometimes it will be convenient not to assume any specific ordering or names of the indeterminates. In such cases we use $\Field[X]$ instead, where $X$ is a set of indeterminates, and treat points in $\Field^X$ as mappings $\vf:X\to\Field$. The value of a polynomial $f\in\Field[X]$ is then written as $f(\vf)$. Let $\Field[x_1,\dots, x_n]_d$ denote the subset of polynomials of degree at most $d$. An \emph{ideal} of $\Field[x_1,\dots, x_n]$ is a set of polynomials from $\Field[x_1,\dots, x_n]$ closed under addition and multiplication by a polynomial from $\Field[x_1,\dots, x_n]$. 

We will need ideals represented by a generating set. The ideal (of $\Field[x_1,\dots, x_n]$) generated by a finite set of polynomials $\{f_1, \dots,f_m\}$ in $\Field[x_1,\dots, x_n]$ is defined as
    \[
        \mb{I}(f_1, \dots,f_m)\overset{\mathrm{def}}{=}\Big\{ \sum\limits_{i=1}^m t_if_i \mid t_i \in \Field[x_1,\dots, x_n]\Big\}.
    \]
Another common way to denote $\mb{I}(f_1,\ldots,f_m)$ is by $\langle f_1,\ldots,f_m \rangle$ and we will use both notations interchangeably. For a set of points $S\sse\Field^n$ its \emph{vanishing ideal} is the set of polynomials defined as
    \[
        \mb{I}(S) \overset{\mathrm{def}}{=} \{f \in \Field[x_1,\dots, x_n] : f(a_1,\dots,a_n) = 0 \ \ \forall (a_1,\dots,a_n) \in S\}.
    \]
The \emph{affine variety} defined by a set of polynomials $\{f_1,\ldots, f_m\}$ is the set of common zeros of $\{f_1,\ldots, f_m\}$, 
    \[
        \Variety{ f_1,\ldots,f_m}\overset{\mathrm{def}}{=} \{(a_1,\ldots,a_n)\in \Field^n \mid  f_i(a_1,\ldots,a_n)=0 \quad 1\leq i\leq m\}.
    \]
Similarly, for an ideal $\I\subseteq \Field[x_1,\ldots,x_n]$ its affine variety is the set of common zeros of all the polynomials in $\I$. This is denoted by $\Variety{\I}$ and is formally defined as 
    \[
        \Variety{\I}=\{(a_1,\ldots,a_n)\in \Field^n\mid f(a_1,\ldots,a_n)=0 \quad \forall f\in \I\}.
    \]
In the case where $\Field$ is algebraically closed, for instance it is the field $\Complex$ of complex numbers, one can guarantee that the only ideal which represents the empty variety is the entire polynomial ring itself i.e., if $\Variety{\I} = \emptyset$ then $\I = \Field[\vc x n]$. This is known as the \emph{Weak Nullstellensatz}. 




As we will discuss it, the ideals arising from \CSP s are radical ideals. An ideal $\I$ is \emph{radical} if $f^m \in\I$ for some integer $m\geq 1$ implies that $f\in \I$. For an arbitrary ideal $\I$ the smallest radical ideal containing $\I$ is denoted $\sqrt\I$. In other words $\sqrt\I=\{f\in \Field[x_1,\dots, x_n]\mid f^m\in\I\text{ for some $m$}\}$. Radical ideals have a strong connection to affine varieties, as they consist of all polynomials which vanish on some variety $V$. This one-to-one correspondence between affine varieties and radical ideals is known as the \emph{Strong Nullstellensatz}. That is, for an algebraically closed field $\Field$ and an ideal $\I\subseteq \Field[\vc x n]$ we have $\mathbf{I}(\Variety{\I})=\sqrt{\I}$.






Finally, the following theorem is a useful tool for finding ideals corresponding to intersections of varieties. We will use it in the following subsections where we construct ideals corresponding to \CSP~instances, and it will also be used in our proofs. Here, the sum of ideals $\I$ and $\J$, denoted $\I+\J$, is the set $\I+\J=\{f+g\mid f\in\I \text{ and } g\in\J\}$.

\begin{theorem}[\cite{Cox}, Theorem 4, p.190]\label{th:ideal_intersection}
  If $\I$ and $\J$ are ideals in $\Field[x_1,\ldots,$ $x_n]$, then $\Variety{\I+\J}=\Variety{\I}\cap\Variety{\J}$ and $\Variety{\I\cap \J}= \Variety{\I}\cup \Variety{\J}$.
\end{theorem}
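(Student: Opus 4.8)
The statement bundles two set identities about affine varieties, and the plan is to prove each one by a pair of inclusions, using nothing beyond the definitions of ideal sum, ideal intersection, and affine variety, together with the fact that $\Field$ is a field (hence has no zero divisors). Throughout I will repeatedly use the elementary monotonicity principle that $\I\sse\J$ implies $\Variety{\J}\sse\Variety{\I}$, which is immediate from the definition of $\Variety{\cdot}$.

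For the first identity $\Variety{\I+\J}=\Variety{\I}\cap\Variety{\J}$, I would first note that $\I\sse\I+\J$ and $\J\sse\I+\J$, since $0$ lies in every ideal and so $f=f+0$ and $g=0+g$ exhibit $f\in\I$ and $g\in\J$ as elements of $\I+\J$. By monotonicity, $\Variety{\I+\J}\sse\Variety{\I}$ and $\Variety{\I+\J}\sse\Variety{\J}$, hence $\Variety{\I+\J}\sse\Variety{\I}\cap\Variety{\J}$. Conversely, if $a$ is a common zero of every polynomial in $\I$ and of every polynomial in $\J$, then any element of $\I+\J$ has the form $f+g$ with $f\in\I$, $g\in\J$, and $(f+g)(a)=f(a)+g(a)=0$; thus $a\in\Variety{\I+\J}$. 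This yields the reverse inclusion and therefore equality.

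For the second identity $\Variety{\I\cap\J}=\Variety{\I}\cup\Variety{\J}$, the inclusion $\supseteq$ is again monotonicity: $\I\cap\J\sse\I$ and $\I\cap\J\sse\J$ give $\Variety{\I}\sse\Variety{\I\cap\J}$ and $\Variety{\J}\sse\Variety{\I\cap\J}$, so their union is contained in $\Variety{\I\cap\J}$. For the inclusion $\sse$, take $a\in\Variety{\I\cap\J}$ and suppose $a\notin\Variety{\I}$; choose $f\in\I$ with $f(a)\neq 0$. For an arbitrary $g\in\J$, the product $fg$ lies in $\I$ (because $\I$ is an ideal, closed under multiplication by the polynomial $g$) and in $\J$ (because $\J$ is an ideal), hence $fg\in\I\cap\J$, so $f(a)g(a)=(fg)(a)=0$. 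Since $f(a)$ is a nonzero element of the field $\Field$, we conclude $g(a)=0$; as $g\in\J$ was arbitrary, $a\in\Variety{\J}$, and therefore $a\in\Variety{\I}\cup\Variety{\J}$.

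The whole argument is routine, and I anticipate no real obstacle; the only place a hypothesis stronger than ``$\I,\J$ are ideals'' enters is the final step, where the absence of zero divisors in $\Field$ is precisely what converts ``$fg$ vanishes at $a$'' into ``$f$ or $g$ vanishes at $a$''. I would simply remark that this is the standard proof, as recorded in the cited reference.
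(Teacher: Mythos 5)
Your proof is correct and is precisely the standard argument for this classical fact; the paper itself offers no proof, simply citing Cox et al.\ (Theorem 4, p.~190), and your two-sided inclusion argument — with the integral-domain property of $\Field$ invoked exactly where needed, in the inclusion $\Variety{\I\cap\J}\sse\Variety{\I}\cup\Variety{\J}$ — matches the proof recorded there.
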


\subsection{The Constraint Satisfaction Problem}

We use $[n]$ to denote $\{1\zd n\}$. Let $D$ be a finite set, it will often be referred to as a domain. An \emph{$n$-ary relation} on $D$ is a set of $n$-tuples of elements from $D$; we use $\mb{R}_D$ to denote the set of all finitary relations on $D$. 
A \emph{constraint language} is a subset of $\mb{R}_D$, and may be finite or infinite. Given any constraint language, we can define the corresponding class of
 constraint satisfaction problems in the following way.
 
 \begin{definition}
    Let $\Gm\subseteq \mb{R}_D$ be a constraint language over domain $D$. The  constraint satisfaction problem over $\Gm$, denoted $\CSP(\Gm)$, is defined to be the decision problem with
    \begin{description}
        \item[Instance:] An instance $\cP = (X, D, \mc{C})$ where $X$ is a finite set of variables, and $\mc{C}$ is a set of constraints where each constraint $C\in \mc{C}$ is pair $\langle \bs, R\rangle$, such that
        \begin{itemize}
            \item[-] $\bs=(x_{i_1},\ldots,x_{i_k})$ is a list of variables of length $k$ (not necessarily distinct), called the \emph{constraint scope};
            \item[-] $R$ is a $k$-ary relation on $D$, belonging to $\Gamma$, called the \emph{constraint relation}.
        \end{itemize}
        \item[Question:] Does there exist a \emph{solution}, i.e., a mapping $\vf: X \to D$ such that for each constraint $\langle \bs, R\rangle\in\mc{C}$, with $\bs=(x_{i_1},\ldots,x_{i_k})$, the tuple $(\vf(x_{i_1}),\dots,\vf(x_{i_k}))$ belongs to $R$ ?
    \end{description}
 \end{definition}
We point out that another common way to denote a constraint $\langle \bs, R\rangle$ is by $R(\bs)$, that is, to treat $R$ as a predicate, and we will use both notations interchangeably. Moreover,  we will use $\Sol(\cP)$ to denote the (possibly empty) set of solutions of $\cP$.




\subsection{The ideal-CSP correspondence}\label{sect:idealCSP}

Here, we explain how to construct an ideal corresponding to a given instance of \CSP. Constraints are in essence varieties, see e.g.~\cite{JeffersonJGD13,vandongenPhd}. Following~\cite{Mastrolilli19,vandongenPhd}, we shall translate CSPs to polynomial ideals and back. Let $\cP=(X,D,C)$ be an instance of $\CSP(\Gamma)$ where $\Gm$ is a fixed constraint language with relations of fixed arities. Without loss of generality, we assume that $D\subset\Field$. In fact, we will mainly assume $\Field=\Real$ or $\Field=\Complex$, and $D=\{0,1,\dots,|D|-1\}$. Let $\Sol(\cP)$ be the (possibly empty) set of all solutions of $\cP$. We wish to map $\Sol(\cP)$ to an ideal $\I(\cP)\subseteq \Field[X]$ such that $\Sol(\cP)=\Variety{\I(\cP)}$.

First, for every $x_i$ the ideal $\I(\cP)$ contains a \emph{domain} polynomial $f_D(x_i)$ whose zeroes are precisely the elements of $D$. Then for every constraint $R(x_{i_1},\dots, x_{i_k})$, where $R$ is a predicate on $D$, the ideal $\I(\cP)$ contains a polynomial $f_R(x_{i_1},\dots, x_{i_k})$ that interpolates $R$, that is, for $(x_{i_1},\dots, x_{i_k})\in D^k$ it holds that $f_R(x_{i_1},\dots, x_{i_k})=0$ if and only if $R(x_{i_1},\dots, x_{i_k})$ is true. Note that each $f_R$ has bounded degree, this is because $D$ and $k$ are fixed. 

Including a domain polynomial for each variable has the advantage that it ensures that the ideals generated by our systems of polynomials are radical (see Lemma 8.19 of \cite{becker93grobner}). Hence, by Weak and Strong Nullstellensatz, we have the following properties.

\begin{theorem}\label{th:nullstz}
Let $\cP$ be an instance of the $\CSP(\Gamma)$ and $\I(\cP)$ constructed as above. Then
  \begin{align}
    &\Variety{\I(\cP)}=\emptyset \Leftrightarrow 1\in \I(\cP) \Leftrightarrow \I(\cP)=\Field[X],  \tag{Weak Nullstellensatz}\label{eq:weak_nstz}\\
    &\I(\Variety{\I(\cP)})=\sqrt{\I(\cP)},\tag{Strong Nullstellensatz}\label{eq:strong_nstz}\\
    &\sqrt{\I(\cP)}=\I(\cP).\tag{Radical Ideal}\label{eq:ICradical}
  \end{align}
\end{theorem}

\subsection{The Ideal Membership Problem}

In the general Ideal Membership Problem we are given an ideal $\I\sse\Field[\vc xn]$, usually by some finite generating set, and a polynomial $f_0$. The question then is to decide whether or not $f_0\in\I$. If $\I$ is given through a CSP instance, we can be more specific.

\begin{definition}\label{def:imp}
The {\sc Ideal Membership Problem} associated with a constraint language $\Gamma$  over a set $D$ is the problem $\IMP(\Gamma)$ in
which the input is a pair $(f_0,\cP)$ where $\cP = (X, D, C)$ is a
$\CSP(\Gm)$ instance and $f_0$ is a polynomial from $\Field[X]$. The goal is to decide
whether $f_0$ lies in the ideal $\I(\cP)$. We use
$\IMP_d(\Gamma)$ to denote $\IMP(\Gamma)$ when the input polynomial
$f_0$ has degree at most $d$.
\end{definition}

As $\I(\cP)$ is radical, by the Strong Nullstellensatz an equivalent way to solve the membership problem $f_0 \in \I(\cP)$ is to answer the following question:

\begin{quote}
Does there exist an $\mb{a}\in \Variety{\I(\cP)}$ such that $f_0(\mb{a})\neq 0$?
\end{quote}
In the \textbf{yes} case, such an $\mb{a}$ exists if and only if $f_0\not\in \mb{I}(\Variety{\I(\cP)})$ and therefore $f_0$ is \textbf{not} in the ideal $\I(\cP)$. This observation also implies that for any constraint language $\Gm$ the problem $\IMP_0(\Gm)$ is equivalent to $\mathsf{not\text{-}}\CSP(\Gm)$ \cite{Bulatov20:ideal}. Moreover, note that $\IMP(\Gm)$ belongs to \textbf{coNP} for any $\Gm$ over a finite domain. We say that $\IMP(\Gm)$ is \emph{tractable} if it can be solved in polynomial time. We say that $\IMP(\Gm)$ is \emph{$d$-tractable} if $\IMP_d(\Gm)$ can be solved in polynomial time for every $d$.\footnote{Almost all algorithmic results on $\IMP(\Gm)$ prove $d$-tractability for arbitrary $d$.  We are not aware of a constraint language fo which $\IMP(\Gm)$ is $d$-tractable but provably not tractable. However, Mastrolilli \cite[Remark 5.1]{Mastrolilli21:complexity} provided an example of a $\Gm$, for which $\IMP(\Gm)$ is $d$-tractable, which suggests that the degree of the reduced GB can be arbitrarily large.} 

Throughout this paper by the search $\IMP$ we understand the following problem. 
\begin{quote}
\label{search-IMP}
{\bf Search Version of \IMP.} 
Let $(f_0,\cP)$ be an instance of $\IMP(\Gm)$ such that $f_0\not\in\I(\cP)$, the problem is to find an assignment $\vf\in \mb{V}(\I(\cP))$ such that $f_0(\vf)\neq 0$.
\end{quote}

Moreover, throughout this paper whenever we say finding a proof of membership we mean the following problem. 
\begin{quote}
\label{search-IMP}
{\bf Finding a Membership Proof.} 
Let $(f_0,\cP)$ be an instance of $\IMP(\Gm)$ such that $f_0\in\I(\cP)$, the problem is to find polynomials $h_1,\dots,h_k\in\Field[x_1,\dots,x_n]$ such that $f_0=\sum_{i=1}^k h_ip_i$ where $\I(\cP)=\langle p_1,\dots,p_k\rangle$.
\end{quote}


\subsection{The Ideal Membership Problem and \GBs}
 We use the standard notation from algebraic geometry and follow notation in \cite{Cox}. A monomial ordering $\succ$ on $\Field[\vc x n]$ is a relation $\succ$ on $\zZ_{\geq 0}^n$, or equivalently, a relation on the set of monomials $\bx^{\alpha}$, $\alpha \in \zZ_{\geq 0}^n$ (see~\cite{Cox}, Definition 1, p.55).  Each monomial $\bx^\alpha=x_1^{\alpha_1}\cdots x_n^{\alpha_n}$ corresponds to an $n$-tuple of exponents $\alpha =(\alpha_1,\ldots,\alpha_n)\in \mathbb{Z}^n_{\geq0}$. In this paper we use two standard monomial orderings, namely \emph{lexicographic order (\lex)} and \emph{graded lexicographic order (\grlex)}. Let $\alpha =(\alpha_1,\ldots,\alpha_n),\beta=(\beta_1,\ldots,\beta_n)\in \mathbb{Z}^n_{\geq0}$ and $|\alpha| = \sum_{i=1}^n\alpha_i$, $|\beta| = ~\sum_{i=1}^n\beta_i$. We say $\alpha\succ_\lex \beta$ if the leftmost nonzero entry of the vector difference $\alpha -\beta \in \mathbb{Z}^n$ is positive. We will write $\bx^\alpha\succ_\lex \bx^\beta$ if $\alpha\succ_\lex\beta$. We say $\alpha\succ_\grlex \beta$ if $|\alpha| >|\beta|$, or $|\alpha| =|\beta|$ and $\alpha\succ_\lex \beta$.

\begin{definition}
Let $f= \sum_{\alpha} a_{\alpha}\bx^\alpha$ be a nonzero polynomial in $\Field[x_1,\ldots,x_n]$ and let $\succ$ be a monomial order.
  \begin{enumerate}
    \item The \emph{multidegree} of $f$ is $\multideg(f)\overset{\mathrm{def}}{=} \max(\alpha\in \mathbb{Z}^n_{\geq0}:a_\alpha\not = 0)$.
    \item The \emph{degree} of $f$ is deg$(f)=|\multideg(f)|$ where $|\alpha| = \sum_{i=1}^n\alpha_i$. In this paper, degree is always according to \textsf{grlex} order. 
    \item The \emph{leading coefficient} of $f$ is $\LC(f)\overset{\mathrm{def}}{=} a_{\multideg(f)}\in \Field$.
    \item The \emph{leading monomial} of $f$ is $\LM(f)\overset{\mathrm{def}}{=} \bx^{\multideg(f)}$ (with coefficient 1).
    \item The \emph{leading term} of $f$ is $\LT(f)\overset{\mathrm{def}}{=} \LC(f)\cdot \LM(f)$.
  \end{enumerate}
\end{definition}


The Hilbert Basis Theorem states that every ideal has a finite generating set (see, e.g., Theorem~4 on page 77 \cite{Cox}). A particular type of finite generating set is the \GBs\ which are quite well-behaved in terms of division. 


\begin{definition}\label{def:GB}
   Fix a monomial order on the polynomial ring $\Field[\vc x n ]$. A finite subset $G = \{\vc g t\}$ of an ideal $\I \subseteq \Field[\vc x n ]$ different from $\{0\}$ is said to be a \GB\ (or \emph{standard basis}) if 
   \[
    \Ideal{\LT(g_1),\dots,\LT(g_t)} = \Ideal{\LT(I)}\]
    where $\Ideal{\LT(I)}$ denotes the ideal generated by the leading terms of elements of $\I$. 
\end{definition}
In the above definition if we restrict ourselves to the polynomials of degree at most $d$ then we obtain the so called \emph{$d$-truncated \GB }. The $d$-truncated \GB\ $G_d$ of $G$ is defined as $G_d =G\cap \Field[\vc x n]_d$. The main appeal of \GBs\ is that the remainder of division by a \GB\ is uniquely defined, no matter in which order we do the division.
\begin{proposition}[\cite{Cox}, Proposition~1, p.83 ]
\label{prop:division-GB}
    Let $\I \subseteq \Field[\vc x n ]$ be an ideal and let $G = \{\vc g t\}$ be a \GB\ for $\I$. Then given $f \in \Field[\vc x n ]$, there is a unique $r \in \Field[\vc x n ]$ with the following two properties:
    \begin{enumerate}
        \item No term of $r$ is divisible by any of $\LT(g_1),\dots,\LT(g_t)$,
        \item There is $g\in\I$ such that $f=g+r$.
    \end{enumerate}
\end{proposition}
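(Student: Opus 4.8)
The statement has two halves --- existence and uniqueness of $r$ --- and I would treat them separately. For existence I would simply run the multivariate division algorithm in $\Field[\vc x n]$ (\cite{Cox}, the Division Algorithm of Chapter~2, §3): dividing $f$ by the ordered tuple $(g_1,\dots,g_t)$ yields polynomials $q_1,\dots,q_t,r \in \Field[\vc x n]$ with $f = q_1g_1 + \dots + q_tg_t + r$ and with no term of $r$ divisible by any of $\LT(g_1),\dots,\LT(g_t)$, which is exactly the first property. Taking $g := q_1g_1 + \dots + q_tg_t$ gives an element of $\I$ (each $g_i$ lies in $G \subseteq \I$ and $\I$ is an ideal) with $f = g + r$, which is the second property. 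Note that this half uses nothing about $G$ being a \GB --- only that $G \subseteq \I$; the \GB\ hypothesis is what turns $r$ into a canonical object.

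The \GB\ property enters exactly in the uniqueness argument, which I regard as the crux. Suppose $f = g + r = g' + r'$ with $g, g' \in \I$ and with neither $r$ nor $r'$ containing a term divisible by any $\LT(g_i)$. Then $h := r - r' = g' - g \in \I$. If $h \neq 0$, then $\LT(h) \in \Ideal{\LT(\I)}$, and because $G$ is a \GB, $\Ideal{\LT(\I)} = \Ideal{\LT(g_1),\dots,\LT(g_t)}$, so $\LM(h)$ lies in the monomial ideal generated by $\LM(g_1),\dots,\LM(g_t)$. Invoking the standard fact that a monomial lying in such an ideal must be divisible by one of the generators (\cite{Cox}, Lemma~2 of Chapter~2, §4), we get $\LT(g_i) \mid \LM(h)$ for some $i$. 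But every monomial occurring with nonzero coefficient in $h = r - r'$ occurs in $r$ or in $r'$ --- subtraction can only delete monomials, not create new ones --- so in particular $\LM(h)$ is a monomial of $r$ or of $r'$, contradicting the hypothesis that no such monomial is divisible by any $\LT(g_i)$. Hence $h = 0$, i.e.\ $r = r'$.

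The only point that requires a moment's care --- and the one I would flag as the potential pitfall --- is the assertion that $\LM(r-r')$ is a monomial of $r$ or of $r'$; this is immediate once one observes that the monomial support of $r - r'$ is contained in the union of the monomial supports of $r$ and of $r'$, but it is exactly the place where one must not be cavalier. Everything else is a direct appeal to the division algorithm and to the combinatorics of monomial ideals, both entirely standard. I would close by remarking that, since $r$ is characterized intrinsically by the two listed properties with no reference to the order in which $g_1,\dots,g_t$ are used, this argument also justifies the statement made just before the proposition: the remainder of division by a \GB\ does not depend on how $G$ is listed.
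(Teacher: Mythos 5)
Your proof is correct and is essentially the standard argument from Cox--Little--O'Shea that the paper cites; the paper itself gives no proof of this proposition, only the reference. Both halves --- existence via the division algorithm and uniqueness via the fact that a nonzero element of $\I$ has leading term divisible by some $\LT(g_i)$ while no term of $r-r'$ can be --- are exactly as in the cited source, and your flagged point about the support of $r-r'$ being contained in the union of the supports of $r$ and $r'$ is the right detail to be careful about.
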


In the above definition, $r$ is called the remainder on division of $f$ by $G$ and it is also known as the \emph{normal form of} $f$ \emph{by} $G$, denoted by $f|_G$. As a corollary of \Cref{prop:division-GB}, we get the following criterion for checking the ideal membership.
\begin{corollary}[\cite{Cox}, Corollary~2, p.84]
    Let $G = \{\vc g t\}$ be a \GB\ for an ideal $\I \subseteq \Field[\vc x n]$ and let $f \in \Field[\vc x n]$. Then $f \in \I$ if and only if the remainder on division of $f$ by $G$ is zero.
\end{corollary}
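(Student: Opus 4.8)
The plan is to read this off directly from \Cref{prop:division-GB}, which already contains all the content. Write $r$ for the normal form $\reduce{f}{G}$ of $f$ by $G$. By \Cref{prop:division-GB} there is some $g \in \I$ with $f = g + r$ and no term of $r$ divisible by any of $\LT(g_1),\dots,\LT(g_t)$; moreover $r$ is the \emph{unique} polynomial satisfying these two properties.

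For the implication ``$r = 0 \Rightarrow f \in \I$'', simply note that when $r = 0$ the decomposition $f = g + r$ reads $f = g \in \I$. For the converse, assume $f \in \I$ and aim to show $r = 0$. The trick is to exhibit $0$ itself as a polynomial meeting the two characterizing conditions of \Cref{prop:division-GB}: writing $f = f + 0$, the summand $f$ lies in $\I$ by hypothesis (so it plays the role of the ``$g \in \I$'' of property~2), while $0$ vacuously satisfies property~1 since it has no terms at all. Hence $0$ has both properties, and the uniqueness clause of \Cref{prop:division-GB} forces $r = 0$, i.e.\ the remainder of $f$ on division by $G$ is zero.

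There is no genuine obstacle here; the only subtlety is to invoke uniqueness correctly — \Cref{prop:division-GB} asserts there is a \emph{unique} $r$ with properties~(1) and~(2), so once $0$ is seen to satisfy (1) and (2) we must have $r = 0$. It is precisely at this step that the Gr\"obner-basis hypothesis on $G$ is essential: for an arbitrary generating set the multivariate division remainder depends on the ordering of the divisors, and a polynomial in the ideal can fail to reduce to $0$.
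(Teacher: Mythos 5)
Your proof is correct and is exactly the standard argument (the one in Cox--Little--O'Shea, which the paper cites without reproving): the forward direction reads $f=g+0\in\I$ off the decomposition, and the converse exhibits $f=f+0$ and invokes the uniqueness clause of \Cref{prop:division-GB} to conclude the remainder is $0$. Nothing is missing, and your closing remark about where the Gr\"obner-basis hypothesis enters is apt.
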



Buchberger \cite{BUCHBERGER2006475} introduced a criterion known as the Buchberger's criterion to verify if a set of polynomials is a \GB. In order to formally express this criterion, we need to define the notion of $S$-polynomials. Let $f,g\in \Field[x_1,\ldots,x_n]$ be nonzero polynomials. If $\multideg(f)=\alpha$ and $\multideg(g)=\beta$, then let $\gamma=(\gamma_1,\ldots,\gamma_n)$, where $\gamma_i = \max(\alpha_i,\beta_i)$ for each $i$. We call $x^\gamma$ the \emph{least common multiple} of $\LM(f)$ and $\LM(g)$, written $x^\gamma = \LCM(\LM(f),\LM(g))$. The \emph{$S$-polynomial} of $f$ and $g$ is the combination 
    \[
        S(f,g) = \frac{x^\gamma}{\LT(f)}\cdot f - \frac{x^\gamma}{\LT(g)}\cdot g.
    \]


 
 \begin{theorem}[Buchberger's Criterion \cite{Cox}, Theorem 3, p.105]
 \label{th:crit}
     Let $\I$ be a polynomial ideal. Then a basis $G = \{\vc g t\}$ of $\I$ is a \GB\ of $\I$ if and only if for all pairs $i\neq j$, the remainder on division of $S(g_i, g_j)$ by $G$ (listed in some order) is zero.
 \end{theorem}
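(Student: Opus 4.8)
The plan is to prove the two directions separately, with essentially all the work going into the converse. The forward direction is immediate: if $G=\{\vc g t\}$ is a \GB\ of $\I$, then each $S$-polynomial $S(g_i,g_j)$ is by construction a polynomial combination of $g_i$ and $g_j$, hence lies in $\I$, so by \Cref{prop:division-GB} its remainder on division by $G$ is $0$. Nothing more is needed here.

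For the converse, assume every $S(g_i,g_j)$ reduces to $0$ modulo $G$. By the definition of a \GB\ I must show $\Ideal{\LT(g_1),\dots,\LT(g_t)}=\Ideal{\LT(\I)}$, and only the inclusion ``$\supseteq$'' is nontrivial; since the right-hand ideal is generated by the monomials $\LM(f)$ for $f\in\I$, it suffices to prove that every nonzero $f\in\I$ has $\LM(f)$ divisible by some $\LM(g_i)$. I would run the classical minimal-representation argument: write $f=\sum_{i=1}^t h_ig_i$ and, over all such representations, choose one minimizing $\delta:=\max_i\multideg(h_ig_i)$ in the fixed monomial order $\succ$ (the maximum is attained because $\succ$ well-orders the monomials). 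Then $\multideg(f)\preceq\delta$. If $\multideg(f)=\delta$, pick any index $i$ in the nonempty set $S:=\{i:\multideg(h_ig_i)=\delta\}$; since $\multideg(h_i)+\multideg(g_i)=\delta$ we have $\multideg(g_i)\le\delta$ coordinatewise, so $\LM(g_i)$ divides $\LM(f)=\bx^\delta$ and we are done.

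So the whole point is to exclude the case $\multideg(f)\prec\delta$, and this is the only step with real content. After peeling the lower-order terms off each $h_i$ for $i\in S$ (they contribute only multidegrees $\prec\delta$), the obstruction is the top part $P:=\sum_{i\in S}\LT(h_i)g_i$, which satisfies $\multideg(P)\prec\delta$ precisely because its $\bx^\delta$-coefficient $\sum_{i\in S}\LC(h_i)\LC(g_i)$ vanishes in this case. Here I would invoke the standard cancellation lemma (\cite{Cox}, Ch.~2, \S6): a sum of terms all of multidegree exactly $\delta$ whose total has multidegree $\prec\delta$ is a $\Field$-linear combination of shifted $S$-polynomials $\bx^{\delta-\gm_{jk}}S(g_j,g_k)$, where $\bx^{\gm_{jk}}=\LCM(\LM(g_j),\LM(g_k))$, each of multidegree $\prec\delta$. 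Now the hypothesis enters: each $S(g_j,g_k)=\sum_l a^{jk}_l g_l$ with $\multideg(a^{jk}_lg_l)\preceq\multideg(S(g_j,g_k))\prec\gm_{jk}$, and multiplying through by $\bx^{\delta-\gm_{jk}}$ keeps every multidegree $\prec\delta$. Substituting these expressions back into $P$, and then into $f=P+(\text{lower-order part})$, produces a representation $f=\sum_l\widetilde h_l g_l$ with $\max_l\multideg(\widetilde h_lg_l)\prec\delta$, contradicting the minimality of $\delta$. Hence $\multideg(f)=\delta$ always, so $G$ is a \GB. The only delicate ingredient is the cancellation lemma together with the multidegree bookkeeping needed to confirm that every rewriting step strictly decreases $\delta$; the rest is routine polynomial-division algebra.
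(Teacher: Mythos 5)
This theorem is quoted in the paper directly from \cite{Cox} (Theorem 3, p.~105) and is not proved there; your argument reproduces the standard Cox--Little--O'Shea proof that the citation points to, and it is correct. Both directions are handled in the canonical way: the forward direction via the uniqueness of remainders for a \GB\ (Proposition~\ref{prop:division-GB} and its corollary), and the converse via the minimal-$\delta$ representation, the cancellation lemma rewriting a multidegree-dropping top part as a combination of shifted $S$-polynomials, and the descent contradicting minimality of $\delta$.

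One small misattribution worth fixing: you cite the well-ordering property of $\succ$ to justify that $\max_i\multideg(h_ig_i)$ is attained, but that maximum is over the finite index set $\{1,\dots,t\}$ and is trivially attained. What well-ordering actually guarantees is that, among all representations $f=\sum_i h_ig_i$, there exists one realizing the \emph{minimal} value of $\delta$; without this the descent step would only show ``$\delta$ can be lowered'' rather than reach a contradiction. The remaining multidegree bookkeeping (in particular $\multideg(S(g_j,g_k))\prec\gamma_{jk}$ and the preservation of strict inequalities after multiplying by $\bx^{\delta-\gamma_{jk}}$) is correct.
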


\begin{proposition}[\cite{Cox}, Proposition 4, p.106]
\label{prop:prime-LM}
    We say the leading monomials of two polynomials $f,g$ are relatively prime if $\LCM(\LM(f),\LM(g))=\LM(f)\cdot\LM(g)$. Given a finite set $G \subseteq \Field[\vc x n]$, suppose that we have $f,g \in G$ such that the leading monomials of $f$ and $g$ are relatively prime. Then the remainder on division of $S(f, g)$ by $G$ is zero.
\end{proposition}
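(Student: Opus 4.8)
The plan is to follow the classical argument (essentially that of \cite{Cox}): compute $S(f,g)$ explicitly, use relative primality of the leading monomials to show that the resulting expression $S(f,g)=h_f\,f+h_g\,g$ has no cancellation among its leading terms, so that each summand has multidegree at most $\multideg\big(S(f,g)\big)$, and then appeal to the standard behaviour of the division algorithm on such a representation to conclude that the remainder is $0$.

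First I would normalize. Since multiplying $f$ (or $g$) by a nonzero constant of $\Field$ leaves $S(f,g)$ unchanged — the factor $x^\gamma/\LT(f)$ rescales correspondingly — we may assume $\LC(f)=\LC(g)=1$, so that $\LT(f)=\LM(f)$ and $\LT(g)=\LM(g)$. Put $p:=f-\LM(f)$ and $q:=g-\LM(g)$, so that $p=0$ or $\multideg(p)\prec\multideg(f)$, and similarly for $q$. The relative-primality hypothesis says precisely that $x^\gamma=\LCM\big(\LM(f),\LM(g)\big)=\LM(f)\cdot\LM(g)$, hence $x^\gamma/\LT(f)=\LM(g)$ and $x^\gamma/\LT(g)=\LM(f)$, and therefore
\[
S(f,g)=\LM(g)\,f-\LM(f)\,g=\big(g-q\big)f-\big(f-p\big)g=p\,g-q\,f .
\]

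The heart of the matter — and the only step I expect to require any thought — is to show that the leading terms of $p\,g$ and of $-q\,f$ do not cancel each other. If $p,q\neq 0$ and $\LM(p)\LM(g)=\LM(q)\LM(f)$, then $\LM(f)\mid\LM(p)\LM(g)$; since $\gcd\big(\LM(f),\LM(g)\big)=1$ this forces $\LM(f)\mid\LM(p)$, and divisibility of monomials then gives $\LM(p)\succeq\LM(f)$, contradicting $\LM(p)\prec\LM(f)$. (If $p=0$ or $q=0$ the claim is trivial; if both vanish, $S(f,g)=0$ and we are done.) Hence $\multideg\big(S(f,g)\big)=\max\{\multideg(p\,g),\,\multideg(q\,f)\}$, so in the identity $S(f,g)=p\cdot g+(-q)\cdot f$ both summands have multidegree at most $\multideg\big(S(f,g)\big)$; this is a standard representation of $S(f,g)$ in terms of $f,g\in G$. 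Running the division algorithm on $S(f,g)$ by $G$ (reducing, for definiteness, by $f$ and $g$) then never sends a term to the remainder: at each stage the current polynomial has the form $p'\,g-q'\,f$ with $\multideg(p')\prec\multideg(g)$, $\multideg(q')\prec\multideg(f)$, so — again by relative primality — its leading monomial is divisible by $\LT(g)$ or by $\LT(f)$, and the matching reduction peels off the leading term of $p'$ or of $q'$, strictly lowering the multidegree until $0$ is reached. Thus the remainder is $0$, which is the assertion. The argument is short; what makes it work — and the point worth emphasizing — is that relative primality is exactly what prevents the $f$-part and the $g$-part of $S(f,g)$ from interfering, both in the no-cancellation step and in the termination of the division, so the only real obstacle is the monomial-divisibility computation above.
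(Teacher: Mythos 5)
The paper does not prove this proposition; it is quoted verbatim from Cox--Little--O'Shea, so there is no internal proof to compare against. Your argument is exactly the standard textbook one and is essentially correct: normalize so that $\LT(f)=\LM(f)$, $\LT(g)=\LM(g)$, write $S(f,g)=p\,g-q\,f$ with $p=f-\LM(f)$, $q=g-\LM(g)$, and use coprimality to rule out cancellation between the leading terms of $p\,g$ and $q\,f$; the divisibility argument ($\LM(f)\mid\LM(p)\LM(g)$ forces $\LM(f)\mid\LM(p)$, hence $\LM(p)\succeq\LM(f)$, a contradiction) is the crux and you have it right.

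Two small points to tighten. First, in the description of the division loop you state the invariant as $\multideg(p')\prec\multideg(g)$ and $\multideg(q')\prec\multideg(f)$; this is swapped (and not even something you know). What the loop actually preserves is that $p'$ and $q'$ are tails of $p$ and $q$, so $\multideg(p')\prec\multideg(f)$ and $\multideg(q')\prec\multideg(g)$, which is precisely what the no-cancellation step needs. Second, the statement concerns division by all of $G$, not just by $\{f,g\}$: if some $h\in G$ precedes $f$ and $g$ in the chosen ordering and $\LT(h)$ divides the current leading term, the algorithm reduces by $h$ and your invariant is lost. The fix is either to order $G$ with $f,g$ first (coprimality guarantees that at each step the leading term of $p'g-q'f$ is divisible by $\LT(f)$ or $\LT(g)$, and never by $\LT(f)$ when it comes from the $p'g$ part, so the algorithm does peel off terms of $p'$ or $q'$ as you describe), which is legitimate since Buchberger's criterion as stated in Theorem~\ref{th:crit} allows $G$ to be ``listed in some order''; or to observe that your identity $S(f,g)=p\cdot g+(-q)\cdot f$ with both summands of multidegree at most $\multideg(S(f,g))$ already gives the standard representation $S(f,g)\to_G 0$, which is what Cox's Proposition~4 literally asserts and is all that the refined Buchberger criterion requires.
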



\section{Multi-sorted CSPs and IMP}
\label{sec:multi-sorted-reductions}

\subsection{Multi-sorted problems}
In most theoretical studies of the \CSP\ all variables are assumed to have the same domain, \CSP s of this type are known as \emph{one-sorted \CSP s}. However, for various purposes, mainly for more involved algorithms such as in \cite{Bulatov17,Zhuk17,Zhuk20:proof} one might consider \CSP s where different variables of a \CSP\ have different domains, \CSP s of this type are known as \emph{multi-sorted \CSP s}~\cite{BulatovJ03-multi-sorted}. We study this notion in the context of the \IMP\ and provide a reduction for multi-sorted languages that are pp-interpretable. This in particular is useful in this paper as it provides a reduction between languages that are invariant under an affine polymorphism over an arbitrary Abelian group and languages over several cyclic $p$-groups. Definitions below are from \cite{BulatovJ03-multi-sorted}.

\begin{definition}
    For any finite collection of finite domains $\mc{D} =\{D_t\mid t \in T\}$, and any list of indices $(t_1, t_2,\dots, t_m) \in T^m$, a subset $R$ of $D_{t_1}\times D_{t_2} \times\dots\times D_{t_m}$, together with the list $(t_1, t_2,\dots, t_m)$, is called a \emph{multi-sorted} relation over $\mc{D}$ with arity $m$ and signature  $(t_1, t_2,\dots, t_m)$. For any such relation $R$, the signature of $R$ is
    denoted $\sigma(R)$.
\end{definition}

As an example consider $\mc{D}=\{D_1,D_2\}$ with $D_1=\{0,1\}$, $D_2=\{0,1,2\}$. Then $\zZ_6$, which is the direct sum of $\zZ_2$ and $\zZ_3$, $\zZ_2\oplus\zZ_3$, can be viewed as a multi-sorted relation over $\mc{D}$ of arity $2$ with signature $(1,2)$. 

Similar to regular one-sorted \CSP s, given any set of multi-sorted relations, we can define a corresponding class of \CSP s. Let $\Gm$ be a set of multi-sorted relations over a collection of sets $\mc{D} =\{D_t\mid t \in T\}$. The multi-sorted constraint satisfaction problem over $\Gm$, denoted as before $\MCSP(\Gm)$, is defined to be the decision problem with instance $\mc{P} = (X, \mc{D}, \delta, \mc{C})$, where $X$ is a finite set of variables, $\delta:X\to T$, and  $\mc{C}$ is a set of constraints where each constraint $C \in \mc{C}$ is a pair $\langle \mathbf{s},R\rangle$, such that 
\begin{itemize}
    \item $\mathbf{s} = (x_1, \dots, x_{m_C})$ is a tuple of variables of length $m_C$, called the constraint scope;
    \item $R$ is an element of $\Gm$ with arity $m_C$ and signature $(\delta(x_1),\dots,\delta(x_{m_c}))$,
    called the constraint relation.
\end{itemize}
The goal is to decide whether or not there exists a solution, i.e.\ a mapping $\varphi: X \to \cup_{D\in\mc{D}}D$, with $\varphi(x)\in D_{\delta(x)}$, satisfying
all of the constraints. We will use $\Sol(\mc{P})$ to denote the (possibly empty) set of solutions of the instance $\mc{P}$. 

The multi-sorted IMP, that is, $\IMP(\Gm)$ for a multi-sorted constraint language $\Gm$ is largely defined in the same way as the regular one. The ideal corresponding to an instance $\mc{P}$ of $\MCSP(\Gm)$ is constructed similar to the one-sorted case, the only difference is that for an instance $\mc{P} = (X, \mc{D}, \delta, \mc{C})$ the corresponding ideal $\I(\mc{P})$ contains domain polynomials $\prod_{a\in D_{\delta(x_i)}}(x_i-a)$ for each variable $x_i$. As with one-sorted \CSP s, the \IMP\ associated with a multi-sorted constraint language $\Gm$ over a set $\mc{D}$ is the problem
$\IMP(\Gm)$ in which the input is a pair $(f,\mc{P})$ where  $\mc{P} = (X, \mc{D}, \delta, \mc{C})$ is a $\MCSP(\Gm)$ instance and $f$ is a polynomial from $\Field[X]$. The goal is to decide whether $f$ lies in the ideal $\I(\mc{P})$. We use $\IMP_d(\Gm)$
to denote $\IMP(\Gm)$ when the input polynomial $f$ has degree at most $d$.

\subsection{Primitive-positive definability}

Primitive-positive (pp-) definitions have proved to be instrumental in the study of the CSP \cite{JeavonsCG97,BulatovJK05} and of the IMP as well \cite{Bulatov20:ideal}. Here we introduce the definition of pp-definitions and the more powerful construction, pp-interpretations, in the multi-sorted case, and prove that, similar to the one-sorted case \cite{Bulatov20:ideal}, they give rise to reductions between IMPs.

\begin{definition}[pp-definability]
\label{def:pp-def}
Let $\Gm$ be a multi-sorted constraint language on a collection of sets $\mc D=\{D_t\mid t\in T\}$. A \emph{primitive-positive (pp-) formula} in the language $\Gm$ is a first order formula $L$ over variables $X$ that uses predicates from $\Gm$, equality relations, existential quantifier, and conjunctions, and satisfies the condition:
\begin{quote}
    if $R_1(\vc xk),R_2(\vc y\ell)$ are atomic formulas in $L$ with signatures $\sg_1,\sg_2$ and such that $x_i,y_j$ are the same variable, then $\sg_1(i)=\sg_2(j)$.
\end{quote}
The condition above determines the signature $\sg:X\to T$ of $L$.

Let $\Dl$ be another multi-sorted language over $\mc D$. We say that $\Gm$ pp-defines $\Dl$ (or $\Dl$ is pp-definable from $\Gm$) if for each ($k$-ary) relation (predicate) $R\in\Dl$ there exists a pp-formula $L$ over variables $\{x_1,\dots, x_m,x_{m+1},\linebreak \dots,x_{m+k}\}$ such that 
    \[
        R(x_{m+1},\dots,x_{m+k})= \exists x_1 \dots \exists x_m  L',
    \]
where $L=\exists x_1 \dots \exists x_m  L'$, and if $\sg,\sg'$ are the signatures of $L$ and $R$, respectively, then $\sg'=\sg_{|\{m+1\zd m+k\}}$.
\end{definition}

An analog of the following result for one-sorted \CSP s is proved in \cite{Bulatov20:ideal}. 

\begin{theorem}
\label{thm:Multi-sorted-pp-reduction}
    If multi-sorted constraint language $\Gm$ pp-defines multi-sorted constraint language $\Dl$, then $\IMP(\Dl)$ [$\IMP_d(\Dl)$] is polynomial time reducible to $\IMP(\Gm)$ [respectively, to $\IMP_d(\Gm)$].
\end{theorem}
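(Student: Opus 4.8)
The plan is to mimic the one-sorted reduction of \cite{Bulatov20:ideal}, taking care at each step that signatures are respected. Suppose $\Gm$ pp-defines $\Dl$. Given an instance $(f_0,\cP')$ of $\IMP(\Dl)$ with $\cP'=(X,\mc D,\dl',\mc C')$, I would build an instance $(f_0,\cP)$ of $\IMP(\Gm)$ as follows. For each constraint $\ang{\bs,R}\in\mc C'$, take the pp-formula $L$ over variables $\{x_1,\dots,x_m,x_{m+1},\dots,x_{m+k}\}$ witnessing $R(x_{m+1},\dots,x_{m+k})=\exists x_1\dots\exists x_m\, L$; introduce $m$ fresh variables playing the role of the quantified $x_1,\dots,x_m$ (with the domains dictated by $\sg$), identify $x_{m+1},\dots,x_{m+k}$ with the variables in $\bs$, and add to $\cP$ the atomic constraints appearing in $L$ (each atom is either a relation from $\Gm$, so a legal $\Gm$-constraint, or an equality, which we eliminate by merging the two variables — legal because the pp-formula's signature condition guarantees the merged variables have the same sort). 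The signature-compatibility clause in Definition~\ref{def:pp-def}, together with $\sg'=\sg_{|\{m+1\zd m+k\}}$, is exactly what makes all these identifications sound: every variable of $\cP$ receives a well-defined sort, agreeing with $\dl'$ on $X$. Let $Y\supseteq X$ be the variable set of $\cP$ and $\dl:Y\to T$ its signature.

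The key correspondence is between solutions: by construction $\Sol(\cP')=\pr_X(\Sol(\cP))$, i.e.\ the projection of the solution set of $\cP$ onto the original variables $X$ is exactly $\Sol(\cP')$. Equivalently, at the level of varieties, $\Variety{\I(\cP')}$ is the image of $\Variety{\I(\cP)}$ under the coordinate projection $\Field^Y\to\Field^X$. Since $\I(\cP')$ is radical and zero-dimensional (it contains a domain polynomial for each variable, by the construction in Section~\ref{sect:idealCSP}), the Strong Nullstellensatz reformulation applies: $f_0\in\I(\cP')$ iff $f_0$ vanishes on $\Variety{\I(\cP')}$. Now $f_0\in\Field[X]\subseteq\Field[Y]$, and $f_0$ vanishes on $\Variety{\I(\cP')}$ iff $f_0$ (viewed in $\Field[Y]$) vanishes on $\Variety{\I(\cP)}$, because every point of $\Variety{\I(\cP)}$ restricts to a point of $\Variety{\I(\cP')}$ and conversely every point of $\Variety{\I(\cP')}$ extends to one of $\Variety{\I(\cP)}$, and $f_0$ only reads the $X$-coordinates. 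Invoking the Strong Nullstellensatz again (now for $\I(\cP)$, which is also radical) yields $f_0\in\I(\cP')\iff f_0\in\I(\cP)$. This is the desired reduction, and it is clearly polynomial time: the pp-definition of each $R\in\Dl$ is a fixed finite object, so each constraint of $\cP'$ is replaced by a bounded number of $\Gm$-constraints and a bounded number of variable merges. Finally, observe that the degree of the input polynomial is unchanged, which gives the bracketed statement for $\IMP_d$.

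The main obstacle I anticipate is purely bookkeeping rather than conceptual: getting the multi-sorted definitions to cohere. One must check that when distinct constraints of $\cP'$ share a variable, the fresh existential variables are kept \emph{separate} across constraints while the shared $X$-variables are correctly identified, and that this never forces two variables of different sorts to be merged (this is where the signature condition of Definition~\ref{def:pp-def} does the real work). A second small point is handling equality atoms in $L$: rather than adding an ``equality relation'' to $\Gm$, one eliminates them by substitution/merging of variables, and one must verify that this preserves both $\Sol$ and the radical zero-dimensional structure of the associated ideal — which it does, since merging two equally-sorted variables just passes to a subvariety cut out by $x_i-x_j$, and the domain polynomials are retained. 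Everything else follows the one-sorted argument of \cite{Bulatov20:ideal} verbatim.
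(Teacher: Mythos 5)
Your proof is correct and follows essentially the same route as the paper: the same construction of $\cP_\Gm$ from $\cP_\Dl$ (replace each constraint by the atoms of its pp-formula with fresh, per-constraint existential variables, then eliminate equality atoms by merging equally-sorted variables), combined with the radicality of both ideals and the Strong Nullstellensatz to transfer membership. The only difference is one of packaging: you argue directly that $\Sol(\cP')=\pr_X(\Sol(\cP))$ because the pp-formula supplies the existential witnesses, whereas the paper routes the same extension property through explicit elimination-ideal machinery (its Theorem~\ref{extension-theorem} and Lemma~\ref{variety-m-elimination}); the underlying logic is identical.
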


We provide a proof for Theorem~\ref{thm:Multi-sorted-pp-reduction}. Our proof is a slight modification of the one given for one-sorted \CSP s in \cite{Bulatov20:ideal}. In order to establish the result, we first analyze the relationship between pp-definability and the notion of elimination ideal from algebraic geometry. This has been explored in the case of one-sorted CSPs in \cite{Bulatov20:ideal,Mastrolilli19}. Here we establish a relationship in the case of multi-sorted CSPs. The proofs and ideas used here are almost identical to the ones in \cite{Mastrolilli19}. 

\begin{definition}
Given $\I=\langle f_1,\dots,f_s\rangle \subseteq \Field[X]$, for $Y\sse X$, the $Y$-elimination ideal $\I_{X\setminus Y}$ is the ideal of $\Field[X\setminus Y]$ defined by 
\[
\I_{X\setminus Y} =\I\cap \Field[X\setminus Y]
\]
In other words, $\I_{X\setminus Y}$ consists of all consequences of $f_1 = \dots = f_s = 0$ that do not depend on variables from $Y$.
\end{definition}

\begin{theorem}\label{extension-theorem}
    Let $\cP=(X,\mc{D},\delta,C)$ be an instance of the CSP$(\Gamma)$, and let $\I(\cP)$ be its corresponding ideal. For any $Y\sse X$ let $\I_Y$ be the $(X\setminus Y)$-elimination ideal. Then, for any partial solution $\vf_Y\in \mb{V} (\I_Y)$ there exists an extension $\psi: X\setminus Y\to \cup_{t\in T}D_t$ such that $(\vf,\psi) \in \mb{V} (\I(\cP))$.
\end{theorem}
\begin{proof}
    Suppose $\mc{D}=\{D_t\mid t\in T\}$, $\delta:X\to T$ and set $\I=\I(\mc{P})$. We may assume $\Variety{\I}\neq \emptyset$. This is because if $\Variety{\I}=\emptyset$ then $1\in \I$ which implies $1\in \I_Y$. If the latter holds then the claim is vacuously true. We proceed by assuming $\Variety{\I}\neq \emptyset$ and $\Variety{\I_Y}\neq \emptyset$.
    
    The proof is by contradiction. Suppose there exists $\ba=(\vc am)\in \Variety{\I_Y}$, with $m=|Y|$, that does not extend to a feasible solution from $\Variety{\I}$. Assume $Y=\{\vc ym\}\subseteq X$, and define the polynomial 
    \begin{align*}
        q(\vc ym) = \prod_{i=1}^m~\prod_{j\in D_{\delta(y_i)}\setminus \{a_i\}}(y_i-j).
    \end{align*}
    Observe that $q(\vc am)\neq 0$ however for every $\bb=(\vc bm)$ that can be extended to a feasible solution of $\Variety{\I}$ we have $q(\bb)=0$. This implies
    \begin{align*}
        q(\vc ym)\in \mathbf{I}(\Variety{\I})\cap \Field[\vc ym]=\I\cap \Field[\vc ym] = \I_Y
    \end{align*}
    where the first equality follows from the Strong Nullstellensat and $\I$ being radical. Having $q(\vc ym)\in \I_Y$ implies that $\ba=(\vc am)\not\in\Variety{\I_Y}$, a contradiction.
\end{proof}

Let $\Dl$ and $\Gm$ be multi-sorted constraint languages over $\mc{D}$ where $\Gm$ pp-defines $\Dl$. That is for each ($k$-ary) relation (predicate) $R\in\Dl$ there exists a quantifier-free conjunctive formula $L$ (the quantifier-free part of a pp-definition of $R$ over variables $\{x_1,\dots, x_m,x_{m+1},\dots,x_{m+k}\}$ that uses predicates from $\Gm$, equality relations, and conjunctions such that  
    \begin{align}
    \label{eq:pp-def}
        R(x_{m+1},\dots,x_{m+k})= \exists x_1 \dots \exists x_m  L,
    \end{align}
and if $\sg,\sg'$ are the signatures of $L$ and $R$, respectively, then $\sg'=\sg_{|\{m+1\zd m+k\}}$. Let $S=\Sol(L)$ be the set of satisfying assignments for $L$ and let $\mathbf{I}(S)$ be its corresponding vanishing ideal. Note that $\mathbf{I}(S)\subseteq \Field[\vc x{m+k}]$.

\begin{lemma}
\label{variety-m-elimination}
$R = \Variety{\I_X}$ where $\I_X = \mathbf{I}(S)\cap \Field[x_{m+1},\dots,x_{m+k}]$.
\end{lemma}
\begin{proof}
     Define the mapping $\pi_X:\Field^{m+k}\to \Field^k$ to be the projection 
     \[
        \pi_X(\vc a{m+k}) = (a_{m+1},\dots,a_{m+k}).    
     \]
     If we apply $\pi_X$ to $S$ we get $\pi_X(S)\subseteq \Field^k$. Now it is easy to see $\pi(S)\subseteq \Variety{\I_X}$. This is because every polynomial in $\I_X$ vanishes on all the points in $\pi(S)$. Provided that $\pi(S)\subseteq \Variety{\I_X}$ we can write $R$ as follows
     \[
        R=\pi_X(S) =\{(a_{m+1},\dots,a_{m+k})\in \Variety{\I_X}\mid \exists \vc am \in \Field \text{ such that } (\vc a{m+k})\in S\}
     \]
     This is exactly the set of points in $\Variety{\I_X}$ that can be extended to a solution in $S$. However, by Theorem~\ref{extension-theorem} all the points in $\Variety{\I_X}$ can be extended to a solution in $S$. This means $R=\Variety{\I_X}$ as desired.
\end{proof}
We now have all the required ingredients to prove Theorem~\ref{thm:Multi-sorted-pp-reduction}.

\begin{proof}[Proof of Theorem~\ref{thm:Multi-sorted-pp-reduction}]
     At a high level, in this proof an instance of $\CSP(\Delta)$ is transformed into an instance of $\CSP(\Gamma)$ in the standard
    way and the input polynomial is kept the same. Next we provide the details. Let $(f,\cP_\Dl)$, $\cP_\Dl=(X,\mc{D},\delta_\Dl,C_\Dl)$, be an instance of $\IMP(\Dl)$ where $X=\{x_{m+1},\dots, x_{m+k}\}$, $f\in \Field[x_{m+1},\dots,x_{m+k}]$, $k=|X|$, and  $m$ will be defined later, and $\I(\cP_\Dl)\subseteq \Field[x_{m+1},\dots,x_{m+k}]$. From this we construct an instance $(f',\cP_\Gm)$ of $\IMP(\Gm)$ where $f'\in \Field[x_{1},\dots,x_{m+k}]$ and $\I(\cP_\Gm)\subseteq \Field[x_{1},\dots,x_{m+k}]$ such that $f\in \I(\cP_\Dl)$ if and only if $f'\in \I(\cP_\Gm)$.
     
     From $\cP_\Dl$ we construct an instance $\cP_{\Gm}=(\{x_{1},\dots,x_{m+k}\},\mc D,\delta_\Gm,C_\Gm)$ of $\CSP(\Gm)$ as follows.
     By the assumption each $Q\in\Dl$, say, $t_Q$-ary, is pp-definable in $\Gm$. Thus,
     \[
         Q(y_{q_Q+1},\dots,y_{q_Q+t_Q})=\exists \vc y{q_Q} \overbrace{(R_1(w^1_1,\dots, w^1_{l_1})\wedge\dots\wedge R_r(w^r_1,\dots, w^r_{l_r}))}^{L},
     \]
     where $w^1_1,\dots, w^1_{l_1},\dots,w^r_1,\dots, w^r_{l_r}\in \{\vc y{q_Q+t_Q}\}$ and $\vc Rr\sse\Gm\cup\{=_{\mc{D}}\}$. Moreover, for $\sigma$ and $\sigma_Q$, the signatures of $L$ and $Q$ respectively, we have $\sigma_Q = \sg_{|\{q_Q+1\zd q_Q+t_Q\}}$.
      Now, for every constraint $B=\langle \bs,Q\rangle\in C_\Dl$, where $\bs=(x_{i_1},\dots,x_{i_t})$ create a fresh copy of $\{\vc y{q_Q}\}$ denoted by $Y_B$, and add the following constraints to $C_\Gm$
     \[
          \langle (w^1_1,\dots, w^1_{l_1}),R_1\rangle,\dots, \langle (w^r_1,\dots, w^r_{l_r}),R_r\rangle.
     \]
     where for each $i,j$ we have $\sg_{R_i}(w_k^i)=\sg_{R_j}(w_{k'}^j)$ whenever $w_k^i$ and $w_{k'}^j$ are the same variable. In this formula, all constraints of the form $\langle(x_i, x_j),=_{\mc{D}}\rangle$  can be eliminated by replacing all occurrences of the variable $x_i$ with $x_j$.
     Set $m=\sum_{B\in C}|Y_B|$ and assume that $\cup_{B\in C}Y_B=\{\vc xm\}$.   
     
     Let $\I(\cP_\Gm)\subseteq\Field[x_{1},\dots,x_{m+k}]$ be the ideal corresponding to $\cP_{\Gm}$ and set $f'=f$. Since $f\in \Field[x_{m+1},\dots,x_{m+k}]$ we also have $f\in \Field[x_{1},\dots,x_{m+k}]$. Hence, $(f,\cP_\Gm)$ is an instance of $\IMP(\Gm)$. We prove that $f\in \I(\cP_\Dl)$ if and only if $f\in \I(\cP_\Gm)$.
     
     Suppose $f\not\in \I(\cP_\Dl)$, this means there exists $\vf\in \mb{V}(\I(\cP_\Dl))$ such that $f(\vf)\neq 0$. By Theorem~\ref{extension-theorem}, $\vf$ can be extended to a point $\vf'\in\mb{V}(\I(\cP_\Gm))$. This in turn implies that $f\not\in \I(\cP_\Gm)$. Conversely, suppose $f\not\in \I(\cP_\Gm)$. Hence, there exists $\vf'\in\mb{V}(\I(\cP_\Gm))$ such that $f(\vf')\neq 0$. The projection of $\vf'$ to its last $k$ coordinates gives a point $\vf\in \mb{V}(\I_X)$. By Lemma~\ref{variety-m-elimination}, $\vf\in \mb{V}(\I(\cP_\Dl))$ which implies $f\not\in \I(\cP_\Dl)$.
\end{proof}

\subsection{Primitive-positive interpretability}

Multisorted primitive-positive (pp-) interpretations are also similar to the one-sorted case \cite{Bulatov20:ideal}, but require a bit more care making sure that the sorts of variables match. 

\begin{definition}[pp-interpretability]
\label{pp-interpret-multi-sorted}
    Let $\Gm,\Dl$ be multi-sorted constraint languages over finite collections of sets $\mc{D}=\{D_t\mid t\in T\},\mc{E}=\{E_s\mid s\in S\}$, respectively, and $\Dl$ is finite. We say that $\Gm$ pp-interprets $\Dl$ if for every $s\in S$ there exist $i_{s,1}\zd i_{s,\ell_s}\in T$, a set $F_s \subseteq D_{i_{s,1}}\times\dots\times D_{i_{s,\ell_s}}$, and an onto mapping $\pi_s : F_s \to E_s$ such that $\Gm$ pp-defines the following relations
    \begin{enumerate}
        \item the relations $F_s$, $s\in S$,
        \item the $\pi_s$-preimage of the equality relations on $E_s$, $s\in S$, and
        \item the $\pi$-preimage of every relation in $\Dl$,
    \end{enumerate}
    where by the $\pi$-preimage of a $k$-ary relation $Q\sse E_{s_1}\tm\dots\tm E_{s_k}$ over $\mc{E}$ we mean the $m$-ary relation $\pi^{-1}(Q)$ over $\mc{D}$, with $m=\sum_{i=1}^k \ell_{s_i}$, defined by
        \[
            \pi^{-1}(Q)(x_{1,1},\ldots , x_{1,\ell_{s_1}},x_{2,1},\ldots,x_{2,\ell_{s_2}},\ldots,x_{k,1},\ldots,x_{k,\ell_{s_k}})\qquad \text{is true}
        \]
    if and only if
        \[
            Q(\pi_{s_1}(x_{1,1},\ldots , x_{1,\ell_{s_1}}),\dots,\pi_{s_k}(x_{k,1},\ldots,x_{k,\ell_{s_k}})) \qquad\text{is true}.
        \]
\end{definition}

\begin{example}
    Suppose $\mc{D}= \{\zZ_2,\zZ_3\}$ and $\mc{E}=\{\zZ_6\}$. Now, any relation on $\mc{E}$ is pp-interpretable in a language in $\mc{D}$ via $F=\zZ_2\times \zZ_3$ and $\pi:F\to \zZ_6$ as 
    \begin{table}[H]
        \centering
        \begin{tabular}{ccc}
             $\pi(0,0) = 0$ & $\pi(1,2) = 1$ & $\pi(0,1) = 2$\\
             $\pi(1,0) = 3$ & $\pi(0,2) = 4$ & $\pi(1,1) = 5$.
        \end{tabular}
        \label{tab:Z_6}
    \end{table}
\end{example}

\begin{example}\label{exa:z2-example} 
Here we present a very simple example of pp-interpretation that will be useful later. Suppose $\mc{D}=\{D=\zZ_2\}$ and $\mc{E}=\{E_1,E_2\}$ with $E_1=E_2=\zZ_2\tm\zZ_2$. Define relations $R_D=\{(0,0),(1,1)\}$ on $D$ and 
\begin{align*}
    R_E = 
    \left(
        \begin{array}{cccc}
             (0,0)& (1,0) & (0,1) & (1,1) \\
             (0,0)& (0,1) & (1,0) & (1,1)
        \end{array}
        \right)
        \begin{array}{c}
             \gets x  \\
             \gets y 
        \end{array}
\end{align*}
which is a subset of $E_1\tm E_2$, that is, it has signature $(1,2)$. Thus every entry of every tuple from $R_E$ is itself a pair like $(0,1)$. Also, the tuples from $R_E$ are written vertically as columns of the matrix, so, alternatively, $R_E$ can be represented as 
\[
R_E=\{((0,0),(0,0)),((1,0),(0,1)),((1,0),(1,0)),((1,1),(1,1))\},
\]
although the matrix representation will be more convenient in the future. 
Note that the relation $R_E$ contains all pairs $(x,y)\in E_1\tm E_2$ with $x=\begin{pmatrix} 0 & 1\\
1 & 0 \end{pmatrix}y$ (recall that each of $x,y$ is a vector from $\zZ_2\tm\zZ_2$), and the relation $R_D$ is the equality. Set $\Gm=\{R_D\}$ and $\Dl=\{R_E\}$.
    
Set $F_1=F_2=D\tm D$ and define mappings $\pi_1:F_1\to E_1$, $\pi_2:F_2\to E_2$ as follows $\pi_1(x_1,x_2)=\pi_2(x_1,x_2)=(x_1,x_2)$.
The $\pi$-preimage of the relation $R_E$ is the relation 
     \begin{align*}
        R_F = 
        \left(
        \begin{array}{cccc}
             0 & 1 & 0 & 1 \\
             0 & 0 & 1 & 1 \\
             0 & 0 & 1 & 1 \\
             0 & 1 & 0 & 1 
        \end{array}
        \right)
        \begin{array}{c}
             \gets x_1  \\
             \gets x_2 \\
             \gets y_1 \\
             \gets y_2
        \end{array}
    \end{align*}
 Here $R_F$ is a 4-ary relation, and its tuples are written as columns of this matrix.  The language $\Gm$ pp-defines $\Gm'=\{R_F\}$ through the following pp-formula 
    \[
        R_F=\{(x_1,x_2,y_1,y_2)\mid (x_1 = y_2)\land (x_2 = y_1),  \text{ and } (x_1,x_2,y_1,y_2)\in\{0,1\}^4\}.
    \]
    Consider an instance $\mc{P}=(\{x,y,z\},\mc{E},\dl,C)$ of $\MCSP(\Dl)$ where the set of constraints is 
    \[
        C=\{\langle (x,y),R_E\rangle,\langle (z,y),R_E\rangle\}
    \] 
    and $\dl$ maps $x,z$ to 1 and $y$ to 2. This basically means the requirements $x=\begin{pmatrix} 0 & 1\\
    1 & 0 \end{pmatrix}y$ and $z=\begin{pmatrix} 0 & 1\\
    1 & 0 \end{pmatrix}y$. This instance is equivalent to the following instance $\mc{P}'$ of $\CSP(\Gm')$:
    \begin{align*}
        &\langle (x_1,x_2,y_1,y_2),R_F\rangle \land 
        \langle (z_1,z_2,y_1,y_2),R_F\rangle 
    \end{align*}
     Applying the mapping $\pi$, every solution of the instance $\mc{P}'$ can be transformed to a solution of instance $\mc{P}$ and back. This in turn is equivalent to the following instance of \CSP$(\Gm)$
     \begin{align*}
        &\langle (x_1,y_2),R_D\rangle \land \langle (x_2,y_1),R_D\rangle \land \langle (z_1,y_2),R_D\rangle \land \langle (z_2,y_1),R_D\rangle. 
    \end{align*}
\end{example}

As in the one-sorted case, pp-interpretations give rise to reductions between IMPs.

\begin{theorem}
\label{thm:pp-interpret-multi-sorted}
    Let $\Gm$, $\Delta$ be multi-sorted constraint languages over collections of sets $\mc{D}=\{D_t\mid t\in T\},\mc{E}=\{E_s\mid s\in S\}$, respectively, and let $\Gm$ pp-interpret $\Delta$.
    Then $\IMP_d(\Delta)$ is polynomial time reducible to $\IMP_{O(d)}(\Gm)$.
\end{theorem}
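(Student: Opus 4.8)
The plan is to adapt the one-sorted pp-interpretation reduction of \cite{Bulatov20:ideal} to the multi-sorted setting, the one genuinely new point being the bookkeeping of sorts and the control of the degree of the transformed polynomial. Given an instance $(f_0,\mc{P})$ of $\IMP_d(\Delta)$ with $\mc{P}=(X,\mc{E},\delta,\mc{C})$, I would build an instance $(f'_0,\mc{P}')$ of $\IMP_{O(d)}(\Gm)$ as follows. For each $x\in X$ with $\delta(x)=s$ introduce a block $x^1,\dots,x^{\ell_s}$ of new variables, of sorts $i_{s,1},\dots,i_{s,\ell_s}$; the intended reading is that $(x^1,\dots,x^{\ell_s})$ ranges over $F_s$ and $\pi_s(x^1,\dots,x^{\ell_s})$ is the value of $x$. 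The instance $\mc{P}'$ consists of: for each $x\in X$, a copy of the pp-definition over $\Gm$ of $F_{\delta(x)}$ applied to $(x^1,\dots,x^{\ell_{\delta(x)}})$ with fresh existential variables; and for each constraint $\langle(x^{(1)},\dots,x^{(k)}),Q\rangle$ of $\mc{C}$, a copy of the pp-definition over $\Gm$ of $\pi^{-1}(Q)$ applied to the blocks of $x^{(1)},\dots,x^{(k)}$, again with fresh existential variables. Since $\Delta$ and $\mc{E}$ are finite there are only finitely many relations ($F_s$, the $\pi_s$-preimages of equality, and the $\pi^{-1}(Q)$) to pp-define, each by a fixed finite gadget, so $|\mc{P}'|=O(|\mc{P}|)$ and $\mc{P}'$ is computed in polynomial time; the pp-definable preimages of equality are what let the translation of repeated variables and of equality atoms inside the gadgets go through, exactly as in \cite{Bulatov20:ideal}.

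To build $f'_0$, fix for each $s\in S$ a polynomial $g_s(z^1,\dots,z^{\ell_s})\in\Field[z^1,\dots,z^{\ell_s}]$ that agrees with $\pi_s$ on $F_s$ (its values off $F_s$ are irrelevant); by Lagrange interpolation on the grid $D_{i_{s,1}}\times\dots\times D_{i_{s,\ell_s}}$ one may take $\deg g_s\le\sum_{j}(|D_{i_{s,j}}|-1)=:C_s$, a constant depending only on the interpretation. Put $f'_0:=f_0\big[\,x\mapsto g_{\delta(x)}(x^1,\dots,x^{\ell_{\delta(x)}})\ :\ x\in X\,\big]$. Then $\deg f'_0\le d\cdot\max_{s\in S}C_s=O(d)$ — this is precisely the reason the reduction lands in $\IMP_{O(d)}(\Gm)$ rather than $\IMP_d(\Gm)$ — and since $d$ is a fixed parameter, expanding the substitution multiplies the number of monomials by only a constant factor, so $f'_0$ is also computed in polynomial time.

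For correctness I would exhibit a surjection $\rho\colon\Sol(\mc{P}')\to\Sol(\mc{P})$. Given $\varphi\in\Sol(\mc{P}')$, the $F_s$-gadget for $x$ together with the domain polynomials forces $(\varphi(x^1),\dots,\varphi(x^{\ell_s}))\in F_s$, so $\rho(\varphi)\colon x\mapsto\pi_{\delta(x)}(\varphi(x^1),\dots,\varphi(x^{\ell_{\delta(x)}}))$ is well-defined, and the $\pi^{-1}(Q)$-gadgets then guarantee $\rho(\varphi)\in\Sol(\mc{P})$. Conversely, given $\psi\in\Sol(\mc{P})$, for each $x$ pick $\mathbf{a}_x\in\pi_{\delta(x)}^{-1}(\psi(x))$, nonempty as $\pi_s$ is onto, set $\varphi(x^j)=(\mathbf{a}_x)_j$, and extend $\varphi$ over the existential variables of every gadget, which is possible precisely because the gadget relations are pp-\emph{defined} from $\Gm$; the resulting $\varphi$ lies in $\Sol(\mc{P}')$ with $\rho(\varphi)=\psi$. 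By construction $f'_0(\varphi)=f_0(\rho(\varphi))$ for every $\varphi\in\Sol(\mc{P}')$, since $g_{\delta(x)}$ agrees with $\pi_{\delta(x)}$ on $F_{\delta(x)}$. Finally, both $\I(\mc{P})$ and $\I(\mc{P}')$ are radical with varieties $\Sol(\mc{P})$ and $\Sol(\mc{P}')$; hence $f_0\in\I(\mc{P})$ iff $f_0$ vanishes on $\Sol(\mc{P})$ iff (using surjectivity of $\rho$) $f'_0$ vanishes on $\Sol(\mc{P}')$ iff $f'_0\in\I(\mc{P}')$, which is the required reduction.

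I expect the main difficulty to be bookkeeping rather than depth: assigning the correct sorts (hence the correct domain polynomials) to the new and auxiliary variables, checking that every pp-gadget can always be extended to its existential variables once its free-variable block lies in the appropriate $F_s$, and — the one ingredient with no counterpart in Theorem~\ref{thm:Multi-sorted-pp-reduction} — verifying that interpolating the maps $\pi_s$ inflates the degree of $f_0$ by only the constant factor $\max_{s}C_s$, so that the target is $\IMP_{O(d)}(\Gm)$. A small point worth stating carefully is that the translation uses a single block $x^1,\dots,x^{\ell_s}$ for all occurrences of $x$, which makes consistency automatic and confines any use of the $\pi_s$-preimages of equality to equality atoms appearing inside the defining formulas, as in \cite{Bulatov20:ideal}.
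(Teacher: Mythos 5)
Your proposal is correct and follows essentially the same route as the paper: replace each original variable by a block of variables ranging over $F_s$, translate constraints to their $\pi$-preimages, interpolate $\pi_s$ by a fixed-degree polynomial and substitute it into $f_0$, and argue via the solution-set surjection and radicality of both ideals. The only cosmetic difference is that the paper factors the reduction through the intermediate language $\Gm'$ (consisting of the $F_s$, the preimages of equality, and the preimages of relations in $\Dl$) and then invokes Theorem~\ref{thm:Multi-sorted-pp-reduction}, whereas you inline the pp-gadgets directly; this does not change the substance of the argument.
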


\begin{proof}
    Recall that $\Gm$, $\Delta$ are multi-sorted constraint languages over collections of sets $\mc{D}=\{D_t\mid t\in T\},\mc{E}=\{E_s\mid s\in S\}$, respectively, and $\Gm$ pp-interprets $\Delta$.
     
     Let $(f,\cP_\Dl)$ be an instance of $\IMP_d(\Dl)$ where $f\in \Field[x_{1},\dots,x_n]$, $\cP_{\Dl}=(\{x_{1},\dots,x_n\},\mc{E},\delta_\Dl,C_\Dl)$, an instance of $\CSP(\Dl)$, and $\I(\cP_\Dl)\subseteq \Field[x_{1},\dots,x_n]$.
     
     As $\Gm$ pp-interprets $\Dl$ then for every $s\in S$ there exist $i_{s,1}\zd i_{s,\ell_s}\in T$, a set $F_s \subseteq D_{i_{s,1}}\times\dots\times D_{i_{s,\ell_s}}$, and an onto mapping $\pi_s : F_s \to E_s$ such that $\Gm$ pp-defines the constraint language $\Gm'$ that consists of the following relations:
     \begin{enumerate}
        \item the relations $F_s$, $s\in S$,
        \item the $\pi_s$-preimage of the equality relations on $E_s$, $s\in S$, and
        \item the $\pi$-preimage of every relation in $\Dl$.
    \end{enumerate}
    By Definition~\ref{pp-interpret-multi-sorted}, according to the properties of $\pi_1,\dots,\pi_{|S|}$ we can rewrite an instance of $\CSP(\Dl)$ to an instance of $\CSP(\Gamma')$.
     Note that if $\delta_{\Dl}(x)=s$ then $\delta_{\Gm'}(x_{s,j})=i_{s,j}$, for all $1\leq j\leq \ell_s$. By Theorem~\ref{thm:Multi-sorted-pp-reduction}, $\IMP(\Gamma')$ is reducible to $\IMP(\Gm)$. It remains to show $\IMP_d(\Dl)$ is reducible to $\IMP_d(\Gamma')$. To do so, from instance $(f,\cP_\Dl)$ of $\IMP_d(\Dl)$ we construct an instance $(f',\cP_{\Gm'})$ of $\IMP_d(\Gamma')$ such that $f\in\I(\cP_\Dl)$ if and only if $f'\in\I(\cP_{\Gm'})$. 
     
     Let $p_s$ be a polynomial of total degree at most $|E_s|(|D_{i_{s,1}}|+\dots+ |D_{i_{s,\ell_s}}|)$ that interpolates the mapping $\pi_s$, $s\in S$. For every $f\in \Field[x_1,\dots,x_n]$, let 
     \[
        f'\in \Field [x_{1,1},\ldots , x_{1,\ell_{s_1}},x_{2,1},\ldots,x_{2,\ell_{s_2}},\ldots,x_{n,1},\ldots,x_{n,\ell_{s_n}}]
    \]
     be the polynomial that is obtained from $f$ by replacing each indeterminate $x_i$, with $\delta_\Dl(x_i)=s$, by $p_{s}(x_{s,1},\dots,x_{s,\ell_{s}})$.
     
     Now, for any assignment $\vf:\{\vc xn\}\to \cup_{s\in S} E_s$, with the condition $\vf(x_i)\in E_{\delta_\Dl(x_i)}$ for all $x_i$, $f(\vf)=0$ if and only if $f'(\psi)=0$ for every 
     \[
        \psi:\{x_{1,1},\ldots , x_{1,\ell_{s_1}},x_{2,1},\ldots,x_{2,\ell_{s_2}},\ldots,x_{n,1},\ldots,x_{n,\ell_{s_n}}\}\to \cup_{t\in T} D_t
     \]
    such that for each $x_i$ with $\delta_\Dl(x_i)=s$ we have
    \[
        \vf(x_i)=\pi(\psi(x_{s,1}),\dots,\psi(x_{s,\ell_s})).
    \]
     Moreover, for any such $\vf,\psi$ it holds that $\vf\in \mb{V}(\I(\cP_\Dl))$ if and only if  $\psi\in \mb{V}(\I(\cP_{\Gm'}))$. This yields that 
    \[
        (\exists \vf\in \mb{V}(\I(\cP_\Dl)) \land f(\vf)\neq 0) \iff (\exists \psi\in \mb{V}((\cP_{\Gm'})) \land f'(\psi)\neq 0).
    \]
    Note that the condition that $f$ has bounded degree is important here, because otherwise $f'$ may have exponentially more monomials than $f$. This completes the proof of the theorem.
\end{proof}

Observe that the proof of Theorem~\ref{thm:pp-interpret-multi-sorted} breaks if we replace $\IMP_d(\Gm)$ with its unbounded version $\IMP(\Gm)$. Indeed, if any of the polynomials $p_s$ contains at least 2 monomials and $f$ has a monomial of degree $n$ then $f'$ obtained by substituting the $p_s$ into $f$ may have more than $2^n$ monomials, thus, the reduction would not be polynomial time.

\subsection{Polymorphisms and multi-sorted polymorphisms}

One of the standard methods to reason about constraint satisfaction problems is to use polymorphisms. Here we only give the necessary basic definitions. For more details the reader is referred to \cite{Barto17:polymorphisms,BulatovJK05}. Let $R$ be an ($n$-ary) relation on a set $D$ and $f$ a ($k$-ary) operation on the same set, that is, $f:D^k\to D$. Operation $f$ is said to be a \emph{polymorphism} of $R$, or $R$ is \emph{invariant} under $f$, if for any $\vc\ba k\in R$ the tuple $f(\vc\ba k)$ belongs to $R$, where $f$ is applied component-wise, that is, 
\[
f(\vc\ba k)=(f(a_{1,1}\zd a_{1,k})\zd f(a_{n,1}\zd a_{n,k})),
\]
and $\ba_i=(a_{1,i}\zd a_{n,i})$. The set of all polymorphisms of $R$ is denoted $\Pol(R)$. For a constraint language $\Gm$ by $\Pol(\Gm)$ we denote the set of all operations that are polymorphisms of every relation from $\Gm$. 


Polymorphisms provide a link between constraint languages and relations pp-definable in those languages.

\begin{proposition}[\cite{Bodnarchuk69:Galua1,Geiger68:closed}]\label{pro:galois}
Let $\Gm$ be a constraint language on set $A$ and $R$ a relation on the same set. The relation $R$ is pp-definable in $\Gm$ if and only if $\Pol(\Gm)\sse\Pol(R)$.
\end{proposition}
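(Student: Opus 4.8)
The statement is the classical Bodnarchuk--Kalu\v znin--Kotov--Romov / Geiger Galois correspondence, and the plan is to reprove it by the standard two-step argument in the present notation. For the easy direction, suppose $R$ is pp-definable from $\Gm$, say $R(x_{m+1},\dots,x_{m+k})=\exists x_1\dots\exists x_m\,L$ for a pp-formula $L$ over $\Gm$. I would prove by induction on the construction of an arbitrary pp-formula that the relation it defines is invariant under every $f\in\Pol(\Gm)$: atomic formulas $\rho(\cdot)$ with $\rho\in\Gm$ are invariant under $\Pol(\Gm)$ by definition; the equality relation is invariant under every operation on $A$; if $S,T$ are $f$-invariant then so is the relation defined by $S\wedge T$ (apply $f$ componentwise to tuples lying in both, after identifying repeated variables); and existential quantification is projection, and the projection of an $f$-invariant relation is $f$-invariant (apply $f$ componentwise to arbitrary preimages of the projected tuples). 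Hence the relation defined by $L$, and therefore $R$, is invariant under $\Pol(\Gm)$, i.e.\ $\Pol(\Gm)\sse\Pol(R)$.

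For the converse, assume $\Pol(\Gm)\sse\Pol(R)$, with $R$ an $m$-ary relation, and list $R=\{r_1,\dots,r_s\}$ with $r_j=(r_j[1],\dots,r_j[m])\in A^m$. For each $i\in[m]$ form the ``column'' $c_i=(r_1[i],\dots,r_s[i])\in A^s$, so that $c_i[j]=r_j[i]$. The key object is the set $\Pol^{(s)}(\Gm)$ of $s$-ary polymorphisms of $\Gm$, viewed as an $N$-ary relation $P$ on $A$ with coordinates indexed by the $N=|A|^s$ elements of $A^s$ (a tuple of $P$ is a map $A^s\to A$ preserving $\Gm$). The point that makes everything work --- and where finiteness of $A$ is essential --- is that $P$ is itself pp-definable from $\Gm$: a map $g$ lies in $P$ iff for every $\rho\in\Gm$ of arity $r$ and every $r$-tuple $(\bar u_1,\dots,\bar u_r)$ of elements of $A^s$ that lies in $\rho$ coordinatewise, one has $(g(\bar u_1),\dots,g(\bar u_r))\in\rho$, and this is a (finite, as $A$ is finite) conjunction of atomic $\Gm$-formulas in the coordinate variables $\{z_{\bar u}\mid\bar u\in A^s\}$. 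Let $R^*$ be the $m$-ary relation obtained from $P$ by existentially quantifying away every coordinate other than those indexed by $c_1,\dots,c_m$ (identifying coordinates indexed by the same element of $A^s$); then $R^*$ is pp-definable from $\Gm$ and $R^*=\{(g(c_1),\dots,g(c_m))\mid g\in\Pol^{(s)}(\Gm)\}$.

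It remains to check $R^*=R$. For $R\sse R^*$, take the $k$-th coordinate projection $e_k\in\Pol^{(s)}(\Gm)$; since $e_k(c_i)=c_i[k]=r_k[i]$, we get $(e_k(c_1),\dots,e_k(c_m))=r_k$, so every $r_k$ lies in $R^*$. For $R^*\sse R$, take any $g\in\Pol^{(s)}(\Gm)\sse\Pol^{(s)}(R)$ and apply it componentwise to the $s$ tuples $r_1,\dots,r_s\in R$: the $i$-th entry of the result is $g(r_1[i],\dots,r_s[i])=g(c_i)$, so $(g(c_1),\dots,g(c_m))\in R$ by invariance of $R$ under $g$. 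Hence $R^*=R$ is pp-definable from $\Gm$, completing the proof. The main obstacle is essentially the observation just used --- that $\Pol^{(s)}(\Gm)$ is pp-definable from $\Gm$ because the defining ``preservation'' conditions form a finite conjunction over the finite set $A$ --- together with the bookkeeping that a single finite subset of $\Gm$ suffices when $\Gm$ is infinite (again since $A$ is finite there are only finitely many relations of each arity). The remaining steps, the structural induction and the closure properties of pp-definable relations, are routine, and the multi-sorted version used elsewhere in the paper follows verbatim by carrying a separate domain on each coordinate.
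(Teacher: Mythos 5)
The paper does not prove Proposition~\ref{pro:galois}; it is stated with citations to Bodnarchuk--Kalu\v znin--Kotov--Romov and Geiger as a classical black-box result. Your argument is the standard textbook proof of this Galois correspondence (the easy direction by structural induction on pp-formulas, the hard direction via the indicator relation $\Pol^{(s)}(\Gm)$ viewed as an $|A|^s$-ary relation and projected onto the columns of $R$), and it is essentially correct. Two small remarks. First, your parenthetical justification for why a finite subset of $\Gm$ suffices when $\Gm$ is infinite --- ``there are only finitely many relations of each arity'' --- is not quite the right reason, since $\Gm$ may contain relations of unboundedly many arities; the correct observation is that the indicator instance has only $N=|A|^s$ variables, so the decreasing chain of solution sets in $A^N$ obtained by conjoining the preservation constraints one at a time must stabilize after finitely many steps, at which point the corresponding finite subset of $\Gm$ already pp-defines $P$. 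Second, the case $R=\emptyset$ (where $s=0$) is a known degenerate edge case of the BKKR/Geiger theorem that your construction does not literally cover; this is harmless in the context of the paper, where constraint languages are assumed to contain constant relations and the relevant relations are nonempty, but it is worth flagging if you intend the proof to stand on its own.
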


\begin{corollary}[\cite{JeavonsCG97,Bulatov20:ideal}]\label{cor:polymorphism-reduction}
Let $\Gm,\Dl$ be constraint languages on a set $D$, $\Dl$ finite, and $\Pol(\Gm)\sse\Pol(\Dl)$. Then $\CSP(\Dl)$ is polynomial time reducible to $\CSP(\Gm)$, and $\IMP(\Dl)$ [$\IMP_d(\Dl)$] is polynomial time reducible to $\IMP(\Gm)$ [$\IMP_d(\Gm)$, respectively].
\end{corollary}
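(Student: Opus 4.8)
The plan is to derive the corollary by chaining the Galois correspondence of Proposition~\ref{pro:galois} with the pp-definability reduction of Theorem~\ref{thm:Multi-sorted-pp-reduction}, viewing the one-sorted languages $\Gm,\Dl$ as the special case $|T|=1$ of the multi-sorted framework. First I would observe that, since $\Dl$ is finite, it suffices to treat its relations one at a time. Fix $R\in\Dl$. The hypothesis $\Pol(\Gm)\sse\Pol(\Dl)$ gives in particular $\Pol(\Gm)\sse\Pol(R)$, so Proposition~\ref{pro:galois} yields that $R$ is pp-definable in $\Gm$. Doing this for each of the finitely many $R\in\Dl$ shows that $\Gm$ pp-defines $\Dl$ in the sense of Definition~\ref{def:pp-def}.

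Next I would invoke Theorem~\ref{thm:Multi-sorted-pp-reduction} directly: as $\Gm$ pp-defines $\Dl$, we obtain a polynomial-time reduction from $\IMP(\Dl)$ to $\IMP(\Gm)$, which is the second claim. (Because we are using pp-\emph{definitions} rather than pp-\emph{interpretations}, no change of domain occurs and the reduction even respects the degree bound, but for the unrestricted $\IMP$ this refinement is not needed.)

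For the CSP half I would spell out the classical substitution argument of Jeavons, Cohen and Gyssens: given an instance $\cP$ of $\CSP(\Dl)$, replace every constraint $\langle\bs,R\rangle$ by the conjunction of $\Gm$-constraints and equalities witnessing the fixed pp-definition of $R$, adding fresh variables for the existentially quantified ones, and then check that a solution of the resulting $\CSP(\Gm)$-instance restricts to a solution of $\cP$ and conversely. Since $\Dl$ is finite, the pp-definitions are computed once and for all, so the transformation is carried out in polynomial time.

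I do not expect a genuine obstacle here — the whole argument is bookkeeping on top of Proposition~\ref{pro:galois} and Theorem~\ref{thm:Multi-sorted-pp-reduction}. The only point requiring a little care is the finiteness of $\Dl$, which is what allows the finitely many pp-definitions to be precomputed and substituted uniformly; without it one would need a more careful accounting of how the pp-formulas are presented in the input.
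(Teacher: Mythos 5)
Your argument is correct and is exactly the one the paper intends: the corollary is stated with citations to the Galois correspondence and pp-definability reductions, and your chain (Proposition~\ref{pro:galois} gives relation-by-relation pp-definability of $\Dl$ in $\Gm$, finiteness of $\Dl$ lets the finitely many pp-formulas be hard-wired into the reduction, Theorem~\ref{thm:Multi-sorted-pp-reduction} in the one-sorted $|T|=1$ case gives the $\IMP$ reduction, and the classical Jeavons--Cohen--Gyssens substitution gives the $\CSP$ reduction) is the standard derivation. The parenthetical observation that pp-definitions, unlike pp-interpretations, do not change the domain and hence preserve the degree bound is also accurate and is the reason the corollary holds uniformly for $\IMP$ rather than only $\IMP_d$ with a degree blow-up.
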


We will need a version of polymorphisms adapted to multi-sorted relations. Let $\cD=\{D_t\mid t\in T\}$ be a collection of sets. A multi-sorted operation on $\cD$ is a \emph{functional symbol} $f$ with associated \emph{arity} $k$ along with an interpretation $f^{D_t}$ of $f$ on every set $D_t\in\cD$, which is a $k$-ary operation on $D_t$. A multi-sorted operation $f$ is said to be a \emph{(multi-sorted) polymorphism} of a multi-sorted relation $R\sse D_{t_1}\tm\dots\tm D_{t_n}$, $\vc tn\in T$, if for any $\vc\ba k\in R$ the tuple 
\[
f(\vc\ba k)=(f^{D_{t_1}}(a_{1,1}\zd a_{1,k})\zd f^{D_{t_n}}(a_{n,1}\zd a_{n,k}))
\]
belongs to $R$.

\begin{example}\label{exa:multisorted}
Note that for the sake of defining a multi-sorted operation, the collection $\cD$ does not have to be finite. Let $\cA$ be the class of all finite Abelian groups and $f$ a ternary functional symbol that is interpreted as the affine operation $f^\zA(x,y,z)=x-y+z$ on every $\zA\in\cA$, where $+,-$ are operations of $\zA$.  

Consider the multi-sorted binary relation $R\sse\zZ_2\tm\zZ_4$ over $\cD=\{\zZ_2,\zZ_4\}$ given by
\[
R=\{(0,1),(0,3),(1,0),(1,2)\}.
\]
It is straightforward to verify that $f$ is a polymorphism of $R$. For instance,
\[
f\left(\cl 01,\cl 10,\cl 12\right)=\cl{0-1+1}{1-0+2}=\cl 03\in R.
\]
To make sure that $f$ is a polymorphism of $R$ we of course have to check every combination of pairs from $R$.
\end{example}

The connection between multi-sorted polymorphisms and pp-definitions is more complicated than that in the one-sorted case \cite{BulatovJ03-multi-sorted}, and we do not need it here. However, we will need the following well known fact. 

\begin{lemma}
Let $R\sse D_1\tm\dots\tm D_n$, $f$ a $k$-ary polymorphism of $R$, and let $\vc fk$ be $m$-ary polymorphisms of $R$. Then the \emph{composition} of $f$ and $\vc fk$ given by
\[
g^{D_i}(\vc xm)=f^{D_i}(f_1^{D_i}(\vc xm)\zd f_k^{D_i}(\vc xm))
\]
for $i\in[n]$ is a polymorphism of $R$.
\end{lemma}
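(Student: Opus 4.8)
The statement to prove is the composition lemma: if $f$ is a $k$-ary polymorphism of $R \subseteq D_1 \times \dots \times D_n$ and $f_1, \dots, f_k$ are $m$-ary polymorphisms of $R$, then the composition $g$ defined by $g^{D_i}(\vc xm) = f(f_1^{D_i}(\vc xm), \dots, f_k^{D_i}(\vc xm))$ is a polymorphism of $R$.

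\medskip

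The plan is to verify the defining condition of a multi-sorted polymorphism directly from the definitions, by a straightforward unwinding of what it means for $g$ to preserve $R$ and then applying the preservation hypotheses for $f_1,\dots,f_k$ followed by the one for $f$. First I would fix $m$ arbitrary tuples $\ba_1, \dots, \ba_m \in R$, say $\ba_j = (a_{1,j}, \dots, a_{n,j})$ with $a_{i,j} \in D_i$, and the goal is to show that $g(\ba_1, \dots, \ba_m) \in R$, where by definition the $i$-th coordinate of this tuple is $g^{D_i}(a_{i,1}, \dots, a_{i,m})$. The natural intermediate objects are the $k$ tuples $\bb_1, \dots, \bb_k$ where $\bb_\ell$ is obtained by applying the polymorphism $f_\ell$ coordinatewise to $\ba_1, \dots, \ba_m$; that is, $\bb_\ell = (f_1^{D_1}\!\text{-type expression})$, more precisely the $i$-th coordinate of $\bb_\ell$ is $f_\ell^{D_i}(a_{i,1}, \dots, a_{i,m})$.

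\medskip

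The two main steps are then: (1) since each $f_\ell$ is a polymorphism of $R$ and $\ba_1, \dots, \ba_m \in R$, each $\bb_\ell$ belongs to $R$; and (2) since $f$ is a $k$-ary polymorphism of $R$ and $\bb_1, \dots, \bb_k \in R$, the tuple $f(\bb_1, \dots, \bb_k)$ belongs to $R$. It then remains only to observe that $f(\bb_1, \dots, \bb_k)$ is exactly $g(\ba_1, \dots, \ba_m)$: its $i$-th coordinate is $f^{D_i}$ applied to the $i$-th coordinates of $\bb_1, \dots, \bb_k$, which is $f^{D_i}(f_1^{D_i}(a_{i,1}, \dots, a_{i,m}), \dots, f_k^{D_i}(a_{i,1}, \dots, a_{i,m}))$, and this equals $g^{D_i}(a_{i,1}, \dots, a_{i,m})$ by the definition of $g$. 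This matching is the one place where multi-sortedness requires a little attention: one has to make sure the interpretation $f^{D_i}$ appearing in the definition of $g$ on sort $D_i$ is consistently the same interpretation used when $f$ is applied coordinatewise to tuples of $R$, but this is immediate from the definition of a multi-sorted operation.

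\medskip

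There is essentially no obstacle here — the lemma is "well known" precisely because it is a bookkeeping argument — so I would not expect to do anything clever. The only thing to be careful about is index discipline: keeping the "which tuple" index $j \in [m]$ distinct from the "which coordinate / which sort" index $i \in [n]$ distinct from the "which inner polymorphism" index $\ell \in [k]$, and writing the coordinatewise application correctly. A clean way to present it is to introduce notation like $f_\ell(\ba_1,\dots,\ba_m)$ for the coordinatewise application (already the notation used in the definition of a multi-sorted polymorphism) so that steps (1) and (2) each become a single invocation of the definition, and the final identity $g(\ba_1,\dots,\ba_m) = f(f_1(\ba_1,\dots,\ba_m),\dots,f_k(\ba_1,\dots,\ba_m))$ is checked coordinatewise in one line.
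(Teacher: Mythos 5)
Your proof is correct. Note that the paper does not actually prove this lemma---it is introduced with the phrase ``we will need the following well known fact'' and stated without proof---so there is no proof in the paper to compare against; your argument is precisely the standard bookkeeping verification (apply $f_1,\dots,f_k$ coordinatewise to get $k$ tuples in $R$, then apply $f$ coordinatewise to those, and match coordinates) that anyone would supply here.
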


\section{CSPs over Abelian groups}
\label{sec:Abelian-csps}
CSPs over Abelian groups, or more precisely problems of the form $\CSP(\Gm)$ where $\Gm$ is a constraint language closed under the affine polymorphism $x-y+z$ of an Abelian group, are well understood. However, they are usually considered as a special case of either arbitrary finite groups, in which case the coset generation algorithm applies \cite{FV98}, or as a special case of CSPs with a Mal'tsev polymorphism \cite{Bulatov06:simple,Idziak10:tractability}. In the IMP literature \cite{Mastrolilli19,Mastrolilli21:complexity,Bharathi-Minority,Bulatov20:ideal} such CSPs have been mainly considered from the point of view of systems of linear equations. Such a representation is necessary, because it is used to construct a Gr\"obner basis of the corresponding ideal. While it is true that every CSP given by a system of linear equations over some Abelian group can also be thought of as an instance of $\CSP(\Gm)$ for an appropriate language $\Gm$ closed under the affine operation, the converse is not true in general. For instance, the relation $R_E$ from Example~\ref{exa:z2-example} is invariant under the affine operation $x-y+z$ of $\zZ_2\tm\zZ_2$, but cannot be represented by a system of linear equations over this group. 

Therefore our goal in this section is to show that a CSP over an Abelian group can always be converted into a (multi-sorted) CSP that admits a representation by a system of linear equations (with caveats that will be discussed later), and then to demonstrate how a row-echelon form of such a system can be constructed, ready to be transformed into a Gr\"obner basis. 

\subsection{Abelian groups}\label{sec:AG}

In this section we state the facts about Abelian groups we will need in this paper. 

\begin{proposition}\label{pro:FTAG}
\begin{itemize}
    \item[(1)]
    (\emph{The Fundamental Theorem of Abelian Groups.}) Let $\zA$ be a finite Abelian group. Then $\zA=\zA_1\oplus\dots\oplus\zA_n$, where $\vc\zA n$ are cyclic groups.
    \item[(2)]
    There exists a decomposition from item (1), in which the order of each $\zA_i$ is a prime power.
\end{itemize}
\end{proposition}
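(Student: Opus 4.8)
The plan is to observe first that item~(1) is just a weakening of item~(2): a cyclic group of prime‑power order is in particular cyclic, so once a decomposition with prime‑power cyclic factors is in hand, item~(1) follows with nothing further to prove. I would therefore concentrate entirely on item~(2) and establish it in two stages — a \emph{primary decomposition} of $\zA$, followed by a decomposition of each primary component into cyclic groups. For the primary decomposition, write $|\zA|=p_1^{a_1}\cdots p_m^{a_m}$ with the $p_i$ distinct primes and let $\zA(p_i)=\{x\in\zA\mid p_i^{a_i}x=0\}$ be the $p_i$-primary subgroup. Choosing, by Bézout's identity, integers $u_i$ with $\sum_{i=1}^m u_i\cdot(|\zA|/p_i^{a_i})=1$, the maps $x\mapsto u_i\cdot(|\zA|/p_i^{a_i})\cdot x$ are idempotent endomorphisms of $\zA$ summing to the identity, and a routine verification shows $\zA=\zA(p_1)\oplus\dots\oplus\zA(p_m)$ as an internal direct sum. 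This reduces item~(2) to decomposing a finite Abelian $p$-group into a direct sum of cyclic $p$-groups.

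For that I would induct on $|\zA|$, now with $\zA$ a $p$-group (the trivial group being the base case). Pick $a\in\zA$ of maximal order $p^k$ and put $C=\ang a$. The heart of the argument is the claim that $C$ splits off, i.e.\ $\zA=C\oplus B$ for some subgroup $B\le\zA$; granting this, $B$ is a strictly smaller $p$-group, so the induction hypothesis applied to $B$ finishes the proof.

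To prove the claim I would pass to $\zA/C$, which by the induction hypothesis decomposes as $\bigoplus_{j=1}^r\ang{\ov{b_j}}$ with $\ov{b_j}$ of order $p^{k_j}$ (and $k_j\le k$, since images have order dividing that of $a$). The key step is to lift each $\ov{b_j}$ to an element $b_j'\in\zA$ of the \emph{same} order $p^{k_j}$: for any representative $b_j$ we have $p^{k_j}b_j=\ell_j a$ for some integer $\ell_j$, and applying $p^{k-k_j}$ together with the maximality consequence $p^k b_j=0$ forces $p^{k_j}\mid\ell_j$; writing $\ell_j=p^{k_j}\ell_j'$ and setting $b_j'=b_j-\ell_j'a$ gives $p^{k_j}b_j'=0$ while $b_j'$ still maps to $\ov{b_j}$, so $b_j'$ has order exactly $p^{k_j}$. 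Taking $B=\ang{b_1'}+\dots+\ang{b_r'}$, one checks $B+C=\zA$ (because $B$ surjects onto $\zA/C$) and $B\cap C=0$ (if $\sum_j n_j b_j'\in C$ then its image in $\zA/C$ vanishes, forcing $p^{k_j}\mid n_j$ and hence $n_j b_j'=0$ for every $j$), which yields $\zA=C\oplus B$ and, incidentally, that the sum defining $B$ is direct.

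I expect the only genuine obstacle to be the bookkeeping in the lifting step — arranging representatives of the cyclic generators of $\zA/C$ that simultaneously retain their order and generate a complement of $C$ — while the primary decomposition and the two reductions are entirely routine. As a shortcut one could instead invoke the structure theorem for finitely generated modules over a principal ideal domain, applied to $\zA$ regarded as a $\zZ$-module: being finite it is torsion, so the free part is zero, and the elementary‑divisor form of that theorem gives precisely the statement of item~(2), with item~(1) following as above.
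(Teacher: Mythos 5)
Your argument is correct, but note that the paper itself gives no proof of this proposition — it is the classical Fundamental Theorem of Finite Abelian Groups, stated as a known fact and used as a black box. Your write-up is a standard and sound self-contained proof: the Bézout/orthogonal-idempotent primary decomposition is fine, the lifting step $b_j' = b_j - \ell_j' a$ correctly preserves both the image in $\zA/C$ and the order $p^{k_j}$ (using that the exponent of the $p$-group equals the maximal element order $p^k$), and the checks $B+C=\zA$, $B\cap C=0$ go through as you describe. The alternative you mention at the end — specializing the structure theorem for finitely generated modules over a PID to $\zZ$-modules — is the other standard route and is equally valid; since the paper cites rather than proves the result, either would serve, and there is nothing to reconcile against an "official" proof here.
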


Let $\zA$ be an Abelian group. By Proposition~\ref{pro:FTAG}, $\zA$ can be decomposed into $\zA=\zA_1\oplus\dots\oplus\zA_n$ where $\zA_i=\zZ_{q_i^{\ell_i}}$. Without loss of generality assume that for some $\vc ks$ it holds that $q_1=\dots=q_{k_1}=p_1$, $q_{k_1+1}=\dots=q_{k_1+k_2}=p_2,\dots,$ $q_{k_1+\dots+k_{s-1}+1}=\dots=q_{k_1+\dots+k_s}=p_s$. We also change the notation for $\ell_i$ so that $\zA$ can be represented as
\[
    \zA = \zZ_{p_1^{\ell_{1,1}}}\oplus\dots \oplus\zZ_{p_1^{\ell_{1,k_1}}}\oplus \zZ_{p_2^{\ell_{2,1}}}\dots\oplus\zZ_{p_2^{\ell_{2,k_2}}} \oplus\dots \oplus \zZ_{p_s^{\ell_{s,k_s}}}.
\]
Later it will also be convenient to assume that $\ell_{r,k_r}$ is maximal among $\ell_{r,1},\dots,\ell_{r,k_r}$. We will denote this value by $m_r$.


Next we describe relations invariant under an affine polymorphism of an Abelian group in group-theoretic form, cf.\ Example~\ref{exa:multisorted}. Recall that for an Abelian group $\zA$ and its subgroup $\zB$, a \emph{coset} of $\zA$ modulo $\zB$ is a set of the form $a_0+\zB=\{a_0+a\mid a\in\zB\}$. The following statement is folklore, but we give a proof for completeness.

\begin{lemma}\label{lem:coset}
Let $R$ be a subset of the Cartesian product of Abelian groups $\zA_1\tm\dots\tm\zA_n$. Then $R$ is invariant with respect to the (multi-sorted) affine operation $f(x,y,z)=x-y+z$ of the groups $\vc\zA n$ if and only if $R$ is a coset of $\zA=\zA_1\tm\dots\tm\zA_n$, viewed as an Abelian group, modulo some subgroup $\zB$ of $\zA$.
\end{lemma}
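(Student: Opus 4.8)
The plan is to prove both directions of the equivalence by the standard closure-under-affine-combinations argument, once we have dealt with the trivial case $R=\emptyset$ (the empty set is vacuously invariant and is also the coset of... well, strictly speaking a coset is nonempty, so we should either assume $R\neq\emptyset$ or note that the statement is read with that caveat). So assume $R\neq\emptyset$.

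First I would do the easy direction: suppose $R=a_0+\zB$ for a subgroup $\zB\le\zA$. Take any three tuples $\bx,\by,\bz\in R$, write $\bx=a_0+\bb_1$, $\by=a_0+\bb_2$, $\bz=a_0+\bb_3$ with $\bb_i\in\zB$. Then $f(\bx,\by,\bz)=\bx-\by+\bz=a_0+(\bb_1-\bb_2+\bb_3)$, and since $\zB$ is a subgroup it is closed under $\bb_1-\bb_2+\bb_3$, so $f(\bx,\by,\bz)\in a_0+\zB=R$. Since the componentwise operation on $\zA=\zA_1\tm\dots\tm\zA_n$ is exactly the tuple of the affine operations $f^{\zA_i}$, this is precisely the statement that $f$ is a (multi-sorted) polymorphism of $R$.

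For the converse, suppose $R$ is invariant under $f$. Fix any element $a_0\in R$ (possible since $R\neq\emptyset$) and set $\zB:=R-a_0=\{r-a_0\mid r\in R\}$; I want to show $\zB$ is a subgroup of $\zA$. It contains $0=a_0-a_0$. For closure under the group operation and inverses, the key observation is that for any $\bb_1,\bb_2\in\zB$, writing $\bb_i=r_i-a_0$ with $r_i\in R$, we have $\bb_1-\bb_2=f(r_1,a_0,r_2)-a_0$ wait — let me recompute: $f(r_1,r_2,a_0)=r_1-r_2+a_0$, hmm. Actually the clean computation is $\bb_1 - \bb_2 = (r_1-a_0)-(r_2-a_0) = r_1-r_2 = f(r_1,r_2,a_0)-a_0$ is wrong since $f(r_1,r_2,a_0)=r_1-r_2+a_0$, so $f(r_1,r_2,a_0)-a_0 = r_1-r_2 = \bb_1-\bb_2+0$... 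I need $\bb_1+\bb_2$ and $-\bb_1$. For $-\bb_1$: $-\bb_1 = a_0-r_1 = f(a_0,r_1,a_0)-a_0$ since $f(a_0,r_1,a_0)=a_0-r_1+a_0=2a_0-r_1$, and $2a_0-r_1-a_0=a_0-r_1=-\bb_1$; and $f(a_0,r_1,a_0)\in R$ by invariance, so $-\bb_1\in\zB$. For $\bb_1+\bb_2$: $\bb_1+\bb_2 = r_1+r_2-2a_0 = f(r_1,a_0,r_2)-a_0$ since $f(r_1,a_0,r_2)=r_1-a_0+r_2$ and subtracting $a_0$ gives $r_1+r_2-2a_0$; and $f(r_1,a_0,r_2)\in R$, so $\bb_1+\bb_2\in\zB$. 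Hence $\zB$ is closed under addition and inverses and contains $0$, so it is a subgroup, and $R=a_0+\zB$ by construction. This completes the proof.

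I do not expect any serious obstacle here; the only things to be careful about are (i) explicitly noting the $R=\emptyset$ caveat so the biconditional is stated correctly, (ii) making sure all three arguments of $f$ in each identity actually lie in $R$ before invoking invariance (they do, being either $a_0$ or elements $r_i\in R$), and (iii) recording that the componentwise affine operation on the product group coincides with applying $f^{\zA_i}$ in coordinate $i$, which is what makes "$f$ is a multi-sorted polymorphism of $R$" literally the same as "$R$ is closed under affine combinations in $\zA$." Everything else is a one-line group-axiom verification.
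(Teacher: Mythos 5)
Your proof is correct and follows essentially the same approach as the paper: for the forward direction, write elements of the coset as $a_0+\bb_i$ and observe the affine combination stays in the coset; for the converse, set $\zB = R - a_0$ and show it is a subgroup. The one small divergence is that the paper, working with finite groups, notes that finiteness of $\zA$ makes closure under addition alone sufficient, and then verifies only $\ba+\bb\in\zB$ via $f(\ba+\ba_0,\ba_0,\bb+\ba_0)$; you instead verify closure under inverses directly via $f(a_0,r_1,a_0)$, which makes your argument independent of finiteness (at the cost of one extra computation). Your explicit remark about the $R=\emptyset$ caveat is also a reasonable point of care that the paper glosses over by writing ``fix $\ba_0\in R$'' without comment. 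Both are minor refinements rather than a different route.
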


\begin{proof}
If $R$ is a coset of $\zA$ modulo a subgroup $\zB$, fix $\ba_0\in R$. Then $R=\{\ba_0+\ba\mid \ba\in\zB\}$. For any $\ba,\bb,\bc\in\zB$ we have 
\[
f(\ba+\ba_0,\bb+\ba_0,\bc+\ba_0)=(\ba-\bb+\bc)+\ba_0\in R,
\]
as $\ba-\bb+\bc\in\zB$.

Conversely, suppose $R$ is invariant under $f$. Fix $\ba_0\in R$ and set $\zB=\{\ba-\ba_0\mid \ba\in R\}$. We need to show that $\zB$ is a subgroup of $\zA$. Since $\zA$ is finite it suffices to show that $\zB$ is closed under addition. Let $\ba,\bb\in\zB$, then $\ba+\ba_0,\bb+\ba_0\in R$. As $R$ is invariant under $f$,
\[
f(\ba+\ba_0,\ba_0,\bb+\ba_0)=(\ba+\ba_0)-\ba_0+(\bb+\ba_0)=(\ba+\bb)+\ba_0\in R,
\]
implying $\ba+\bb\in\zB$.
\end{proof}

\subsection{PP-interpretations in Abelian groups}

In this section we show that any constraint language invariant under an affine operation of some Abelian group can be pp-interpreted by a multi-sorted constraint language over very simple groups. We use the notation from Section~\ref{sec:AG}.

\begin{proposition}\label{pro:pp-interpret}
Let $\Dl$ be a finite constraint language invariant under the affine operation of $\zA$. Then there is a multi-sorted constraint language $\Gm$ over $\zZ_{p_1^{m_1}}\zd\zZ_{p_s^{m_s}}$ invariant under the affine operation of $\zZ_{p_1^{m_1}}\zd\zZ_{p_s^{m_s}}$ such that $\Gm$ pp-interprets $\Dl$.
\end{proposition}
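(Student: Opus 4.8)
The plan is to use the Fundamental Theorem of Abelian Groups (Proposition~\ref{pro:FTAG}) to decompose $\zA = \zZ_{p_1^{\ell_{1,1}}}\oplus\dots\oplus\zZ_{p_s^{\ell_{s,k_s}}}$, and then realise each cyclic factor $\zZ_{p_r^{\ell_{r,j}}}$ inside the ``largest'' factor $\zZ_{p_r^{m_r}}$ for the corresponding prime (recall $m_r = \max_j \ell_{r,j}$). Concretely, take $\mc{E} = \{\zA\}$ (a single sort, the relations of $\Dl$ live over powers of $\zA$) and let $\mc{D} = \{\zZ_{p_1^{m_1}},\dots,\zZ_{p_s^{m_s}}\}$ be the domains over which $\Gm$ is defined. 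For the single sort of $\mc{E}$ set $\ell_s = k_1+\dots+k_s$ (the total number of cyclic factors) with signature listing $\zZ_{p_r^{m_r}}$ with multiplicity $k_r$, and define $F \subseteq \prod_{r,j} \zZ_{p_r^{m_r}}$ to be the tuples whose $(r,j)$-coordinate is divisible by $p_r^{\,m_r - \ell_{r,j}}$; the onto map $\pi : F \to \zA$ sends such a tuple to the element of $\zA$ whose $(r,j)$-component in $\zZ_{p_r^{\ell_{r,j}}}$ is the coordinate divided by $p_r^{\,m_r-\ell_{r,j}}$. This $\pi$ is a group epimorphism from the subgroup $F$ of $\prod \zZ_{p_r^{m_r}}$ onto $\zA$ with kernel exactly the ``high-order'' elements, so it is clean to work with.

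Next I would verify the three pp-definability requirements of Definition~\ref{pp-interpret-multi-sorted}, and here the key tool is Lemma~\ref{lem:coset}: a multi-sorted relation over a product of Abelian groups is invariant under the affine operation $x-y+z$ precisely when it is a coset of a subgroup. So it suffices to let $\Gm$ be the constraint language consisting of \emph{all} subcosets of all finite powers of $\zZ_{p_1^{m_1}}\zd\zZ_{p_s^{m_s}}$ (or, for a finite language, the finitely many that actually arise) — this is automatically invariant under the affine operation. Then: (1) $F$ is a subgroup, hence a coset, of a power of the $\zZ_{p_r^{m_r}}$'s, so $F\in\Gm$; (2) the $\pi$-preimage of the equality relation on $\zA$ is $\{(\mb a,\mb b)\in F\tm F\mid \pi(\mb a)=\pi(\mb b)\}$, which is the kernel-coset relation $\mb a - \mb b \in \ker\pi$, again a subgroup of a power of the $\zZ_{p_r^{m_r}}$'s and hence in $\Gm$; (3) for any $R\in\Dl$, which by Lemma~\ref{lem:coset} is a coset of a subgroup of $\zA^k$, its $\pi$-preimage $\pi^{-1}(R)$ is the preimage of a coset under the group homomorphism $\pi^k$ restricted to $F^k$, hence itself a coset of a subgroup of a power of the $\zZ_{p_r^{m_r}}$'s, so it lies in $\Gm$. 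Finally, since every relation in $\Gm$ is a coset of a subgroup, $\Gm$ is invariant under the affine operation of $\zZ_{p_1^{m_1}}\zd\zZ_{p_s^{m_s}}$ by Lemma~\ref{lem:coset}, and by construction $\Gm$ pp-defines $F$, the $\pi$-preimages of the equalities, and the $\pi$-preimages of all relations of $\Dl$; that is exactly the assertion that $\Gm$ pp-interprets $\Dl$.

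The main obstacle I anticipate is bookkeeping rather than conceptual depth: one must be careful that $\Gm$ is genuinely a language over the \emph{small} domains $\zZ_{p_r^{m_r}}$ (not over $\zA$), that the signatures match up correctly under the indexing by pairs $(r,j)$, and — if one wants $\Gm$ finite when $\Dl$ is finite — that only finitely many cosets need to be thrown in, which follows because $\Dl$ is finite and each pp-interpretation step produces boundedly many new relations of bounded arity. A secondary point worth stating carefully is why $\pi$ is onto and why $F$ really is the full preimage structure needed: this is just the observation that multiplication by $p_r^{\,m_r-\ell_{r,j}}$ is an isomorphism from $\zZ_{p_r^{\ell_{r,j}}}$ onto the unique subgroup of $\zZ_{p_r^{m_r}}$ of that order. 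Once these routine checks are in place, the proposition follows, and it feeds directly into the reduction of Theorem~\ref{thm:pp-interpret-multi-sorted} to pass from $\IMP_d(\Dl)$ to $\IMP_{O(d)}(\Gm)$.
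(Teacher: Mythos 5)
Your construction of $F$, $\pi$, and $\Gm$ is exactly the one used in the paper's proof, so the approach is essentially the same; the genuine difference lies in how you verify that $\Gm$ is invariant under the affine operation of $\zZ_{p_1^{m_1}}\zd\zZ_{p_s^{m_s}}$. The paper carries out this check by direct component-wise computation, pushing the scaling factors $(p_r^{m_r-\ell_{r,j}})^{-1}$ through $x-y+z$ for $F$, for $\pi^{-1}(=_\zA)$, and for each $\pi^{-1}(R)$ with $R\in\Dl$ — this is the bulk of the paper's proof. You instead observe that $\pi$ restricted to $F$ is a group isomorphism onto $\zA$ (because each coordinate map $x\mapsto (p_r^{m_r-\ell_{r,j}})^{-1}x$ is an isomorphism $p_r^{m_r-\ell_{r,j}}\zZ_{p_r^{m_r}}\to\zZ_{p_r^{\ell_{r,j}}}$), and then appeal to Lemma~\ref{lem:coset}: $F$ is a subgroup of $\prod_{r,j}\zZ_{p_r^{m_r}}$, $\pi^{-1}(=_\zA)=\{(\mb a,\mb b)\in F\tm F : \mb a-\mb b\in\ker\pi\}$ is a subgroup of the square, and $\pi^{-1}(R)$ is the preimage of a coset under a group homomorphism and hence itself a coset; Lemma~\ref{lem:coset} then gives the affine-invariance in one stroke. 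This replaces a fiddly computation by a structural argument and makes the role of Lemma~\ref{lem:coset} explicit, which is arguably cleaner. The one point you should make fully explicit in a write-up is that $\pi$ is a homomorphism on $F$ — you say it informally ("group epimorphism"), and everything in the coset chain hinges on it, including the surjectivity of $\pi^k$ onto $\zA^k$ used when pulling back a coset of $\zA^k$. With that spelled out, the argument is complete and correct.
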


\begin{proof}
For a natural number $r$ and an Abelian group $\zB$ by $r\zB$ we denote the subgroup $r\zB=\{ra\mid a\in \zB\}$ of $\zB$. Observing that $\zZ_{p^\ell}$ is isomorphic to $p^{m-\ell}\zZ_{p^m}$ let 
\[
F=p_1^{m_1-\ell_{1,1}}\zZ_{p_1^{m_1}}\tm\dots\tm p_1^{m_1-\ell_{1,k_1}}\zZ_{p_1^{m_1}}\tm\dots\tm p_1^{m_s-\ell_{s,k_{s-1}+1}}\zZ_{p_s^{m_s}}\tm\dots\tm p_1^{m_s-\ell_{s,k_s}}\zZ_{p_s^{m_s}}, 
\]
and define a mapping $\pi:F\to \zA$ by
\begin{align*}
    &\pi(x_{1,1},\dots,x_{1,k_1},\dots,x_{s,1},\dots,x_{s,k_s}) =\\ 
    &\qquad((p_1^{m_1-\ell_{1,1}})^{-1}x_{1,1},(p_1^{m_1-\ell_{1,2}})^{-1}x_{1,2},\dots,x_{1,k_1},\dots,(p_s^{m_s-\ell_{s,k_{s-1}+1}})^{-1}x_{s,1},\dots,(p_s^{m_s-\ell_{s,k_s}})^{-1}x_{s,k_s}).
\end{align*}
Note that the values of the form $(p_r^{m_r-\ell_{r,i}})^{-1}x_{r,i}\in\zZ_{p_r^{\ell_{r,i}}}$ are well defined because $x_{r,i}\in p_r^{m_r-\ell_{r,i}}\zZ_{p_r^{m_r}}$ and there is only one $z\in\zZ_{p_r^{\ell_{r,i}}}$ such that $x_{r,i}=p_r^{m_r-\ell_{r,i}}z$.
Then we set $$\Gm=\{F\}\cup\{\pi^{-1}(=_\zA)\}\cup\{\pi^{-1}(R)\mid R\in\Dl\}.$$ The language $\Gm$ contains $F$, the preimage of the equality relation on $\zA$, and the preimages of all the relations from $\Dl$. Therefore, by the definition of pp-interpretability $\Gm$ pp-interprets $\Dl$. It remains to show that $\Gm$ is invariant under the affine operation of $\zZ_{p_1^{m_1}}\zd\zZ_{p_s^{m_s}}$ as a multi-sorted polymorphism. 

The set $F$ is clearly invariant under any operation of the groups $\zZ_{p_1^{m_1}}\zd\zZ_{p_s^{m_s}}$, as it is a Cartesian product of subgroups of those groups. For $\ba,\bb\in F$ we have $(\ba,\bb)\in\pi^{-1}(=_\zA)$ if and only if $p_r^{m_r-\ell_{r,i}}a_{r,i}=p_r^{m_r-\ell_{r,i}}b_{r,i}$ in $\zZ_{p_r^{m_r}}$, where $i\in[k_r]$. So, if $(\ba^1,\bb^1),(\ba^2,\bb^2),(\ba^3,\bb^3)\in\pi^{-1}(=_\zA)$, $\bc=\ba^1-\ba^2+\ba^3, \bd=\bb^1-\bb^2+\bb^3$ then for any $r\in[s]$ and $i\in[k_r]$ we have 
\begin{align*}
p_r^{m_r-\ell_{r,i}}c_{r,i} &=p_r^{m_r-\ell_{r,i}}(a^1_{r,i}-a^2_{r,i}+a^3_{r,i})\\
&=p_r^{m_r-\ell_{r,i}}a^1_{r,i}-p_r^{m_r-\ell_{r,i}}a^2_{r,i}+p_r^{m_r-\ell_{r,i}}a^3_{r,i}\\
& =p_r^{m_r-\ell_{r,i}}b^1_{r,i}-p_r^{m_r-\ell_{r,i}}b^2_{r,i}+p_r^{m_r-\ell_{r,i}}b^3_{r,i}\\
&=p_r^{m_r-\ell_{r,i}}(b^1_{r,i}-b^2_{r,i}+b^3_{r,i})\\
&=p_r^{m_r-\ell_{r,i}}d_{r,i}.
\end{align*}

Now, let $R\in\Dl$ be a $t$-ary relation and $R'=\pi^{-1}(R)$. We show that $x-y+z$ is a polymorphism of $R'$. Let $\ba',\bb',\bc'\in R'$, $\ba=\pi(\ba'),\bb=\pi(\bb'),\bc=\pi(\bc')$, $\bd'=\ba'-\bb'+\bc'$, and $\bd=\ba-\bb+\bc$. It suffices to show that $\pi(\bd')=\bd$, as it implies $\bd'\in R'$, since $\bd\in R$. Each of the tuples $\ba',\bb',\bc',\bd'$ is a $t\cdot n$-tuple. We will denote its components by $a'_{i,r,j}$ ($b'_{i,r,j},c'_{i,r,j},d'_{i,r,j}$), $i\in[t],r\in[s],j\in[k_r]$ so that $\pi(a'_{i,1,1}\zd a'_{i,s,k_s})=a_i$, ($\pi(b'_{i,1,1}\zd b'_{i,s,k_s})=b_i$, $\pi(c'_{i,1,1}\zd c'_{i,s,k_s})=c_i$). For any $i\in[t]$, $r\in[s]$, and $j\in[k_r]$ we have 
\[
(p_r^{m_r-\ell_{r,j}})^{-1}d'_{i,r,j}=(p_r^{m_r-\ell_{r,j}})^{-1}a'_{i,r,j}-(p_r^{m_r-\ell_{rj}})^{-1}b'_{i,r,j}+(p_r^{m_r-\ell_{r,j}})^{-1}c'_{i,r,j}.
\]
Therefore,
\begin{align*}
    \pi(d'_{i,1,1}\zd d'_{i,s,k_s}) 
    &=\left((p_1^{m_1-\ell_{1,1}})^{-1}d'_{i,1,1},\dots ,(p_s^{m_s-\ell_{s,k_s}})^{-1}d'_{i,s,k_s}\right)\\
    &=\left((p_1^{m_1-\ell_{1,1}})^{-1}(a'_{i,1,1}-b'_{i,1,1}+c'_{i,1,1}),\dots ,(p_s^{m_s-\ell_{s,k_s}})^{-1}(a'_{i,s,k_s}-b'_{i,s,k_s}+c'_{i,s,k_s})\right)\\
    &=\left((p_1^{m_1-\ell_{1,1}})^{-1}a'_{i,1,1},\dots ,(p_s^{m_s-\ell_{s,k_s}})^{-1}a'_{i,s,k_s}\right)\\
    & \ \ \ -\left((p_1^{m_1-\ell_{1,1}})^{-1}b'_{i,1,1},\dots ,(p_s^{m_s-\ell_{s,k_s}})^{-1}b'_{i,s,k_s}\right)\\
    & \ \ \ +\left((p_1^{m_1-\ell_{1,1}})^{-1}c'_{i,1,1},\dots ,(p_s^{m_s-\ell_{s,k_s}})^{-1}c'_{i,s,k_s}\right)\\
    &=a_i-b_i+c_i=d_i.
\end{align*}
\end{proof}

A nice property of languages over $\zZ_{p_1^{m_1}}\zd\zZ_{p_s^{m_s}}$, with $p_i\neq p_j$ when $i\neq j$, is that any (multi-sorted) relation can be decomposed into relations of the same sort. Let $R\sse D_1\tm\dots\tm D_n$ and $I=\{\vc ik\}\sse[n]$. For $\ba=(a_1,\dots,a_n)\in R$ by $\pr_I\ba$ we denote the tuple $(a_{i_1}\zd a_{i_k})$, and $\pr_IR=\{\pr_I\ba\mid\ba\in R\}$.

For two relations $R_1$ and $R_2$ of arities $r$ and $t$ respectively, their Cartesian product is the relation $R$ of arity $r+t$ defined as
\begin{align*}
    R(x_1,\dots,x_r,y_1,\dots,y_t)&=R_1\times R_2\\
    &=\{(x_1,\dots,x_r,y_1,\dots,y_t)\mid (x_1,\dots,x_r)\in R_1 \land (y_1,\dots,y_t)\in R_2\}.
\end{align*}

\begin{lemma}\label{lm:decomposition}
    Let $R(x_{1,1},\dots,x_{1,k_1},\dots,x_{s,1},\dots,x_{s,k_s})$ be such that $x_{i,j}$ has domain $\zZ_{p_i^{m_i}}$ and $R$ is invariant under the affine operation. Then $R$ is decomposable as follows
    \begin{align*}
     &R(x_{1,1},\dots,x_{1,k_1},\dots,x_{s,1},\dots,x_{s,k_s})=\\
     &\qquad\qquad\qquad
     (\pr_{(1,1),\dots,(1,k_1)}R)(x_{1,1},\dots,x_{1,k_1})\times\dots\times (\pr_{(s,1),\dots,(s,k_s)}R)(x_{s,1},\dots,x_{s,k_s}).
    \end{align*}
\end{lemma}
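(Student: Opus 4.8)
The plan is to exploit the fact that the moduli $p_1^{m_1},\dots,p_s^{m_s}$ are pairwise coprime, so that an affine-closed relation $R$ over the product of the corresponding cyclic groups is a coset of a subgroup $\zB$ of $\zZ_{p_1^{m_1}}^{k_1}\oplus\dots\oplus\zZ_{p_s^{m_s}}^{k_s}$ (Lemma~\ref{lem:coset}), and that such a subgroup splits as a direct sum of its Sylow $p_i$-components. The key point is that the variables naturally partition into blocks $B_r=\{x_{r,1},\dots,x_{r,k_r}\}$, one block per prime $p_r$, and the whole group is a direct sum $G=G_1\oplus\dots\oplus G_s$ with $G_r=\zZ_{p_r^{m_r}}^{k_r}$; I want to show $\zB=\zB_1\oplus\dots\oplus\zB_s$ with $\zB_r\le G_r$, and hence $R=\pr_{B_1}R\times\dots\times\pr_{B_s}R$.

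First I would reduce to the subgroup case: fix $\ba_0\in R$ (if $R=\emptyset$ the statement is vacuous, since both sides are empty — or one can note projections of $\emptyset$ are $\emptyset$), set $\zB=R-\ba_0$, a subgroup of $G$ by Lemma~\ref{lem:coset}, and observe that the claimed decomposition of $R$ follows from the corresponding decomposition $\zB=\bigoplus_r\zB_r$ with $\zB_r=\pr_{B_r}\zB$, after translating back by $\ba_0$ (note $\pr_{B_r}R=\pr_{B_r}\ba_0+\pr_{B_r}\zB$). Second, the containment $\zB\subseteq\bigoplus_r\pr_{B_r}\zB$ is immediate; the reverse containment is where the coprimality is used. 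Given $\bb_r\in\pr_{B_r}\zB$ for each $r$, lift each to an element $\widetilde{\bb}_r\in\zB$ supported (after projection) on block $r$; the obstacle is that $\widetilde{\bb}_r$ may have nonzero entries outside block $r$, so I must kill those. The standard trick: since $|G_r|=p_r^{m_r k_r}$ is coprime to $|G_{r'}|$ for $r'\ne r$, let $N_r$ be the order of the subgroup $\bigoplus_{r'\ne r}G_{r'}$ (or simply $N_r=\prod_{r'\ne r}p_{r'}^{m_{r'}k_{r'}}$); then $N_r$ is invertible modulo $p_r^{m_r}$, so choosing an integer $t_r$ with $t_r N_r\equiv 1\pmod{p_r^{m_r}}$, the element $t_r N_r\widetilde{\bb}_r\in\zB$ has its block-$r'$ entries ($r'\ne r$) annihilated (each lies in $G_{r'}$, killed by $N_r$) while its block-$r$ entries equal $\bb_r$. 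Thus $t_rN_r\widetilde{\bb}_r$ is a genuine element of $\zB\cap G_r$ projecting to $\bb_r$, and summing over $r$ gives an element of $\zB$ projecting to $(\bb_1,\dots,\bb_s)$.

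The main obstacle is precisely this lifting step: one must be careful that multiplication by $N_r$ really does annihilate every coordinate outside block $r$ (because in $\zZ_{p_{r'}^{m_{r'}}}$ the exponent of the group divides $N_r$) and simultaneously acts invertibly on block $r$ (because $\gcd(N_r,p_r)=1$), so that no information about $\bb_r$ is lost. Once $\zB=\bigoplus_r\zB_r$ is established, I would record $\zB_r\subseteq G_r$, translate by $\ba_0$, and conclude $R=\pr_{B_1}R\times\dots\times\pr_{B_s}R$ as stated; the fact that each $\pr_{B_r}R$ is itself affine-closed is clear since projection commutes with the componentwise application of $x-y+z$.
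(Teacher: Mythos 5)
Your proof is correct, and it takes a recognizably different route from the paper's, although both ultimately rest on Chinese Remainder Theorem idempotents.

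The paper stays entirely in the polymorphism language: it reduces to the two-block case, picks $u,v$ with $u\equiv 1$, $v\equiv 0\pmod{p_1^{m_1}}$ and $u\equiv 0$, $v\equiv 1$ modulo the product of the remaining prime powers, proves by induction that every operation of the form $u'x+v'y+(1-u'-v')z$ is a composition of $x-y+z$ with itself and is hence a polymorphism of $R$, and then applies the resulting polymorphism to $(\ba,\bb,\ba)$ to obtain $(1-v)\ba+v\bb=(\ba_1,\bb_2)\in R$. You instead invoke Lemma~\ref{lem:coset} to pass to the subgroup $\zB=R-\ba_0$, and prove the equivalent statement $\zB=\bigoplus_r\pr_{B_r}\zB$ directly by a lifting argument: take any $\widetilde{\bb}_r\in\zB$ projecting to $\bb_r$ on block $r$, and multiply by the CRT idempotent $t_rN_r$ (with $N_r$ the order of $\bigoplus_{r'\ne r}G_{r'}$ and $t_rN_r\equiv 1\pmod{p_r^{m_r}}$) to kill all off-block coordinates while fixing block $r$; the crucial fact that $t_rN_r\widetilde{\bb}_r$ remains in $\zB$ is simply closure of a subgroup under integer multiples, so the paper's inductive composition argument is unnecessary. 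Your version also treats all $s$ blocks at once rather than reducing to two. What you lose is a small amount of self-containment: the paper never needs to leave the polymorphism/relation setting, whereas you route through the coset characterization and the Sylow-type decomposition of subgroups of $\bigoplus_r\zZ_{p_r^{m_r}}^{k_r}$. What you gain is a shorter and arguably more transparent argument; in particular, the paper's step "$g(\ba,\bb,\ba)=(\ba_1,\bb_2)$" is essentially your $t_rN_r\widetilde{\bb}_r$ computation in disguise, but without the detour of showing $g$ is expressible from $x-y+z$.
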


\begin{proof}
It suffices to show that 
\begin{align*}
     &R(x_{1,1},\dots,x_{1,k_1},\dots,x_{s,1},\dots,x_{s,k_s})=\\
     &\qquad\qquad\qquad
     (\pr_{(1,1),\dots,(1,k_1)}R)(x_{1,1},\dots,x_{1,k_1})\times (\pr_{(2,1),\dots,(s,k_s)}R)(x_{2,1},\dots,x_{s,k_s}).
    \end{align*}
    
Let $M=p_1^{m_1}\cdot p_2^{m_2}\ldots\cdot p_s^{m_s}$, $M_1=M/p_1^{m_1}$, and let $u,v\in[M]$ be such that $u\equiv 1\pmod{p_1^{m_1}},u\equiv 0\pmod{M_1}$ and $v\equiv 0\pmod{p_1^{m_1}}, v\equiv 1\pmod{M_1}$. It is possible, as all the $p_i$'s are different primes. Then for any $\ba\in R$, $\ba=(\ba_1,\ba_2)$, where $\ba_1\in\pr_{(1,1),\dots,(1,k_1)}R$ and $\ba_2\in\pr_{(2,1),\dots,(s,k_s)}R$, it holds that $u\cdot\ba=(\ba_1,\ov0)$ and $v\cdot\ba=(\ov0,\ba_2)$, and $\ov0$ denotes the zero vector of an appropriate length. Note that $u+v=1\pmod M$.

We prove that, as any composition of polymorphisms of $R$ is a polymorphism of $R$ the operation $g(x,y,z)=ux+vy+(1-u-v)z$ is a polymorphism of $R$. Note that this argument does not depend on the above properties of $u$ and $v$. More precisely, we prove by induction on $u',v'$ that $u'x+v'y+(1-u'-v')z$ can be obtained as a composition of $f(x,y,z)=x-y+z$ with itself. Indeed, for $u'=v'=1$ the operation $f(x,z,y)=x+y-z$ is as required. Suppose the statement is proved for $g'(x,y,z)=u'x+v'y+(1-u'-v')z$. Then
\begin{align*}
& f(x,z,g'(x,y,z))=x-z+u'x+v'y+(1-u'-v')z=(u'+1)x+v'y+(1-(u'+1)-v')z,\\
& f(y,z,g'(x,y,z))=y-z+u'x+v'y+(1-u'-v')z=u'x+(v'+1)y+(1-u'-(v'+1))z.
\end{align*}

We need to prove that if $\ba,\bb\in R$, $\ba=(\ba_1,\ba_2),\bb=(\bb_1,\bb_2)$ then $(\ba_1,\bb_2)\in R$.
This is however straightforward:
\begin{align*}
g(\ba,\bb,\ba) &=(1-v)\ba+v\bb\\
&=(1-v)(\ba_1,\ba_2)+v(\bb_1,\bb_2)\\
&=((1-v)\ba_1,(1-v)\ba_2)+(v\bb_1,v\bb_2)\\
&=(\ba_1,\bar 0)+(\bar 0,\bb_2) \\
&=(\ba_1,\bb_2).
\end{align*}
\end{proof}

Lemma~\ref{lm:decomposition} allows us to decompose multi-sorted \CSP s into instances, in which every constraint contains variables of only one sort. Let $\Gm$ is a multi-sorted constraint language over $\cD=\{\zZ_{p_1^{m_1}}\zd\zZ_{p_s^{m_s}}\}$. For every (say, $n$-ary) relation $R\in\Gm$ with signature $(i_1,\dots,i_n)$, let $J_R(u)=\{j\mid i_j=u\}$, $u\in[s]$, and $R^u=\pr_{J_R(u)}R$. Note that for every $R\in\Gm$ and every $u\in[s]$ the relation $R^u$ is a relation over a single domain $\zZ_{p_u^{m_u}}$. Set $\Gm'=\{R^u\mid R\in\Gm,u\in[s]\}$. 

\begin{proposition}\label{pro:csp-decomposition}
Let $\cP$ be an instance of $\CSP(\Gm)$, where $\Gm$ is a multi-sorted constraint language over $\cD=\{\zZ_{p_1^{m_1}}\zd\zZ_{p_s^{m_s}}\}$ invariant with respect to the affine polymorphism of $\zZ_{p_1^{m_1}}\zd\zZ_{p_s^{m_s}}$. Then there is an instance $\cP'$ of $\MCSP(\Gm')$ such that the set of variables $X$ of $\cP'$ is the same as that of $\cP$ and for any $x\in X$ its sort is the same in both $\cP$ and $\cP'$, and $\Sol(\cP)=\Sol(\cP')$. 
\end{proposition}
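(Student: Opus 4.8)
The plan is to derive the proposition directly from Lemma~\ref{lm:decomposition}, handling one constraint of $\cP$ at a time. First I would observe that rearranging a constraint scope is a harmless equivalence. Fix a constraint $\ang{\bs,R}\in\cC$ with $\bs=(x_{i_1},\dots,x_{i_k})$. Since every variable $x$ carries a fixed sort $\dl(x)$, coordinates of $\bs$ occupied by the same variable always have the same sort; hence there is a permutation $\tau$ of $[k]$ for which $\bs^\tau:=(x_{i_{\tau(1)}},\dots,x_{i_{\tau(k)}})$ has the form $(\bs_1,\dots,\bs_s)$ with every variable of $\bs_r$ of sort $r$ (some blocks $\bs_r$ possibly empty, if sort $r$ does not occur). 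Replacing $\ang{\bs,R}$ by $\ang{\bs^\tau,R^\tau}$, where $R^\tau$ is $R$ with its coordinates permuted by $\tau$, changes neither $\Sol(\cP)$ nor the variable set and sort function, and $R^\tau$ remains invariant under the affine operation because invariance under a coordinate-wise operation is preserved under permuting coordinates. So we may assume every constraint already lists its variables grouped by sort.

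Second, I would apply Lemma~\ref{lm:decomposition} to each such $R^\tau$ (discarding the empty sort-blocks). Writing $I_r$ for the set of coordinates of $R^\tau$ of sort $r$, the lemma gives the identity $R^\tau=(\pr_{I_1}R^\tau)(\bs_1)\tm\dots\tm(\pr_{I_s}R^\tau)(\bs_s)$, so a map $\vf$ satisfies $R^\tau(\bs^\tau)$ if and only if it satisfies every $(\pr_{I_r}R^\tau)(\bs_r)$ simultaneously. I then let $\cP'$ be obtained from $\cP$ by replacing each constraint $\ang{\bs^\tau,R^\tau}$ with the family $\{\ang{\bs_r,\pr_{I_r}R^\tau}\mid r\in[s],\ \bs_r\neq\emptyset\}$. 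By construction each constraint of $\cP'$ has a scope all of whose variables share one sort; $\cP'$ has the same variable set and sort function as $\cP$; and applying the equivalence above to every constraint yields $\Sol(\cP)=\Sol(\cP')$, so $\cP$ and $\cP'$ are equivalent. The transformation is polynomial: the number of constraints grows by a factor of at most $s$, and each $\pr_{I_r}R^\tau$ is written down directly from $R$. I would also record that each new relation $\pr_{I_r}R^\tau$, being a projection of an affine-invariant relation, is again invariant under the affine operation, so $\cP'$ is still an affine instance and the construction is compatible with the subsequent steps of the algorithm.

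There is no real obstacle here: the proposition is essentially a packaging of Lemma~\ref{lm:decomposition} into the language of instances. The only points needing explicit attention are the bookkeeping ones --- checking that reordering a scope is a legitimate equivalence that preserves affine-invariance, and matching the exact hypotheses of Lemma~\ref{lm:decomposition} (empty sort-blocks are simply dropped, and repeated variables cause no trouble because a repetition always sits inside a single sort-block). I would expect this bookkeeping, rather than any conceptual difficulty, to be the part of the write-up that takes the most care.
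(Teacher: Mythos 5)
Your proposal is correct and is exactly how the paper intends the proposition to be obtained: the paper gives no separate proof and simply presents the proposition as the restatement of Lemma~\ref{lm:decomposition} at the level of instances. Your bookkeeping --- permuting scopes, discarding empty sort-blocks, handling repeated variables, and noting that projections preserve affine invariance --- is sound and fills in precisely what the paper leaves implicit.
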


\begin{proof}
The instance $\cP'$ is constructed as follows: For every constraint $\ang{\bs,R}$, $\bs=(x_{i_1},\dots,x_{i_n})$ of $\cP$, we introduce constraints $\ang{\bs^u,R^u}$, $u\in[s]$, where $\bs^u=(x_{i_j})_{j\in J_R(u)}$. The result then follows by construction and Lemma~\ref{lm:decomposition}.
\end{proof}


\subsection{Constructing a system of linear equations}

Having the decomposition result in \Cref{lm:decomposition} and Proposition~\ref{pro:csp-decomposition}, we proceed to show that any instance of $\CSP(\Gm)$ where $\Gm$ is invariant under the affine operation of $\zZ_{{p}^{m}}$ can be transformed into a system of linear equations over $\zZ_{{p}^{m}}$ that is in the \emph{reduced row-echelon} form i.e., there are free variables and the rest of variables are linear combinations thereof. Note that unlike linear equations over $\zZ_{p}$, such a transformation is not immediate. In particular, it will require introducing new variables that will serve as free variables. Recall that $\Sol(\cP)$ denotes the set of solutions of $\cP$.

\begin{lemma}\label{lem:to-equations}
Let $\cP$ be an instance of $\CSP(\Gm)$ where $\Gm$ is a constraint language over $\zA=\zZ_{{p}^{m}}$ invariant under the affine operation of the group. Let $X=\{\vc xn\}$ be the set of variables of $\cP$. Then there are variables $\vc yr$ such that for every $j\in[n]$ there are coefficients $\al_{1,j}\zd\al_{r,j},c_j\in\zZ_{p^m}$, for which $(\vc xn)\in\Sol(\cP)$ if and only if $x_j+\al_{1,j}y_1+\dots+\al_{r,j}y_r+c_j=0$ for some values of $\vc yr$ from $\zZ_{p^m}$.
\end{lemma}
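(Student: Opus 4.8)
The goal is to put $R=\Sol(\cP)\subseteq(\zZ_{p^m})^n$ into reduced row-echelon form as an affine subset. By Lemma~\ref{lem:coset} (applied with all groups equal to $\zZ_{p^m}$), $R$ is a coset $\bc+\zB$ for some subgroup $\zB\le(\zZ_{p^m})^n$; so it suffices to handle $\zB$ and then translate by any fixed $\bc\in R$. The plan is to produce a generating set for $\zB$ in a ``staircase'' shape by a Gaussian-elimination-like procedure that is adapted to the ring $\zZ_{p^m}$, which is a local principal ideal ring whose ideals form the chain $(p^{m})\subset(p^{m-1})\subset\dots\subset(p)\subset(1)$.

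First I would compute $\zB$ from the constraints of $\cP$. Each constraint $\ang{\bs,R_C}$ with $R_C$ invariant under the affine operation is, again by Lemma~\ref{lem:coset}, a coset of a subgroup of $(\zZ_{p^m})^{|\bs|}$; since the relations $R_C$ come from a fixed finite language $\Gm$, a generating set for each such subgroup (and a base point) can be written down in constant time, and lifting to $X$ and intersecting over all constraints gives $\zB$ as $\Sol$ of a homogeneous ``linear'' system over $\zZ_{p^m}$. Concretely I would collect all these generators as the \emph{rows} of a matrix $A$ over $\zZ_{p^m}$, so that $\zB$ is the row-span (the set of $\zZ_{p^m}$-combinations of the rows) — note we want a spanning description of $\zB$, not its solution set, and $\zB$ is exactly the image of $A^{\!\top}$.

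Next comes the main work: row-reducing $A$ over $\zZ_{p^m}$. I would process columns left to right. For the current column, among the available rows pick one whose entry in that column generates the largest ideal, i.e.\ has the smallest $p$-adic valuation $v$; call it the pivot row with pivot value $p^{v}u$, $u$ a unit. Use it to clear the column in all other rows: an entry $p^{w}u'$ with $w\ge v$ is killed by subtracting $p^{w-v}u'u^{-1}$ times the pivot row, exactly as in ordinary Gaussian elimination. The new feature over $\zZ_{p^m}$ (versus $\zZ_p$) is that the pivot value may be a non-unit $p^{v}$ with $v>0$: then the pivot row only generates the \emph{subgroup} $p^{v}\zZ_{p^m}$ in that coordinate, and to capture all of $\zB$ we must additionally record the row $p^{m-v}\cdot(\text{pivot row})$, whose pivot coordinate is $0$, and feed it back into the pool for later columns. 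This ``$p$-adic echelon'' bookkeeping — each original pivot possibly spawning one extra derived row — is precisely the step that replaces plain Gaussian elimination and is, I expect, the main obstacle to get exactly right; one must check it terminates (each derived row has strictly more leading zeros, and there are at most $m\cdot(\#\text{columns})$ pivots) and that the resulting rows still span $\zB$. After cleanup we obtain a set of rows in echelon form: a set $S$ of \emph{pivot columns}, each with its pivot row, and the complementary columns are \emph{free}. The variables $x_j$ for $j$ a free column become the parameters $\vc yr$ (so $r=n-|S|$), and back-substitution up the staircase expresses each pivot variable $x_j$ as a $\zZ_{p^m}$-linear combination $\al_{1,j}y_1+\dots+\al_{r,j}y_r$ of the free ones; adding the fixed offset $\bc$ gives the constants $c_j$ (and $c_j=0$, $\al_{i,j}=\dl_{ij}$ for free $j$). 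That any assignment of the $y_i$ yields a point of $\zB$ (hence of $R$ after the shift) is immediate since the formulas are literally a parametrization of the row-span; conversely every element of $\zB$ is such a combination because the echelon rows span $\zB$. Finally, all arithmetic is over the fixed ring $\zZ_{p^m}$ and the matrix has polynomially many rows (one bounded block per constraint) and $n$ columns, so the whole procedure runs in polynomial time, which — together with Proposition~\ref{pro:csp-decomposition} reducing the general case to a single modulus — completes the argument.
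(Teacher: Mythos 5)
There is a genuine gap in the step where you compute $\zB=\Sol(\cP)-\vf_0$. You correctly invoke Lemma~\ref{lem:coset} to see that each constraint relation $R_C$ is a coset $\bd_C+\zB_C$, and that $\Sol(\cP)$ is a coset of a subgroup $\zB\le(\zZ_{p^m})^n$. But then you ``collect all these generators as the rows of a matrix $A$'' and assert that $\zB$ is the row-span of $A$, i.e.\ the image of $A^{\!\top}$. This cannot be right: $\zB=\bigcap_C \pr_{\bs_C}^{-1}(\zB_C)$ is an \emph{intersection} of (lifted) subgroups, whereas the row-span of the stacked generators is their \emph{sum} $\sum_C\pr_{\bs_C}^{-1}(\zB_C)$, which is as large as possible rather than as small as possible. (Lifting a generator by padding with zeros does not even give the right lifted subgroup; and once fixed, the sum is typically all of $(\zZ_{p^m})^n$ the moment two constraint scopes are disjoint.) The same confusion between the image and kernel pictures recurs when you conclude: for a row-span, after full reduction it is the pivot coordinates (equal to the row coefficients, up to the pivot units) that are ``free'', while the non-pivot coordinates are determined; you instead declare the complementary columns to be the free parameters, which is how one reads off the solution set of a homogeneous system, not how one parametrizes a span. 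To make your plan sound, each $\zB_C$ should be represented by defining \emph{equations} over $\zZ_{p^m}$ (a matrix $M_C$ with $\zB_C=\ker M_C$, which exists because $\zZ_{p^m}$ is quasi-Frobenius and can be precomputed for the fixed language $\Gm$), so that $\zB$ is the kernel of the stacked system; then your $p$-adic echelon reduction and the ``free column = parameter'' reading are the right tools.

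It is also worth noting that the paper's proof avoids this issue entirely by never writing down a matrix at all: it uses the Mal'tsev CSP algorithm of \cite{Bulatov06:simple} as a black box (Claim~1) to find solutions with prescribed values, and builds a staircase-shaped generating set of $\zB$ \emph{greedily} --- at each step it restricts $\cP$ by fixing an already-processed variable to its $\vf_0$-value, picks the remaining variable with an attainable increment of minimal $p$-adic valuation, and records one solution. This bypasses both the ``describe each constraint by equations over $\zZ_{p^m}$'' step and the correctness proof of a ring-theoretic row reduction, so if you want to rescue your matrix route you will be doing strictly more work than the paper.
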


\begin{proof}
We start with a claim that indicates what the existing algorithms allow us to do with respect to the instance $\cP$. 

\medskip

\noindent
{\sc Claim 1.}
(1) A solution of $\cP$, if one exists, can be found in polynomial time.\\[2mm]
(2) For any $x\in X$ and any $a\in\zZ_{p^m}$, a solution $\vf\in\Sol(\cP)$ can be found in polynomial time such that $\vf(x)=a$, if one exists.\\[2mm]
(3) For any $x\in X$ the set $\Sol_x(\cP)=\{\vf(x)\mid \vf\in\Sol(\cP)\}$ can be found in polynomial time.

\begin{proof}[Proof of Claim 1]
(1) A ternary operation $f$ on a set $A$ is said to be \emph{Mal'tsev} if $f(x,x,y)=f(y,x,x)=y$ for $x,y\in A$. The affine operation of any Abelian group including $\zA$ is Mal'tsev. It was proved in \cite{Bulatov06:simple} that for any $\Gm$ invariant under a Mal'tsev operation the problem $\CSP(\Gm)$ can be solved in polynomial time. Since $\Gm$ in \Cref{lem:to-equations} is invariant under a Mal'tsev operation, it implies item~(1).

(2) The constant relation $R_a=\{(a)\}$ is invariant under the affine operation of $\zA$. This means that the problem $\cP'$ obtained from $\cP$ by adding the constraint $\ang{(x),R_a}$ can be solved in polynomial time using the algorithm from \cite{Bulatov06:simple}. A mapping $\vf$ is a solution of $\cP'$ if and only if $\vf\in\Sol(\cP)$ and $\vf(x)=a$.

(3) To find the set $\Sol_x(\cP)$ one just needs to apply item (2) to every element of $\zA$.
\end{proof}

Note that, by Proposition \ref{pro:galois}, $\Sol(\cP)$ is invariant under the polymorphisms of the language $\Gamma$. Pick an arbitrary solution $\vf_0\in\Sol(\cP)$ and let $S'=\{\vf-\vf_0\mid \vf\in\Sol(\cP)\}$. By Lemma~\ref{lem:coset} $S'$ is a subgroup of $\zA^n$. 
First, find $\Sol_x(\cP)$ for every $x\in X$ and set $S'_x=\Sol_x(\cP)-\vf_0(x)$ (subtraction in $\zA$). For $a\in\zA$ let $p(a)$ denote the maximal power of $p$ that divides $a$. Find $x\in X$ such that $S'_x$ contains an element $a\in\zA$ for which $p(a)$ is minimal possible. Without loss of generality let $x$ be $x_1$, and denote the value $a$ by $a_1$ and set $o_1=m-p(a_1)$. Let also $\vf_1$ be a solution of $\cP$ such that $\vf_1(x_1)=a_1+\vf_0(x_1)$, and $\vf'_1=\vf_1-\vf_0$.
Observe that since $p(a_1)$ is minimal, for any $x_i\in X$ we have $\vf'_1(x_i)=\al'_{1,i}\vf'_1(x_1)$ for some $\al'_{1,i}\in\zZ_{p^m}$.
Let $\cP^{(1)}$ be the instance $\cP$ with the extra constraint $\ang{(x_1),R_{\vf_0(x_1)}}$. 
Suppose that $\cP^{(i)}$ is constructed, that is obtained from $\cP$ by adding constraints $\ang{(x_j),R_{\vf_0(x_j)}}$ for $j\in[i]$. Let also $S^{(i)}=\{\vf-\vf_0\mid \vf\in\Sol(\cP^{(i)})\}$. Again, find $x\in X-\{\vc xi\}$ and an element $a\in S^{(i)}_x$ such that $p(a)$ is minimal possible. Assume that $x=x_{i+1}$, $a=a_{i+1}$, and $o_{i+1}=m-p(a_{i+1})$. Find a solution $\vf_{i+1}\in\Sol(\cP^{(i)})$ with $\vf_{i+1}(x_{i+1})=a_{i+1}+\vf_0(x_{i+1})$ and let $\vf'_{i+1}=\vf_{i+1}-\vf_0$.

The process ends at some point, suppose at step $r$, as $\vf_0$ is the only solution of $\cP^{(r+1)}$. By construction, $\{\vf'_1\zd\vf'_r\}$ is a generating set of the group $S'$ and $\vf'_j(x_i)=0$ for $j>i$, $i,j\in[r]$. Indeed, for the latter statement, if $j>i$ then for any $\vf\in\Sol(\cP^{(j)})$, it holds that $\vf(x_i)=\vf_0(x_i)$, and therefore $\vf'_j(x_i)=\vf_j(x_i)-\vf_0(x_i)=0$. For the former statement, that by the choice of $a_i$ --- the minimality of $p(a_i)$ --- for any $\vf\in S^{(i-1)}$ there is $\beta\in\zZ_{p^{o(i)}}$ such that $\vf(x_i)=\beta\vf'_i(x_i)$, and therefore $S^{(i-1)}$ is generated by the elements of $S^{(i)}$ and $\vf'_i$.

As we observed, for any $j\in[n]$ there are $\al'_{1,j}\zd\al'_{r,j}$, $\al'_{i,j}\in\zZ_{p^{o_i}}$ for $i\in[r]$, such that $\vf'_1(x_j)=\al'_{1,j}\vf'_1(x_1)\zd \vf'_r(x_j)=\al'_{r,j}\vf'_r(x_r)$. We claim that coefficients $\al_{i,j}=-\al'_{i,j}\vf'_i(x_i)$, $c_j=-\vf_0(x_i)$ $i\in[r],j\in[n]$ are as required. To see this, observe that, as $\{\vf'_1\zd\vf'_r\}$ is a generating set of $S'$, we have $\vf\in\Sol(\cP)$ if and only if there are $\vc yr\in\zZ_{p^m}$, such that $\vf-\vf_0=y_1\vf'_1+\dots+y_r\vf'_r$. Thus, for any $j\in[n]$ we have 
\begin{align*}
\vf(x_j)+c_j &=y_1\vf'_1(x_j)+\dots+y_r\vf'_r(x_j)\\
&=y_1\al'_{1,j}\vf'_1(x_1)+\dots+y_r\al'_{r,j}\vf'_r(x_r)\\
&=-\al_{1,j}y_1-\dots-\al_{r,j}y_r.
\end{align*}
Thus, solutions of $\cP$ are exactly the mappings satisfying the equations 
\[
x_j+\al_{1,j}y_1+\dots+\al_{r,j}y_r+c_j=0.
\]
\end{proof}

Putting everything together, we have proposed a reduction that transforms every instance of $\CSP(\Dl)$, where $\Dl$ is a constraint language invariant under the affine operation of an Abelian group, into a systems of linear equations over cyclic $p$-groups. Note that by the Decomposition Lemma (Lemma~\ref{lm:decomposition}) and Proposition~\ref{pro:csp-decomposition} we can assume that these systems of linear equations do not share variables. We summarize the result of this section as the following proposition.

\begin{proposition}
\label{pro:CSP-to-LIN}
Let $\Dl$ be a constraint language invariant under the affine operation of an Abelian group $\zA$. There are distinct primes $p_1,\dots,p_s$, integers $m_1,\dots,m_s$ (not necessary distinct), and a multi-sorted constraint language $\Gm$ over $\zZ_{p_1^{m_1}},\dots,\zZ_{p_s^{m_s}}$ such that $\Gm$ is invariant under the affine operation of these groups, and $\Gm$ pp-interprets $\Dl$. Moreover, for every instance $\cP$ of $\CSP(\Gm)$ 
there are integers $k_1,\dots,k_s$ (not necessary distinct) such that $\cP$ is on the set of variables $X=\{x_{1,1}\zd x_{1,k_1}\zd x_{s,1}\zd x_{s,k_s}\}$, and it can be expressed as $s$ systems $\mc{L}_1,\dots,\mc{L}_s$ of linear equations where 

\begin{itemize}
 \item[1.] each $\mc{L}_i$ is a system of linear equations over $\zZ_{p_i^{m_i}}$ with variables $X(\mc L_i)\cup Y(\mc L_i)$, where $X(\mc{L}_i)=\{x_{i,1},\dots,x_{i,k_i}\}$, $Y(\mc L_i)=\{y_{i,1}\zd y_{i,r_i}\}$;
 \item[2.] each $\mc{L}_i$ is of the following form
\begin{align*}
(\mathbb{1}_{k_i\times k_i}~~ B_i)(x_{i,1},\dots,x_{i,k_i},y_{i,1},\dots,y_{i,r_i},1)^T=\mathbf{0},
\end{align*}
where $\mathbb{1}_{k_i\times k_i}$ denotes the $k_i$ by $k_i$ identity matrix, $B_i$ is a $k_i$ by $r_i+1$ matrix over $\zZ_{p_i^{m_i}}$,
\item[3.] $X(\mc{L}_i)\cap X(\mc{L}_j)=\emptyset$, $Y(\mc{L}_i)\cap Y(\mc{L}_j)=\emptyset$, for all $1\le i,j\leq s$ and $i\neq j$; 
\item[4.] an assignment $\vf$ to variables from $X$ is a solution of $\cP$ if and only if for every $i\in[s]$ there are values of variables from $Y(\mc L_i)$ that together with $\vf_{|X(\mc L_i)}$ satisfy $\mc L_i$.
\end{itemize}
\end{proposition}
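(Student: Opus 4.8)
The plan is to obtain Proposition~\ref{pro:CSP-to-LIN} as a direct assembly of the three ingredients developed in this section, the only genuine work being the bookkeeping needed to match the exact matrix format asserted in items~1--4. First I would apply Proposition~\ref{pro:pp-interpret} to the finite language $\Dl$; this immediately supplies the distinct primes $p_1\zd p_s$, the integers $m_1\zd m_s$, and a multi-sorted language $\Gm$ over $\zZ_{p_1^{m_1}}\zd\zZ_{p_s^{m_s}}$ (finite, since $\Dl$ is) that is invariant under the affine operation of these groups and pp-interprets $\Dl$. That settles every part of the statement that does not refer to a particular instance $\cP$.

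Next, given an instance $\cP$ of $\CSP(\Gm)$, I would pass it through Proposition~\ref{pro:csp-decomposition} to get an equivalent instance $\cP'$ on the very same set of variables, with the sort of each variable unchanged, in which every constraint scope consists of variables of a single sort. Grouping the variables by sort partitions $X$ into blocks $X(\mc{L}_i)=\{x_{i,1}\zd x_{i,k_i}\}$, $k_i=|X(\mc{L}_i)|$, and the constraints of $\cP'$ split accordingly into $s$ sub-instances $\cP'_1\zd\cP'_s$ with $\cP'_i$ supported entirely on $X(\mc{L}_i)$. Since the blocks are pairwise disjoint and no constraint crosses sorts, an assignment $\vf$ satisfies $\cP'$ if and only if $\vf_{|X(\mc{L}_i)}$ satisfies $\cP'_i$ for every $i$; this ``disjoint union'' observation is what will eventually yield item~4. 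Each $\cP'_i$, viewed as a one-sorted instance over $\zZ_{p_i^{m_i}}$, still has affine-invariant constraint relations, because projections of affine-invariant relations remain affine-invariant (indeed, they are cosets of subgroups by Lemma~\ref{lem:coset}); hence Lemma~\ref{lem:to-equations} applies to $\cP'_i$.

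Applying Lemma~\ref{lem:to-equations} to each $\cP'_i$ produces fresh parameters $Y(\mc{L}_i)=\{y_{i,1}\zd y_{i,r_i}\}$ together with scalars $\al_{1,j}\zd\al_{r_i,j},c_j\in\zZ_{p_i^{m_i}}$ so that $\vf_{|X(\mc{L}_i)}\in\Sol(\cP'_i)$ exactly when $x_{i,j}=\al_{1,j}y_{i,1}+\dots+\al_{r_i,j}y_{i,r_i}+c_j$ for all $j\in[k_i]$ and some choice of values for $Y(\mc{L}_i)$. Rewriting each identity as $x_{i,j}-\al_{1,j}y_{i,1}-\dots-\al_{r_i,j}y_{i,r_i}-c_j=0$ and stacking over $j$ produces precisely a system $(\mathbb{1}_{k_i\times k_i}\ M_i)(x_{i,1}\zd x_{i,k_i},y_{i,1}\zd y_{i,r_i},1)^T=\mathbf 0$ with $M_i$ the $k_i\times(r_i+1)$ matrix whose $j$-th row is $(-\al_{1,j}\zd-\al_{r_i,j},-c_j)$; this gives items~1 and~2. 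Item~3 is immediate, since the $X$-blocks are disjoint by Proposition~\ref{pro:csp-decomposition} and the parameter blocks $Y(\mc{L}_i)$ are introduced afresh and independently. Item~4 is then the composition of the two equivalences recorded above: $\vf$ solves $\cP$ iff $\vf$ solves $\cP'$, iff $\vf_{|X(\mc{L}_i)}$ solves $\cP'_i$ for each $i$, iff for each $i$ the restriction $\vf_{|X(\mc{L}_i)}$ extends through $Y(\mc{L}_i)$ to a solution of $\mc{L}_i$.

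I expect the main obstacle to be organizational rather than conceptual: one must verify that the relations appearing after the decomposition of Proposition~\ref{pro:csp-decomposition} are still affine-invariant, so that Lemma~\ref{lem:to-equations} may legitimately be invoked on each $\cP'_i$, and that the variables declared ``dependent'' by Lemma~\ref{lem:to-equations} are exactly the original $x_{i,j}$'s, which is what forces the coefficient block on the $x$-variables to be the identity and hence the row-echelon shape claimed in item~2. I would also observe in passing that every step is polynomial time --- the pp-interpretation, the decomposition, and the construction of Lemma~\ref{lem:to-equations}, whose Claim~1 rests on the polynomial-time Mal'tsev CSP algorithm of \cite{Bulatov06:simple} --- which is what the downstream argument needs.
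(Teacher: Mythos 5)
Your proposal is correct and follows precisely the route the paper intends: Proposition~\ref{pro:pp-interpret} supplies $\Gm$ and the cyclic $p$-group domains, Proposition~\ref{pro:csp-decomposition} (via Lemma~\ref{lm:decomposition}) splits an instance into sort-homogeneous sub-instances, and Lemma~\ref{lem:to-equations} converts each sub-instance to the claimed reduced row-echelon system with freshly introduced parameter variables. Your observation that the projected relations remain affine-invariant (so Lemma~\ref{lem:to-equations} legitimately applies) and that the $X$- and $Y$-blocks are pairwise disjoint is exactly the bookkeeping needed; this matches the paper's own ``putting everything together'' summary.
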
   
\section{Solving the IMP}
\label{sec:solving-imp}
In this section we focus on solving the \IMP\ for constraint languages that are invariant under the affine operation of a finite Abelian group. In fact, in Section \ref{sec:redu-by-sub}, we will prove that in such cases one can efficiently construct a $d$-truncated \GB\ with respect to a \grlexns. In this section we focus on decidability of $\IMP_d(\Dl)$, that is, the first part of Theorem~\ref{the:main-intro}. Formally, 

\begin{theorem}
\label{thm:main-decision}
Let $\zA$ be a finite Abelian group and $\Dl$ a constraint language such that the affine operation $x-y+z$ of $\zA$ is a polymorphism of $\Dl$. Then $\IMP_d(\Dl)$ can be solved in polynomial time for any $d$.
\end{theorem}

The rest of this section is devoted to proving Theorem~\ref{thm:main-decision}. We use the notation from Section~\ref{sec:AG}. Let $\zA$ be a finite Abelian group and $\Dl$ a finite constraint language invariant under the affine operation of $\zA$. We will provide a polynomial time algorithm that, for any instance $\mc{P}=(X,\zA,\mc{C})$ of $\CSP(\Dl)$, decides if an input polynomial $f\in \Complex[X]$ belongs to $\I(\mc{P})\subseteq \Complex[X]$. As before, we assume
\[
    \zA = \zZ_{p_1^{\ell_{1,1}}}\oplus\dots \oplus\zZ_{p_1^{\ell_{1,k_1}}}\oplus \zZ_{p_2^{\ell_{2,1}}}\dots\oplus\zZ_{p_2^{\ell_{2,k_2}}} \oplus\dots \oplus \zZ_{p_s^{\ell_{s,k_s}}},
\]
and $m_r$ is maximal among $\ell_{r,1}\zd\ell_{r,k_r}$. By \Cref{pro:pp-interpret}, $\Dl$ is pp-interpretable in a multi-sorted constraint language $\Gm$ over $\zZ_{p_1^{m_1}},\dots,\zZ_{p_s^{m_s}}$, which is invariant under the affine operation of these groups. By Theorem~\ref{thm:pp-interpret-multi-sorted}, since the multi-sorted constraint language $\Gm$ pp-interprets $\Dl$ then $\IMP_d(\Dl)$ is polynomial time reducible to $\IMP_{O(d)}(\Gm)$. Combined with \Cref{pro:CSP-to-LIN} this yields the following statement.

\begin{proposition}
\label{pro:IMP-LIN}
Let $\Dl$ be a constraint language that is invariant under the affine operation of $\zA$. Then $\IMP_d(\Dl)$ is polynomial time reducible to $\IMP_{O(d)}(\Gm)$ with $\Gm$ being a constraint language invariant under the affine operation of $\zZ_{p_1^{m_1}},\dots,\zZ_{p_s^{m_s}}$. Moreover, every instance $(f,\cP)$ of $\IMP_d(\Dl)$ is transformed to an instance $(f',\cP')$ of $\IMP_{O(d)}(\Gm)$ satisfying the following conditions. 
\begin{itemize}
\item[(1)] 
For every $i\in[s]$ there is a set $Y_i=\{y_{i,1}\zd y_{i,r_i}\}$ of variables of $\cP'$ and $Y_i\cap Y_j=\emptyset$ for $i\ne j$.
\item[(2)]
For every constraint $\ang{\bs,R}$ of $\cP'$ the following conditions hold:
    \begin{itemize}
        \item[(a)] 
        there is $i\in[s]$ such that $\zZ_{p_i^{m_i}}$ is the domain of every variable from $\bs$;
        \item[(b)] 
        $R$ is represented by a linear equation of the form
        \[
        x_j+\al_1y_{i,1}+\dots+\al_{r_i}y_{i,r_i}+\alpha_j=0
        \]
        over $\zZ_{p_i^{m_i}}$. 
    \end{itemize}
\end{itemize}
\end{proposition}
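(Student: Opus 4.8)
The plan is to obtain Proposition~\ref{pro:IMP-LIN} by composing the three constructions already set up in this section. First I would invoke Proposition~\ref{pro:pp-interpret} to get a multi-sorted constraint language $\Gm_0$ over the cyclic groups $\zZ_{p_1^{m_1}}\zd\zZ_{p_s^{m_s}}$ (the primes and exponents being those fixed in Section~\ref{sec:AG}) that is invariant under the affine operations of these groups and pp-interprets $\Dl$. Then, since $\Gm_0$ pp-interprets $\Dl$, Theorem~\ref{thm:pp-interpret-multi-sorted} yields a polynomial-time reduction turning an instance $(f,\cP)$ of $\IMP_d(\Dl)$ into an instance $(f_1,\cP_1)$ of $\IMP_{O(d)}(\Gm_0)$ with $f\in\I(\cP)$ if and only if $f_1\in\I(\cP_1)$ and $\deg f_1=O(d)$. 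Finally I would run $\cP_1$ through Proposition~\ref{pro:CSP-to-LIN}, which (having the sort-decomposition of Proposition~\ref{pro:csp-decomposition} and the linearization of Lemma~\ref{lem:to-equations} already built in) rewrites $\cP_1$ as $s$ systems of linear equations $\mc L_1\zd\mc L_s$ over $\zZ_{p_1^{m_1}}\zd\zZ_{p_s^{m_s}}$, on variables $X(\mc L_i)\cup Y(\mc L_i)$ with the parameter sets $Y(\mc L_i)$ fresh and pairwise disjoint.

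From here I would assemble $(f',\cP')$ explicitly: let $\cP'$ have as variables those of $\cP_1$ together with $\bigcup_i Y(\mc L_i)$, and as constraints the individual rows of the systems $\mc L_i$, each of the form $x_j=\al_1y_{i,1}+\dots+\al_{r_i}y_{i,r_i}$ over $\zZ_{p_i^{m_i}}$ (the stray constant coming out of Proposition~\ref{pro:CSP-to-LIN} is absorbed by a harmless normalization, e.g.\ adjoining one extra parameter per sort pinned by a unary constant constraint, which is itself affine-invariant); put $f'=f_1$, and let $\Gm$ be the fixed multi-sorted language of all such linear-equation relations over $\zZ_{p_1^{m_1}}\zd\zZ_{p_s^{m_s}}$. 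Each such relation is a coset of a subgroup of the corresponding group, so by Lemma~\ref{lem:coset} it is invariant under the affine operation, and hence $\Gm$ has the required polymorphism. Conditions (1), (2)(a), (2)(b) of the proposition are then read off directly from the shape of the $\mc L_i$ and the disjointness recorded in Proposition~\ref{pro:CSP-to-LIN}; and the whole transformation is polynomial time because $s$, the $p_i$ and the $m_i$ are constants determined by $\zA$ and each $\mc L_i$ is produced in polynomial time (Claim~1 in the proof of Lemma~\ref{lem:to-equations}).

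It remains to verify that the last step preserves IMP answers, i.e.\ $f_1\in\I(\cP_1)$ if and only if $f'\in\I(\cP')$, and this is the one point that needs care rather than bookkeeping: $\cP'$ carries the extra parameter variables, so $\I(\cP')\sse\Complex[X\cup Y]$ whereas $f'=f_1$ lives in $\Complex[X]$. By condition~(4) of Proposition~\ref{pro:CSP-to-LIN}, the coordinate projection onto $X$ maps $\Sol(\cP')$ onto $\Sol(\cP_1)$; since $f'$ involves no parameter, $f'$ vanishes on all of $\Sol(\cP')$ exactly when $f_1$ vanishes on all of $\Sol(\cP_1)$. Both $\I(\cP_1)$ and $\I(\cP')$ are radical with varieties equal to these solution sets (Section~\ref{sect:idealCSP}), so the Strong Nullstellensatz turns this vanishing equivalence into the desired membership equivalence. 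Chaining it with the equivalence supplied by Theorem~\ref{thm:pp-interpret-multi-sorted} gives $f\in\I(\cP)$ iff $f'\in\I(\cP')$, which completes the reduction. The main (and essentially only) obstacle is thus this elimination-ideal argument — making sure that introducing the free parameters $Y_i$ does not change the answer to the IMP.
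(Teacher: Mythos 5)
Your proof is correct and follows essentially the same route as the paper, chaining Proposition~\ref{pro:pp-interpret}, Theorem~\ref{thm:pp-interpret-multi-sorted}, and Proposition~\ref{pro:CSP-to-LIN}, and it usefully makes explicit the radical-ideal/projection argument (via the Strong Nullstellensatz and condition~(4) of Proposition~\ref{pro:CSP-to-LIN}) that adjoining the fresh parameter variables $Y_i$ leaves the IMP answer unchanged, a step the paper leaves implicit. Your workaround for the affine constant $c_j$ is harmless but unnecessary: the missing $+c_j$ in condition~(2)(b) is evidently a typo carried over from Lemma~\ref{lem:to-equations} and Proposition~\ref{pro:CSP-to-LIN} (the paper's own development in the next subsection keeps the constant $\alpha_t$ and encodes it as $\om_i^{\alpha_t}$), and moreover the unary pinning constraint your normalization would add is itself not of the form in (2)(b), so it does not literally achieve the stated shape anyway.
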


For an instance $(f,\cP)$ of $\IMP_d(\Gm)$ we assume that 
\[
    \mc{P}=(X\cup Y, \mc{D},\delta,\mc{C}),
\] 
with $X=\{x_{1,1},\dots,x_{1,k_1},\dots,x_{s,1},\dots,x_{s,k_s}\}$, $Y=\{y_{1,1},\dots,y_{1,r_1},\dots,y_{s,1},\dots,y_{s,r_s}\}$ $\mc{D}=\{\mc{D}_i\mid \mc{D}_i=\zZ_{p_i^{m_i}},1\le i\le s\}$ and $\delta:X\cup Y\to [s]$ defined as $ \delta(x_{i,j})=\dl(y_{i,j'})=i$. Furthermore, the input polynomial $f$ is from $\Complex[x_{1,1},\dots,x_{1,k_1},\dots,x_{s,1},\dots,x_{s,k_s}]$. 

In the next sections, we present a reduction that transforms the problem to an equivalent problem over roots of unities and then computes \GB\ with respect to a \lex order. 

\subsection{Reduction to roots of unity}
By Propositions~\ref{pro:CSP-to-LIN} and \ref{pro:IMP-LIN} any instance of $\CSP(\Gm)$ can be thought of as a system of linear equations.


Note that a system of linear equations over $\zZ_{p_i^{m_i}}$ can be solved in polynomial time. This immediately tells us if $1 \in \I(\cP)$ or not, and we proceed only when $1 \not\in \I(\cP)$. We assume the lexicographic order $\succ_\lex$ with 
\begin{align}
\label{lex-order}
    & x_{1,1}\succ_\lex\dots\succ_\lex x_{1,k_1}\succ_\lex\dots\succ_\lex x_{s,1}\succ_\lex\dots\succ_\lex x_{s,k_s} \\
    & \qquad\qquad\succ y_{1,1}\succ\dots\succ y_{1,r_1}\succ y_{2,1}\succ\dots\succ y_{2,r_2}\succ\dots\succ y_{s,r_s} \nonumber.
\end{align}

Since these systems of linear equations do not share any variables we construct a \GB\ for each of them independently, and then will show that the union of all these \GBs\ is indeed a \GB\ for $\I(\mc{P})$, with respect to the \lex order in \eqref{lex-order}. We denote the corresponding ideal for each $\mc{L}_i$ by $\I(\mc{L}_i)$.

Note that each linear system $\mc{L}_i$ is already in its reduced row-echelon form with $x_{i,j}$ as the leading monomial of the $j$-th equation, $1\leq j\leq k_i$. Each linear equation can be written as $x_{i,j}+f_{i,j}=0 \pmod {p_i^{m_i}}$ where $f_{i,j}$ is a linear polynomial over $\zZ_{p_i^{m_i}}$. This is elaborated on as follows.


\resizebox{\linewidth}{!}{
 \begin{minipage}{\textwidth}
\begin{align*}
\mc{L}_i : =\left\{
 \begin{array}{cccc}
      & x_{i,1} + \overbrace{\alpha_{1,1}~y_{i,1}+\dots + \alpha_{1,r_i}~y_{i,r_i}+\alpha_1}^{f_{i,1}} & = & 0  \pmod {p_i^{m_i}}\\
      & x_{i,2} + \overbrace{\alpha_{2,1}~y_{i,1}+\dots + \alpha_{2,r_i}~y_{i,r_i}+\alpha_2}^{f_{i,2}} & = & 0 \pmod {p_i^{m_i}}\\
      & & \vdots &
      \\
      & x_{i,k_i} + \overbrace{\alpha_{k_i,1}~y_{i,1}+\dots + \alpha_{k_i,r_i}~y_{i,r_i}+\alpha_{r_i}}^{f_{i,k_i}} & = & 0 \pmod {p_i^{m_i}}
 \end{array}
 \right.
 \longrightarrow
 \mc{L}_i : =\left\{
 \begin{array}{cccc}
      & x_{i,1} + f_{i,1} & = & 0  \pmod {p_i^{m_i}}\\
      & x_{i,2} + f_{i,2} & = & 0 \pmod {p_i^{m_i}}\\
      & & \vdots &
      \\
      & x_{i,k_i} + f_{i,k_i} & = & 0 \pmod {p_i^{m_i}}
 \end{array}
 \right.
\end{align*}%
\end{minipage}
}

 Hence, we can write down a generating set for $\I(\mc{L}_i)$ in an implicit form as follows where the addition is modulo $\zZ_{p_i^{m_i}}$,
    \begin{align}
    \label{eq:L_i}
        G_i=\left\{ x_{i,1} + f_{i,1},\dots, x_{i,k_i}+ f_{i,k_i} ,\prod_{j\in\zZ_{p_i^{m_i}}}(y_{i,1}-j),\dots,\prod_{j\in\zZ_{p_i^{m_i}}}(y_{i,r_i}-j)\right\}
    \end{align} 

 Let $U_{p_i^{m_i}}=\{\om_i,\om_i^2,\dots,\om_i^{(p_i^{m_i})}=\om_i^0=1\}$ be the set of $p_i^{m_i}$-th roots of unity where $\om_i$ is a primitive $p_i^{m_i}$-th root of unity. For a primitive $p_i^{m_i}$-th root of unity $\om_i$ we have $\om_i^a=\om_i^b$ if and only if $a \equiv b ~(\mathrm{mod}~p_i^{m_i})$. From $\mc{L}_i$ we construct a new CSP instance $\mc{L}'_i= (V,U_{p_i^{m_i}},\widetilde{C})$ where for each equation $x_{i,t}+ f_{i,t}=0 \pmod {p_i^{m_i}}$ we add the constraint $x_{i,t}-f_{i,t}'=0$ with 
 \[
    f_{i,t}'=\om_i^{\alpha_t}\cdot\left(y_{i,1}^{\alpha_{t,1}}\cdot\ldots \cdot y_{i,r_i}^{\alpha_{t,r_i}}\right).
 \]
    Moreover, the domain constraints are different. For each variable $x_{i,j}$, $j\in[k_i]$, or $y_{i,j}$, $j\in[r_i]$ the domain polynomial is $(x_{i,j})^{(p_i^{m_i})}=1$, $(y_{i,j})^{(p_i^{m_i})}=1$. Therefore, we represent $G_i$ from \eqref{eq:L_i} over complex number domain as follows.

    \begin{align}
        \label{eq:G1-complex}
            G_i'=\left\{ x_{i,1} - f_{i,1}',\dots, x_{i,k_i}-f_{i,k_i}' ,(y_{i,1})^{(p_i^{m_i})}-1,\dots,(y_{i,r_i})^{(p_i^{m_i})}-1\right\}
    \end{align} 

 Define univariate polynomial $\phi_i\in \zC[X]$ so that it interpolates points 
 \[
    (0,\om_i^0),(1,\om_i),\dots,(p_i^{m_i}-1,\om_i^{(p_i^{m_i}-1)}).
\]
This polynomial provides a one-to-one mapping between solutions of $\mc{L}_i$ and those of $\mc{L}_i'$. That is, $(a_{i,1},\dots,$ $a_{i,k_i},b_{i,1},\dots,b_{i,r_i})$ is a solution of $\mc{L}_i$ if and only if $(\phi_i(a_{i,1}),\dots,\phi_i(a_{i,k_i}),\phi_i(b_{i,1}),\dots,\phi_i(b_{i,r_i}))$ is a solution of $\mc{L}_i'$.

For an instance $\mc{P}$ of $\CSP(\Gm)$, which is a collection of systems of linear equations $\mc{L}_1,\dots,\mc{L}_s$, define the instance $\mc{P}'$ which is a collection of systems of linear equations $\mc{L}'_1,\dots,\mc{L}'_s$. In the next lemma we prove our transformation to roots of unity gives rise to an equivalent ideal membership problem.

\begin{lemma}
    \label{lem:equivalent}
    For a polynomial $p\in\Complex[X]$ define polynomial $p'\in\Complex[X]$ to be
    \begin{align*}
        & p'(x_{1,1},\dots,x_{1,k_1},\dots,x_{s,1},\dots,x_{s,k_s}) \\
        &\qquad\qquad = p\left(\phi_1^{-1}(x_{1,1}),\dots,\phi_1^{-1}(x_{1,k_1}),\dots,\phi_s^{-1}(x_{s,1}),\dots,\phi_s^{-1}(x_{s,k_s})\right).
    \end{align*}
    Then $p\in\I(\mc{P})$ if and only if $p'\in \I(\mc{P}')$.
\end{lemma}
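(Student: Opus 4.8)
The plan is to show that the polynomial map induced by the $\phi_i$'s carries the ideal $\I(\mc{P})$ onto $\I(\mc{P}')$ by exploiting the Strong Nullstellensatz, which applies since both ideals are radical. First I would observe that the substitution $x_{i,j}\mapsto\phi_i^{-1}(x_{i,j})$ defines a bijective correspondence on solution sets: by the interpolation property of $\phi_i$ (noted just before the lemma), a tuple $(a_{1,1},\dots,a_{s,k_s})$ with $a_{i,j}\in\zZ_{p_i^{m_i}}$ is a solution of $\mc{L}_i$ for all $i$ if and only if $(\phi_1(a_{1,1}),\dots,\phi_s(a_{s,k_s}))$ is a solution of $\mc{L}'_i$ for all $i$; hence $\varphi\in\Sol(\mc{P})$ iff the pushed-forward assignment lies in $\Sol(\mc{P}')$. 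Since $\Variety{\I(\mc{P})}=\Sol(\mc{P})$ and $\Variety{\I(\mc{P}')}=\Sol(\mc{P}')$, the two varieties are related by the coordinatewise bijection $\Phi=(\phi_1,\dots,\phi_s)$ (abusing notation so that $\phi_i$ is applied to each $x_{i,j}$ and $y_{i,j}$ of sort $i$).

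Next I would make the connection between $p$ and $p'$ precise on points. For a point $\ba$ in the variety of $\I(\mc{P}')$ — so each coordinate $a_{i,j}$ is a $p_i^{m_i}$-th root of unity and $\ba=\Phi(\bb)$ for the unique $\bb\in\Sol(\mc{P})$ — we have by definition $p'(\ba)=p(\phi_1^{-1}(a_{1,1}),\dots)=p(\bb)$. Therefore $p'$ vanishes on $\Variety{\I(\mc{P}')}$ if and only if $p$ vanishes on $\Sol(\mc{P})=\Variety{\I(\mc{P})}$. Because $\I(\mc{P})$ and $\I(\mc{P}')$ are both radical and zero-dimensional (as established in Section~\ref{sect:idealCSP} and its multi-sorted analogue — the presence of domain polynomials forces radicality even without algebraic closedness), the Strong Nullstellensatz gives $\I(\mc{P})=\mb{I}(\Variety{\I(\mc{P})})$ and $\I(\mc{P}')=\mb{I}(\Variety{\I(\mc{P}')})$. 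Chaining these equivalences yields $p\in\I(\mc{P})\iff p$ vanishes on $\Variety{\I(\mc{P})}\iff p'$ vanishes on $\Variety{\I(\mc{P}')}\iff p'\in\I(\mc{P}')$, which is the claim.

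One subtlety to address carefully is that $p'$ is genuinely a polynomial in $\Complex[X]$: although $\phi_i^{-1}$ is only defined as a set-theoretic inverse on $U_{p_i^{m_i}}$, for the purposes of the lemma it suffices that $p'$ is \emph{some} polynomial agreeing with $p\circ(\text{componentwise }\phi^{-1})$ on the relevant finite point set, and the definition in the statement simply fixes one such representative; the argument above only ever evaluates $p'$ at points of $\Variety{\I(\mc{P}')}$, where the value is unambiguous. I would spell this out to avoid any impression that $\phi_i^{-1}$ is being treated as a polynomial. The main obstacle, and the only place where real care is needed, is verifying that the componentwise application of $\Phi$ really does send $\Sol(\mc{P})$ bijectively onto $\Sol(\mc{P}')$ constraint-by-constraint — i.e.\ checking that the transformed constraint $x_{i,t}-f'_{i,t}=0$ over roots of unity is satisfied by $\Phi(\bb)$ exactly when $x_{i,t}+f_{i,t}\equiv 0\pmod{p_i^{m_i}}$ is satisfied by $\bb$, which comes down to the identity $\phi_i(a)=\om_i^a$ together with $\om_i^a=\om_i^b\iff a\equiv b\pmod{p_i^{m_i}}$ and the fact that $\om_i^{\alpha_t}\prod_j\om_i^{\alpha_{t,j}b_{i,j}}=\om_i^{\alpha_t+\sum_j\alpha_{t,j}b_{i,j}}$. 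This is exactly the computation the paragraph preceding the lemma asserts, so the lemma's proof is essentially a matter of assembling it together with the Strong Nullstellensatz; everything else is bookkeeping.
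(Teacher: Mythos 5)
Your proof takes essentially the same route as the paper's: both invoke the radicality of $\I(\mc{P})$ and $\I(\mc{P}')$ to reduce membership to vanishing on the solution set, and then use the coordinatewise correspondence induced by the $\phi_i$ to match up $\Sol(\mc{P})$ with $\Sol(\mc{P}')$ and the values of $p$ with those of $p'$. Your side remark that $p'$ is only pinned down modulo the domain polynomials $(x_{i,j})^{p_i^{m_i}}-1$, and that this indeterminacy is absorbed by $\I(\mc{P}')$, is a worthwhile clarification the paper leaves implicit, but it does not change the substance of the argument.
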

\begin{proof}
    Recall that $\I(\mc{P})$ is a radical ideal, then by the Strong Nullstellensatz we have $p\not\in \I(\mc{P})$ if and only if there exists a point 
    \[
        \mb{a}\in \zZ_{p_1^{m_1}}^{k_1}\tm\dots\tm \zZ_{p_s^{m_s}}^{k_s} 
    \] 
    such that $\mb{a}$ is in $\Sol(\mc{P})$ and $p(\mb{a})\neq 0$. Similarly, as $\I(\mc{P}')$ is radical \footnote{This is because for all $i\in [s]$ and for all $j\in\{1,\dots,k_i\}$, the domain polynomial $(x_{i,j})^{(p_i^{m_i})}-1$ belongs to the ideal $\I(\mc{P}')$ i.e.\ the remainder of division of $(x_{i,j})^{(p_i^{m_i})}-1$ by $(x_{i,j})-f'_{i,j}$ is $1-(f'_{i,j})^{(p_i^{m_i})}=0$. } then $p'\not\in \I(\mc{P}')$ if and only if there exists a point $(\ba',\ba'')$,
    \[
        \mb{a}'\in U_{p_1^{m_1}}^{k_1} \times\dots\times U_{p_s^{m_s}}^{k_s},\qquad \ba''\in U_{p_1^{m_1}}^{r_1} \times\dots\times U_{p_s^{m_s}}^{r_s}
    \] 
    such that $(\mb{a}',\ba'')\in \Sol(\mc{P}')$ and  $p'(\mb{a}')\neq 0$.
   
    Moreover, by our construction, $\mb{a}=(a_{1,1},\dots,a_{1,k_1},a_{2,1},\dots,a_{2,k_2},\dots,a_{s,k_s})$ is a solution of $\mc{P}$ if and only if 
    \[
        \mb{a}'= \left(\phi_1(a_{1,1}),\dots,\phi_1(a_{1,k_1}),\phi_2(a_{2,1}),\dots,\phi_2(a_{2,k_2}),\dots,\phi_s(a_{s,k_s})\right)
    \]
    can be extended to a solution of $\mc{P}'$. Finally, 
    \[
        p'\left(\phi_1^{-1}(a_{1,1}),\dots,\phi_1^{-1}(a_{1,k_1}),\phi_2^{-1}(a_{2,1}),\dots,\phi_2^{-1}(a_{2,k_2}),\dots,\phi_s^{-1}(a_{s,k_s})\right) =  0
    \]
    if and only if $p(\mb{a})=0$. This finishes the proof.
\end{proof}

    

\subsection{Constructing \GBs}
   Having transformed the problem to a problem over (multi-sorted) roots of unity has a huge advantage, namely these new generating sets corresponding to each $\mc{L}_i$ are indeed \GBs. We first verify this for each $\mc{L}_i$, then we will show the union of all these generating sets gives a \GB\ for the entire problem. 
\begin{lemma}
    \label{lem:GB-P'}
    For each $1\le i\le s$, the set of polynomials $G_i'$ in \eqref{eq:G1-complex} is a \GB\ for $\I(\mc{L}_i')=\mb{I}(\Sol(\mc{L}_i'))$ with respect to \lex order $x_{i,1}\succ\dots\succ x_{i,k_i}\succ y_{i,1}\succ\dots\succ y_{i,r_i}$.
\end{lemma}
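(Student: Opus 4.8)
The plan is to apply Buchberger's Criterion (Theorem~\ref{th:crit}) together with the relatively-prime shortcut (Proposition~\ref{prop:prime-LM}). First I would record the leading monomials under the given \lex order: for the generator $x_{i,t}-f'_{i,t}$, since $f'_{i,t}=\om_i^{\al_t}\cdot y_{i,1}^{\al_{t,1}}\cdots y_{i,r_i}^{\al_{t,r_i}}$ is a single monomial in the $y$-variables (which are all \lex-smaller than every $x_{i,j}$), the leading monomial is $x_{i,t}$; for the domain generator $(y_{i,j})^{p_i^{m_i}}-1$ the leading monomial is $y_{i,j}^{p_i^{m_i}}$. Thus all leading monomials are pairwise relatively prime except possibly among the $x$-generators themselves — but $\LM(x_{i,t}-f'_{i,t})=x_{i,t}$ and $\LM(x_{i,t'}-f'_{i,t'})=x_{i,t'}$ with $t\ne t'$ are also relatively prime, and any $x$-generator is relatively prime to any $y$-domain generator since they involve disjoint variables. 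Hence by Proposition~\ref{prop:prime-LM} every $S$-polynomial $S(g,g')$ for $g,g'\in G'_i$ reduces to zero modulo $G'_i$. By Buchberger's Criterion, $G'_i$ is a \GB\ of the ideal $\langle G'_i\rangle$ it generates.

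The remaining point is the identification $\langle G'_i\rangle = \mb{I}(\Sol(\mc L'_i))$. For this I would argue: (i) $G'_i$ is the generating set of $\I(\mc L'_i)$ coming from the ideal-CSP correspondence of Section~\ref{sect:idealCSP} — each linear constraint $x_{i,t}-f'_{i,t}=0$ over $U_{p_i^{m_i}}$ interpolates the corresponding relation (using that over roots of unity the modular linear relation becomes the multiplicative relation $x_{i,t}=f'_{i,t}$), and the polynomials $(y_{i,j})^{p_i^{m_i}}-1$ are exactly the domain polynomials forcing each variable into $U_{p_i^{m_i}}$; (ii) since domain polynomials are present, the resulting ideal is radical (Lemma 8.19 of \cite{becker93grobner} / the discussion after \eqref{eq:genIConstr}), so by the Strong Nullstellensatz it equals $\mb{I}(\Variety{\langle G'_i\rangle}) = \mb{I}(\Sol(\mc L'_i))$. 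Actually, the cleanest route is to verify directly that $\Variety{\langle G'_i\rangle}=\Sol(\mc L'_i)$: a common zero must have each $y_{i,j}\in U_{p_i^{m_i}}$ (from the domain polynomials) and then $x_{i,t}=f'_{i,t}$ forces the $x$-values, and conversely every solution of $\mc L'_i$ satisfies all generators by the defining property of $\mc L'_i$ and $\phi_i$; radicality then upgrades this to equality of ideals.

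I do not expect a serious obstacle here — both ingredients are essentially bookkeeping. The one spot requiring a small amount of care is confirming that $f'_{i,t}$ really is a single monomial (not a sum), so that $\LM(x_{i,t}-f'_{i,t})=x_{i,t}$; this is immediate from the explicit formula $f'_{i,t}=\om_i^{\al_t}\,y_{i,1}^{\al_{t,1}}\cdots y_{i,r_i}^{\al_{t,r_i}}$, where the $\al_{t,j}$ are fixed nonnegative integer exponents and $\om_i^{\al_t}$ is merely a nonzero complex coefficient. A second minor check is that no $x$-variable appears in any $f'_{i,t}$ (true, since $\mc L_i$ is in reduced row-echelon form with the $x_{i,j}$ as pivots), so the leading-monomial computation is not disturbed by cancellation. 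Once these are noted, the relatively-prime criterion applies verbatim and the lemma follows.
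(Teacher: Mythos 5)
Your first half---computing leading monomials under the given \lex\ order, observing they are pairwise relatively prime, and invoking Proposition~\ref{prop:prime-LM} together with Buchberger's Criterion---is exactly the paper's argument and is correct, including the two sanity checks (that each $f'_{i,t}$ is a single monomial in the $y$-variables only).

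The second half, establishing $\Ideal{G'_i}=\mb{I}(\Sol(\mc{L}'_i))$, takes a different route from the paper's, and that route as written has a gap. You argue that $\Ideal{G'_i}$ is radical ``because domain polynomials are present,'' citing Lemma~8.19 of \cite{becker93grobner} and the remark after \eqref{eq:genIConstr}. But that Seidenberg-type criterion asks for a square-free univariate polynomial in the ideal for \emph{every} variable, and $G'_i$ contains such polynomials---the $(y_{i,j})^{p_i^{m_i}}-1$---only for the $y$-variables; the $x$-generators $x_{i,t}-f'_{i,t}$ are multivariate, so no univariate polynomial in any $x_{i,t}$ appears among the listed generators. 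Your ``cleanest route'' (verify the variety equality, then invoke radicality to upgrade to ideal equality) relies on the same unjustified radicality claim. The gap is fillable: from $x_{i,t}\equiv f'_{i,t}$ and $(y_{i,j})^{p_i^{m_i}}\equiv 1$ modulo $\Ideal{G'_i}$ one derives $(x_{i,t})^{p_i^{m_i}}-1\in\Ideal{G'_i}$, which is square-free, and only then does Seidenberg apply; but this derivation must be supplied, since as stated the premise is false of $G'_i$.

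The paper sidesteps radicality of $\Ideal{G'_i}$ altogether. For the inclusion $\mb{I}(\Sol(\mc{L}'_i))\subseteq\Ideal{G'_i}$ it takes an arbitrary $p\in\mb{I}(\Sol(\mc{L}'_i))$ and reduces it by the already-established \GB\ $G'_i$: the normal form $q=p|_{G'_i}$ involves only $y$-variables (the $x$'s are eliminated by the generators with leading terms $x_{i,t}$), each to degree below $p_i^{m_i}$ (because of the domain generators). Every tuple in $U_{p_i^{m_i}}^{r_i}$ extends uniquely to a solution of $\mc{L}'_i$, so $q$ vanishes on the entire grid $U_{p_i^{m_i}}^{r_i}$ and hence $q=0$. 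This normal-form argument is self-contained, reuses part one of the proof, and avoids any radicality claim about $\Ideal{G'_i}$; your route works once the missing derivation is added, but it is not the paper's.
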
    
 \begin{proof}
  The proof has two parts. In the first part we show that $G'_i$ is a \GB\ by showing that it satisfies the Buchberger's Criterion, Theorem~\ref{th:crit}. In the second part we show the ideal generated by $G_i'$ is equivalent to the vanishing ideal of $\Sol(\mc{L}_i')$. 
 
 Consider $\Ideal{G_i'}$. We show that $G_i'$ is a \GB\ for $\Ideal{G_i'}$ by verifying that the leading monomials of every pair of polynomials in $G_i'$ are relatively prime. For each $x_{i,j} - f_{i,j}'$, we have $\LM(x_{i,j} - f_{i,j}') = x_{i,j}$. Moreover, the leading monomial of $(y_{i,j})^{(p_i^{m_i})}-1$ is $(y_{i,j})^{(p_i^{m_i})}$.  Hence, for every pair of polynomials in $G_i'$ the leading monomials are relatively prime which, by Proposition~\ref{prop:prime-LM}, implies their reduced S-polynomial is zero. By Buchberger's Criterion, Theorem~\ref{th:crit}, it follows that $G'_i$ is a \GB\ for $\Ideal{G_i'}$ (according to the \lex order).
 
 It remains to show $\Ideal{G_i'}= \mb{I}(\Sol(\mc{L}_i'))$. It is easy to see that $\Ideal{G_i'}\subseteq \mb{I}(\Sol(\mc{L}_i'))$. This is because, by our construction we have $\Variety{\Ideal{ G_i'}}= \Sol(\mc{L}_i')$ and hence any polynomial $p\in \Ideal{ G_i'}$ is zero over all the points in $\Sol(\mc{L}_i')$ which implies $p\in \mb{I}(\Sol(\mc{L}_i'))$. Next we prove $\mb{I}(\Sol(\mc{L}_i'))\subseteq \Ideal{G_i'}$. Consider a polynomial $p\in \mb{I}(\Sol(\mc{L}_i'))$. We prove $p|_{G_i'} = 0$ i.e., normal form of $p$ by $G_i'$ is zero. Note that $p(\mb{a})=0$ for all $\mb{a}\in \Sol(\mc{L}_i')$. Let $q = p|_{G_i'}$. The polynomial $q$ does not contain variables $x_{i,1},\dots,x_{i,k_i}$. This is because $q=p|_{G_i'}$ is derived by dividing $p$ by $G_i'$, which eliminates any occurrence of $x_{i,1},\dots,x_{i,k_i}$. Specifically, these variables are the leading monomials of $x_{i,1}-f'_{i,1},..., x_{i,k_i}-f'_{i,k_i}$, thus they are canceled out during the division process. Consequently, $q$ is a polynomial in $y_{i,1},\dots,y_{i,r_i}$. Now note that any $\mb{b}=(b_{1},\dots,b_{r_i})\in U^{r_i}_{p_i^{m_i}}$ extends to a unique point in $\Sol(\mc{L}_i')$, this is because in $G_i'$ (similarly in $G_i$ and $\mc{L}_i$) all the $x_{i,1},\dots,x_{i,k_i}$ have coefficients and exponent equal to $1$.
 Therefore, all the points in $U^{r_i}_{p_i^{m_i}}$ are zeros of $q$, hence $q=p|_{G_i'}$ is the zero polynomial. Since $p\in \mb{I}(\Sol(\mc{L}_i'))$ was arbitrary chosen, it follows that for every $p\in \mb{I}(\Sol(\mc{L}_i'))$ we have $p|_{G_i'}=0$. Hence, $\mb{I}(\Sol(\mc{L}_i'))\subseteq \Ideal{G_i'}$. This finishes the proof.
 \end{proof}
 
 Given the above lemma we prove $G'=\cup_{1\leq i\leq s} G'_i$ is a \GB\ for $\I(\mc{P}')=\mb{I}(\Sol(\mc{P}'))$ with respect to the \lex order~\eqref{lex-order}.
 \begin{lemma}
     \label{lem:GB-L'}
     The set of polynomials $G'=\cup_{1\leq i\leq s} G'_i$ forms a \GB\ for $\I(\mc{P}')=\mb{I}(\Sol(\mc{P}'))$ with respect to the \lex order $x_{1,1}\succ_\lex\dots\succ_\lex x_{1,k_1}\succ_\lex\dots\succ_\lex x_{s,1}\succ_\lex\dots\succ_\lex x_{s,k_s}\succ y_{1,1}\succ_\lex\dots\succ_\lex y_{1,r_1}\succ_\lex\dots\succ_\lex y_{s,1}\succ_\lex\dots\succ_\lex y_{s,r_s}$.
 \end{lemma}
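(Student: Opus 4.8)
The plan is to verify Buchberger's Criterion (Theorem~\ref{th:crit}) for $G'=\bigcup_{i=1}^s G_i'$ and then to check that the ideal $G'$ generates is exactly $\I(\mc{P}')=\mb{I}(\Sol(\mc{P}'))$. The crucial preliminary observation is that the order \eqref{lex-order}, restricted to the block of variables $\{x_{i,1}\zd x_{i,k_i},y_{i,1}\zd y_{i,r_i}\}$, is precisely the order used in Lemma~\ref{lem:GB-P'}; since every polynomial of $G_i'$ involves only variables of block $i$, its leading term with respect to \eqref{lex-order} coincides with its leading term with respect to that block order. In particular $\LM(x_{i,j}-f_{i,j}')=x_{i,j}$ and $\LM\big((y_{i,j})^{p_i^{m_i}}-1\big)=(y_{i,j})^{p_i^{m_i}}$, so the leading monomial of every element of $G'$ is a power of a single variable, and distinct elements of $G'$ have leading monomials that are powers of \emph{distinct} variables (within a block because the $x_{i,j}$'s and $y_{i,j}$'s are pairwise distinct variables; across blocks because the systems $\mc{L}_1',\dots,\mc{L}_s'$ use disjoint variable sets).

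From this, for any two distinct $g,h\in G'$ the monomials $\LM(g)$ and $\LM(h)$ are relatively prime, so Proposition~\ref{prop:prime-LM} gives that the remainder of $S(g,h)$ on division by $G'$ is zero; by Buchberger's Criterion, $G'$ is a \GB\ of the ideal $\Ideal{G'}$ with respect to \eqref{lex-order}. (If one prefers to lean on Lemma~\ref{lem:GB-P'}, the same conclusion follows by splitting into the case $g,h\in G_i'$ --- where $G_i'$ is already a \GB\ --- and the case $g\in G_i'$, $h\in G_j'$ with $i\ne j$ --- where the disjointness of the blocks makes $\LM(g),\LM(h)$ coprime.)

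It remains to identify $\Ideal{G'}$ with $\I(\mc{P}')$. By the ideal--CSP correspondence $\I(\mc{P}')=\sum_{i=1}^s\I^V(\mc{L}_i')$, where $V$ is the full variable set and $\I^V(\mc{L}_i')$ is the extension to $\Complex[V]$ of $\I(\mc{L}_i')=\mb{I}(\Sol(\mc{L}_i'))$, which equals $\Ideal{G_i'}$ by Lemma~\ref{lem:GB-P'}. The one bookkeeping point to record is that the domain polynomial $(x_{i,j})^{p_i^{m_i}}-1$ already lies in $\Ideal{G_i'}$: indeed $(x_{i,j})^{p_i^{m_i}}-(f_{i,j}')^{p_i^{m_i}}$ is divisible by $x_{i,j}-f_{i,j}'$, while $(f_{i,j}')^{p_i^{m_i}}-1$ is a polynomial combination of the $(y_{i,\ell})^{p_i^{m_i}}-1$ (using $\om_i^{p_i^{m_i}}=1$). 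Hence $\I(\mc{P}')=\sum_{i=1}^s\Ideal{G_i'}=\Ideal{G'}$, and since $\I(\mc{P}')$ is radical the Strong Nullstellensatz gives $\mb{I}(\Sol(\mc{P}'))=\mb{I}(\Variety{\I(\mc{P}')})=\sqrt{\I(\mc{P}')}=\I(\mc{P}')=\Ideal{G'}$, which finishes the proof.

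I do not expect any serious obstacle. The argument is entirely driven by two structural facts already in hand --- that $\mc{L}_1',\dots,\mc{L}_s'$ use pairwise disjoint variables and that each generator of $G'$ has a single-variable leading monomial --- so all $S$-polynomials vanish for the trivial reason of relatively prime leading terms. The only places that merit a careful sentence are the compatibility of \eqref{lex-order} with the block orders of Lemma~\ref{lem:GB-P'} and the verification that the CSP ideal $\I(\mc{P}')$, with all of its domain polynomials, is exactly $\Ideal{G'}$.
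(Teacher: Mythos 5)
Your proof is correct and reaches the same conclusion, but the route is somewhat more explicit than the paper's. The paper's argument is set-theoretic: it computes $\Sol(\mc{P}')=\bigcap_i\Sol(\mc{L}_i')=\Variety{\sum_i\Ideal{G_i'}}$ via Theorem~\ref{th:ideal_intersection}, infers $\I(\mc{P}')=\sum_i\Ideal{G_i'}$, and then simply \emph{asserts} that since each $G_i'$ is a \GB\ and the blocks share no variables, the union $\bigcup_iG_i'$ is a \GB\ of the sum (a standard fact, but stated without justification there). You instead verify Buchberger's Criterion for $G'$ directly: every element of $G'$ has a single-variable-power leading term under \eqref{lex-order}, the variables involved are pairwise distinct (within a block because $x_{i,1},\dots,x_{i,k_i},y_{i,1},\dots,y_{i,r_i}$ are distinct, across blocks by disjointness), so all $S$-polynomials reduce to zero by Proposition~\ref{prop:prime-LM}. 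This makes the ``disjoint blocks $\Rightarrow$ union is a \GB'' step fully explicit, which is a small but genuine improvement in rigor. The identification $\Ideal{G'}=\I(\mc{P}')$ is handled slightly differently too: you argue via the ideal--CSP correspondence $\I(\mc{P}')=\sum_i\I^V(\mc{L}_i')$ plus $\I(\mc{L}_i')=\Ideal{G_i'}$ from Lemma~\ref{lem:GB-P'}, and you add the (correct, though strictly speaking redundant given Lemma~\ref{lem:GB-P'}) observation that $(x_{i,j})^{p_i^{m_i}}-1\in\Ideal{G_i'}$, confirming that $G_i'$ encodes the $x$-domain constraints even though they are not listed among the generators. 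The paper gets the same equality by first passing through varieties and then invoking radicality implicitly. Both are sound; yours is the more carefully spelled-out version.
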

 \begin{proof}
    Recall that 
    \begin{align*}
    \Sol(\mc{P}') &= \Sol(\mc{L}_1') \cap \dots \cap \Sol(\mc{L}_s')\\
    & = \Variety{\mb{I}(\Sol(\mc{L}_1'))} \cap \dots \cap \Variety{\mb{I}(\Sol(\mc{L}_s'))}\\
    &=  \Variety{\Ideal{G_1'}} \cap \dots \cap \Variety{\Ideal{G_s'}}\tag{by Lemma~\ref{lem:GB-P'}}\\
    &= \Variety{\Ideal{G_1'} + \dots + \Ideal{G_s'}} \tag{by Theorem~\ref{th:ideal_intersection}}
\end{align*}
This implies, 
\begin{align*}
   \I(\mc{P}')=\mb{I}(\Sol(\mc{P}')) =  \Ideal{G_1'}+\dots+\Ideal{G_s'}.
\end{align*}
Now by Lemma~\ref{lem:GB-P'} each $G_i'$ is a \GB. Moreover, observe that for all distinct $i$ and $j$ the ideals $\Ideal{G_i'}$ and $\Ideal{G_j'}$ do not share variables. Hence, the set of polynomials $G_1'\cup\dots\cup G_s'$ is indeed a \GB\ for $\mb{I}(\Sol(\mc{P}'))$, according to the \lex order.
 \end{proof}

\Cref{lem:equivalent} states that checking if a polynomial $p$ is in $\I(\mc{P})$ is equivalent to checking if the polynomial $p'$ is in $\I(\mc{P}')$. However, the substitution technique used in Lemma \ref{lem:equivalent} may result in a polynomial with exponentially many monomials and hence we only consider polynomials of bounded degree. Provided that $G'$ is a \GB\ for $\I(\mc{P}')$ with respect to the \lex order, see Lemma~\ref{lem:GB-L'}, we can test membership of any bounded degree polynomial in polynomial time, this gives us \Cref{thm:main-decision}. 

\section{Finding a proof and the substitution technique}
\label{sec:sub}

In~\cite{Bulatov20:ideal} we introduced a framework to bridge the gap between the decision and the search versions of the \IMP. Indeed, this framework gives a polynomial time algorithm to construct a truncated \GB\ provided that the search version of a variation of the \IMP\ is polynomial time solvable. This variation is called \xIMP\ and is defined as follows.

\begin{definition}[\xIMP]
    Given an ideal $\I\sse\Field[\vc x n]$ and a vector of $m$ polynomials $M=(g_1,\dots,g_m)$, the \xIMP~asks if there exist coefficients $\mathbf{c}=(\vc c m)\in \Field^m$ such that $\mathbf{c}M=\sum_{i=1}^m c_i g_i$ belongs to the ideal $\I$. In the search version of the problem the goal is to find coefficients $\bc$.
\end{definition}

    The \xIMP~associated with a (multi-sorted) constraint language $\Gamma$  over a set $\mc{D}$ is the problem \xIMP$(\Gamma)$ in which the input is a pair $(M,\cP)$ where $\cP$ is a
    $\CSP(\Gm)$ instance and $M$ is a vector of $m$ polynomials. The goal is to decide
    whether there are coefficients $\mathbf{c}=(\vc c m)\in \Field^m$ such that $\mathbf{c}M$ lies in the combinatorial ideal $\I(\cP)$. We use $\chi\IMP_d(\Gamma)$ to denote $\chi\IMP(\Gamma)$ when the vector $M$ contains polynomials of total degree at most $d$.
    
    \begin{theorem}[\cite{Bulatov20:ideal}]
    \label{thm:GB+xIMP}
        Let $\mc{H}$ be a class of ideals for which the search version of $\chi\IMP_d$ is polynomial time solvable. Then there exists a polynomial time algorithm that constructs a $d$-truncated \GB~of an ideal $\I\in \mc{H}$, $\I\subseteq\Field[\vc xn]$, in time $O(n^d)$.  
    \end{theorem}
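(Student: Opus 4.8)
The plan is to adapt the incremental, FGLM/Buchberger--M\"oller style construction of a Gr\"obner basis, using the search version of $\chi\IMP_d$ as an oracle in place of linear algebra in a coordinate ring. I would work with the \grlex order, so that $\deg(\LM(g))=\deg(g)$ for every nonzero $g$; this is exactly what keeps all intermediate degrees bounded by $d$. Enumerate the monomials $\bx^\alpha$ of degree at most $d$ in $n$ variables in increasing \grlex order --- there are $\binom{n+d}{d}=O(n^d)$ of them --- and maintain two lists, both initially empty: a partial basis $G$ of monic polynomials of degree at most $d$ with pairwise distinct leading monomials, and a ``staircase'' $B$ of monomials declared \emph{standard}, i.e.\ decided not to be the leading monomial of any element of $\I$.

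When a monomial $\bx^\alpha$ is reached, first test whether $\bx^\alpha$ is divisible by $\LM(g)$ for some $g\in G$; if so it is reducible and we move on. Otherwise form the vector $M=(\bx^\alpha,\bx^{\beta_1},\dots,\bx^{\beta_t})$, where $\bx^{\beta_1},\dots,\bx^{\beta_t}$ are the monomials currently in $B$ (all of degree at most $d$ and $\prec\bx^\alpha$), and run the polynomial-time search algorithm for $\chi\IMP_d$ on the instance $(M,\cP)$, where $\cP$ is the CSP instance defining $\I$. If the only solution is the zero vector, append $\bx^\alpha$ to $B$; if a nonzero solution $\mathbf c=(c_0,c_1,\dots,c_t)$ is returned, append $g:=\bx^\alpha+c_0^{-1}(c_1\bx^{\beta_1}+\dots+c_t\bx^{\beta_t})$ to $G$. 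After all monomials of degree at most $d$ have been processed, output $G$.

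Correctness rests on an induction on the processed monomials maintaining the invariants that $B$ is precisely the set of standard monomials seen so far and that $\LM(G)$ is precisely the set of minimal generators of $\langle\LT(\I)\rangle$ of degree at most $d$ seen so far. The crucial observation is that, under these invariants, no nonzero $\Field$-combination of the monomials in $B$ can lie in $\I$: such a combination would be a nonzero element of $\I$ of degree at most $d$ whose leading monomial is standard, a contradiction. Hence any nonzero solution $\mathbf c$ returned by $\chi\IMP_d$ on $M$ must have $c_0\neq 0$, so $g$ is well defined, monic, with $\LM(g)=\bx^\alpha$ and a tail supported on standard monomials. Conversely, if $\bx^\alpha$ is the leading monomial of some $h\in\I$, then dividing the monic normalization of $h$ by the current $G$ --- which in \grlex never raises the degree, and which does not touch the leading term because $\bx^\alpha$ is not divisible by any $\LM(g)$ --- produces an element of $\I$ of degree at most $d$ with leading monomial $\bx^\alpha$ and standard tail, so $\chi\IMP_d$ does return a nonzero solution. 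Putting these together shows the output $G$ is the reduced $d$-truncated \GB: every monomial of degree at most $d$ in $\langle\LT(\I)\rangle$ is a multiple of a minimal generator of $\langle\LT(\I)\rangle$ of degree at most $d$, which is some $\LM(g)$. The cost is $O(n^d)$ iterations, each doing one call to the polynomial-time $\chi\IMP_d$ search routine plus polynomial bookkeeping, matching the stated $O(n^d)$ bound.

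I expect the main obstacle to be making the two invariants precise and, in particular, verifying rigorously that the whole computation stays within degree $d$: that reduction by the partial basis in \grlex cannot raise degrees, that ``standard'' is correctly decidable from the smaller monomials alone, and that the set $G$ produced is genuinely the reduced Gr\"obner basis truncated at degree $d$ rather than some non-canonical generating set. The remaining ingredients --- the divisibility tests, normalizing $g$, and counting monomials --- are routine.
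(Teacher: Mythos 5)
This theorem is cited from \cite{Bulatov20:ideal} without proof in the present paper, so there is no in-paper argument to compare against; but your reconstruction is correct and is exactly the standard Buchberger--M\"oller/FGLM-style construction that the cited reference uses. Your use of \grlexns\ so that reduction and leading-monomial degrees stay bounded by $d$, the two invariants (that $B$ is exactly the standard monomials processed so far, and $\LM(G)$ is exactly the minimal generators of $\Ideal{\LT(\I)}$ of degree at most $d$ processed so far), and the observation that $c_0\neq 0$ because the standard monomials are $\Field$-linearly independent modulo $\I$, are precisely the right ingredients; the one point worth spelling out, which you essentially have, is that if $\bmx^\alpha\in\Ideal{\LT(\I)}$ but is not divisible by any current $\LM(g)$, then the invariant forces $\bmx^\alpha$ to itself be a minimal generator of $\Ideal{\LT(\I)}$, so dividing a witness $h\in\I$ with $\LM(h)=\bmx^\alpha$ by $G$ yields a polynomial in $\I$ supported on $\bmx^\alpha$ together with monomials in $B$, which is exactly what $\chi\IMP_d$ is asked to find. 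Your output is the reduced Gr\"obner basis truncated at degree $d$, which is indeed a $d$-truncated Gr\"obner basis in the sense of the paper's definition, and the $O(n^d)$ iteration count with one polynomial-time oracle call per iteration matches the stated bound.
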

    The above theorem suggests that, in order to prove the second part of Theorem~\ref{the:main-intro}, it is sufficient to show that \xIMP\ is polynomial time solvable for instances of \CSP\ arising from constraint languages that are closed under the affine operation of an Abelian group.

Similar to the $\IMP$, the $\xIMP$ behaves well with respect to pp-definitions. The following theorem is proved in \cite{Bulatov20:ideal} for one-sorted constraint languages, but it can be extended in a straightforward way to the multi-sorted case, in the same manner as in \Cref{thm:pp-interpret-multi-sorted}.

\begin{theorem}[\cite{Bulatov20:ideal}]\label{the:ximp-pp-defininition}
Let $\Gm,\Dl$ be constraint languages over the same collection of sets $\mc D$. If $\Gm$ pp-defines $\Dl$, then $\xIMP(\Dl)$ is polynomial time reducible to $\IMP(\Gm)$.
\end{theorem}

    It was shown that having a \GB\ yields a polynomial time algorithm for solving the search version of \xIMP\ (by using the \emph{division algorithm} and solving a system of linear equations). 
\begin{theorem}[\cite{Bulatov20:ideal}]
\label{thm:GB-XIMP}
    Let $\I$ be an ideal, and let $\{\vc g s\}$ be a \GB~for $\I$ with respect to some monomial ordering. Then the (search version of) \xIMP~ is polynomial time solvable.
 \end{theorem}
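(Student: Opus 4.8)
The plan is to reduce the search version of \xIMP\ to a single linear-algebra computation, exploiting that a \GB\ makes the normal-form (remainder) operation both computable and \emph{linear}. To avoid a clash of names, write the components of the input vector as $M=(h_1,\dots,h_\ell)$, so that $\mathbf{c}M=\sum_{i=1}^{\ell}c_ih_i$, and let $G=\{g_1,\dots,g_s\}$ be the given \GB\ of $\I$. The first step is to invoke the division algorithm of Proposition~\ref{prop:division-GB}: for every $i\in[\ell]$ compute the normal form $r_i:=h_i|_G$, the unique remainder of $h_i$ on division by $G$. Recall that a polynomial belongs to $\I$ precisely when its normal form is $0$.

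The key observation is that $h\mapsto h|_G$ is an $\Field$-linear map. Indeed, by Proposition~\ref{prop:division-GB} the normal form $r=h|_G$ is characterised by the two properties that $h-r\in\I$ and that no monomial of $r$ is divisible by any $\LT(g_j)$. Given scalars $c_1,\dots,c_\ell\in\Field$, the polynomial $\sum_i c_i r_i$ again has no monomial divisible by any $\LT(g_j)$ (its monomials occur among those of the $r_i$), and $\sum_i c_i h_i-\sum_i c_i r_i\in\I$; uniqueness in Proposition~\ref{prop:division-GB} therefore forces $(\mathbf{c}M)|_G=\sum_{i=1}^{\ell}c_ir_i$. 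Consequently $\mathbf{c}M\in\I$ if and only if $\sum_{i=1}^{\ell}c_ir_i=0$ as a polynomial.

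Now $\sum_{i=1}^{\ell}c_ir_i=0$ is a homogeneous system of linear equations over $\Field$ in the unknowns $c_1,\dots,c_\ell$: for each monomial $\bx^{\alpha}$ occurring in some $r_i$, collecting its coefficient yields one equation $\sum_{i=1}^{\ell}(r_i)_{\alpha}\,c_i=0$, where $(r_i)_{\alpha}\in\Field$ is the coefficient of $\bx^{\alpha}$ in $r_i$. The number of such equations is bounded by the total number of distinct monomials appearing among $r_1,\dots,r_\ell$, so the solution set is a linear subspace of $\Field^{\ell}$ computable by Gaussian elimination in polynomial time. Since the bare existential question admits the trivial witness $\mathbf{c}=\mathbf{0}$, the instances of interest come with a normalisation — typically a designated coordinate forced to $1$, or $\mathbf{c}\neq\mathbf{0}$ — and in either case one simply intersects the computed subspace with the corresponding affine subspace (or reports that the intersection is empty) and outputs a witness; this is again elementary linear algebra.

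The one point that requires care — and the \textbf{main obstacle} — is bounding the running time of the division algorithm itself, since for a general \GB\ the polynomials produced while reducing $h_i$ modulo $G$ could a priori acquire many monomials or coefficients of rapidly growing bit-length. In the setting that matters for this paper this is not an issue: we work with bounded-degree inputs (as in $\chi\IMP_d$), and for the combinatorial ideals here — in particular the root-of-unity \GBs\ $G'_i$ built in the previous section, whose generators have as leading term either a single variable $x_{i,j}$ or a pure power $y_{i,j}^{p_i^{m_i}}$ — each reduction step amounts to a substitution followed by reduction of exponents modulo $p_i^{m_i}$, so every monomial of the dividend contributes at most one monomial to the remainder and the whole computation stays polynomial. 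Granting this, the algorithm — compute the $\ell$ normal forms, assemble and solve the linear system, apply the normalisation — runs in polynomial time, which proves the theorem.
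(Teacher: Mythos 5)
Your argument is correct and is, as far as I can tell, essentially the standard proof of this cited result; since Theorem~\ref{thm:GB-XIMP} is imported from \cite{Bulatov20:ideal}, the present paper does not reprove it, so there is no internal proof to compare against. The three steps you use --- (i) compute the normal forms $r_i=h_i|_G$ by the division algorithm of Proposition~\ref{prop:division-GB}, (ii) observe that the normal-form map is $\Field$-linear because $\sum_i c_i r_i$ again has no monomial divisible by any $\LT(g_j)$ while $\sum_i c_i h_i-\sum_i c_i r_i\in\I$, so uniqueness forces $(\mathbf{c}M)|_G=\sum_i c_i r_i$, and (iii) translate $\sum_i c_i r_i=0$ into a homogeneous linear system whose solution space (intersected with the affine normalisation built into the $\chi\IMP$ instance) is found by Gaussian elimination --- are exactly what is needed, and you correctly flag the one genuine subtlety, namely that "polynomial time" is only meaningful once one knows the division step produces polynomially bounded output. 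Your justification of that point for the root-of-unity bases $G'_i$ of this paper is sound: each generator is either $x_{i,j}$ minus a single monomial (so reduction is a monomial substitution) or a pure power $y_{i,j}^{p_i^{m_i}}-1$ (so reduction is exponent arithmetic mod $p_i^{m_i}$), and hence a monomial reduces to a monomial, keeping the term count and degree polynomially bounded. One could sharpen the exposition by saying explicitly that the theorem as stated presupposes the Gr\"obner basis is presented explicitly and that division by it runs in polynomial time; this implicit hypothesis is what your final paragraph is discharging for the ideals that actually arise in the paper.
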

 
    Given the above theorem, to solve the \xIMP\ one might reduce the problem at hand to a problem for which a \GB\ can be constructed in a relatively simple way. This approach has been proven to be extremely useful in various cases studied in \cite{Bulatov20:ideal}. In that paper the reductions for \xIMP\ are proved in an ad hoc manner. However, the core idea in all of them is a substitution technique. Here we provide a unifying construction  based on \emph{substitution reductions} that covers all the useful cases so far.
    
\subsection{Reduction by substitution}
\label{sec:redu-by-sub}

We call a subproblem of the IMP or $\xIMP$ \emph{CSP-based} if its instances are of the form $(f,\cP)$ or $(M,\cP)$, where $\cP$ is a CSP instance over a fixed set $D$. Let $\mathcal X,\mathcal Y$ be restricted CSP-based subproblems of the \xIMP. The problems $\mathcal X,\mathcal Y$ can be defined by various kinds of restrictions, for example, as $\xIMP(\Gm),\xIMP(\Dl)$, but not necessarily. Let the domain of $\cX$ be $D$ and the domain of $\cY$ be $E$. Let also $\vc\mu k$ be a collection of surjective functions $\mu_i:E^{\ell_i}\to D$, $i\in[k]$. Each mapping $\mu_i$ can be interpolated by a polynomial $h_i$. We call the collection $\{\vc hk\}$ a \emph{substitution collection} and call $\ell_i$ the substitution arity of $h_i$. 

We define substitution reductions for the $\xIMP$. For the IMP it is quite similar. The problem $\mathcal X$ is said to be \emph{substitution reducible} to $\mathcal Y$ if there exists a substitution collection $\{\vc hk\}$ and a polynomial time algorithm $A$ such that for every instance $(M,\cP)$ of $\cX$ the instance constructed as follows belongs to $\cY$.
\begin{itemize}
    \item[(1)] 
    Let $X$ be the set of variables of $(M,\cP)$. For every $x\in X$ the algorithm $A$ selects a polynomial $h_{i_x}$ and a set of variables $Y_x$ such that 
    \begin{itemize}
        \item[(a)]
        $|Y_x|=\ell_{i_x}$;
        \item[(b)]
        for any $x,y\in X$ either $Y_x=Y_y$ or $Y_x\cap Y_y=\emptyset$;
        \item[(c)]
        if $\vc xr\in X$ are such that $Y_{x_1}=\dots=Y_{x_r}=\{\vc y{\ell_j}\}$ then for any solution $\vf$ of $\cP$ there are values $\vc a{\ell_j}\in E$ such that $\vf(x_i)=h_{i_{x_i}}(\vc a{\ell_j})$.
    \end{itemize}
    \item[(2)]
    If $M=(\vc g m)$ then $M'=(\vc{g'} m)$, where for $g_i(\vc xt)$
    \[
        g'_i=g_i(h_{i_{x_1}}(Y_{x_1})\zd h_{i_{x_t}}(Y_{x_t})).
    \]
    \item[(3)]
    Let $Y=\bigcup_{x\in X}Y_x$. The instance $\cP'$ is given by $(Y,E,\cC')$, where for every constraint $\ang{\bs,R}$, $\bs=(\vc xt)$, $\cP'$ contains the constraint $\ang{\bs',R'}$ such that
    \begin{itemize}
        \item[--]
        $\bs'=(x_{1,1}\zd x_{1,\ell_{x_1}},x_{2,1}\zd x_{t,\ell_{x_t}})$, where $Y_{x_j}=\{x_{j,1}\zd x_{j,\ell_j}\}$;
        \item[--]
        $R'$ is an $\ell$-ary relation, $\ell=\ell_{x_1}+\zd+\ell_{x_t}$, such that $(a_{1,1}\zd a_{1,\ell_{x_1}},a_{2,1}\zd a_{t,\ell_{x_t}})\in R'$ if and only if $(h_{i_{x_1}}(a_{1,1}\zd a_{1,\ell_{x_1}})\zd h_{i_{x_t}}(a_{t,1}\zd a_{t,\ell_{x_t}}))\in R$.
    \end{itemize}
\end{itemize}

\begin{example}\label{exa:substitution}
(1) Let $D=\{0,1,2\}$, $E=\{0,1\}$, $\mathcal X,\mathcal Y$ are of the form $\xIMP(\Gm),\xIMP(\Dl)$, respectively, where $\Gm=\{R\}$ is a constraint language on $D$, and $\Dl=\{R'\}$ is a constraint language on $E$ with
\[
R=\left(\begin{array}{ccccc}
0&1&2&0&1\\
0&1&2&2&2 
\end{array}\right),\qquad
R'=\left(\begin{array}{ccccc}
0&1&1&0&1\\
1&0&1&1&0\\
0&1&1&1&1\\
1&0&1&1&1
\end{array}\right).
\]
Note that $R$ is partial order on $D$. Let $\mu_1:E^2\to D$ given by $\mu_1(0,0)=\mu_1(0,1)=0$, $\mu_1(1,0)=1$, $\mu_1(1,1)=2$. Then the polynomial $h_1(x,y)=xy+x$ interpolates $\mu_1$.

Consider the instance of $\mathcal X$: $X=\{x,y,z\}$, $\mathcal C=\{C_1=\ang{(x,y),R}, C_2=\ang{(y,z),R}$, $f=xy+yz$. In order to perform substitution we set $Y_x=\{x_1,x_2\},Y_y=\{y_1,y_2\},Y_z=\{z_1,z_2\}$. Then $Y=\{x_1,x_2,y_1,y_2,z_1,z_2\}$ and the constraints are converted as follows: $C_1$ is converted into $\ang{(x_1,x_2,y_1,y_2),R'}$ and $C_2$ is converted into $\ang{(y_1,y_2,z_1,z_2),R'}$. Finally, $f$ becomes
\[
f(h_1(x_1,x_2),h_1(y_1,y_2),h_1(z_1,z_2))=(x_1x_2+x_1)(y_1y_2+y_1)+(y_1y_2+y_1)(z_1z_2+z_1).
\]
It is straightforward to verify that conditions (1)--(3) hold for this transformation showing that $\mathcal X$ is substitution reducible to $\mathcal Y$.
\end{example}

\begin{lemma}
\label{lem:reduct-sub}
     Let $\mathcal X,\mathcal Y$ be restricted CSP-based subproblems of the $\xIMP_d$ and $\xIMP_{r d}$, respectively. If $\mc X$ is substitution reducible to $\mc{Y}$ with a substitution collection $\{\vc hk\}$ of substitution arities $\{\vc\ell k\}$, and $r\ge\ell_i$ for each $i\in[k]$, then there is a polynomial time reduction from $\mc{X}$ to $\mc{Y}$. 
\end{lemma}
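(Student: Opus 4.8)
The plan is to show that the map $(M,\cP)\mapsto (M',\cP')$ described in items (1)--(3) of the definition of substitution reducibility is itself a valid polynomial-time reduction from $\mc X$ to $\mc Y$, i.e.\ it preserves \textbf{yes}/\textbf{no} answers of the $\xIMP$. Since the definition already guarantees that the output instance lies in $\mc Y$ and that $A$ runs in polynomial time, the only thing left to verify is the equivalence of the two $\xIMP$ instances, plus the bookkeeping that the degree grows by a factor at most $r$ (so that $\mc Y$ is indeed $\xIMP_{rd}$ when $\mc X$ is $\xIMP_d$). The degree bound is immediate: each $g_i$ has total degree $\le d$, and substituting $x_j\mapsto h_{i_{x_j}}(Y_{x_j})$, a polynomial of degree $\le \ell_{i_{x_j}}\le r$, yields $g_i'$ of degree $\le rd$; this handles the "$r\ge\ell_i$" hypothesis.

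The core is the equivalence. First I would unwind what the two $\xIMP$ instances ask. For $(M,\cP)$ with $M=(\vc g\ell)$, a \textbf{yes} answer means there is $\bc=(\vc c\ell)\in\Field^\ell$ with $\sum_i c_i g_i\in\I(\cP)$; by the Strong Nullstellensatz (the ideals here are radical combinatorial ideals) this is equivalent to: for every $\bc$ with $\sum_i c_ig_i\notin\I(\cP)$ there would be a witness $\vf\in\Sol(\cP)$ with $(\sum_i c_ig_i)(\vf)\ne0$ --- so more usefully, $\bc$ is a \textbf{no}-certificate iff $\sum_i c_i g_i$ vanishes on all of $\Sol(\cP)$, and the instance is \textbf{yes} iff some $\bc$ makes this happen. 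The same description applies to $(M',\cP')$ with $\Sol(\cP')$. So I want to exhibit a correspondence between $\Sol(\cP)$ and $\Sol(\cP')$ under which $g_i$ and $g_i'$ take equal values. Define $H\colon E^{Y}\to D^{X}$ by $H(\psi)(x)=h_{i_x}(\psi(Y_x))$ (reading $\psi$ restricted to the block $Y_x$). Condition (1c) says precisely that $H$ maps $\Sol(\cP')$ \emph{onto} $\Sol(\cP)$: every solution $\vf$ of $\cP$ is hit, because for each block of $x$'s sharing a $Y$-set there exist values in $E$ realizing the $\vf(x_i)$ simultaneously; and condition (3) --- that $R'$ is the $H$-preimage of $R$ constraint-by-constraint --- guarantees $H$ actually lands in $\Sol(\cP')\to\Sol(\cP)$ and, conversely, that any $\psi\in\Sol(\cP')$ maps to a solution of $\cP$. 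Finally, by construction $g_i'(\psi)=g_i(H(\psi))$ for all $\psi$, since $g_i'$ is literally $g_i$ with $x_j$ replaced by $h_{i_{x_j}}(Y_{x_j})$. Hence for any $\bc$, $\sum_i c_i g_i$ vanishes on $\Sol(\cP)$ iff $\sum_i c_i g_i'$ vanishes on $\Sol(\cP')$ --- the "only if" uses surjectivity of $H$ onto $\Sol(\cP)$, the "if" uses that $H$ maps $\Sol(\cP')$ into $\Sol(\cP)$ --- so the same vector $\bc$ works for one instance iff it works for the other, proving $(M,\cP)\in\mc X$ is a \textbf{yes}-instance iff $(M',\cP')\in\mc Y$ is. This gives the reduction; since the answer vectors $\bc$ are literally the same, it also transports search solutions.

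The step I expect to need the most care is pinning down the direction of (1c): it asserts existence of $E$-values realizing a given solution of $\cP$ simultaneously across all $x$ sharing a block $Y_x$, which is exactly the surjectivity of $H\colon\Sol(\cP')\to\Sol(\cP)$; one must check that these locally chosen values can be assembled into a single $\psi\in E^Y$ without conflict, and this is where condition (1b) (the blocks $Y_x$ are either equal or disjoint) is essential --- it makes $Y$ a disjoint union of blocks so the choices are independent. The other mild subtlety is making sure the constraint translation in (3) is well-typed: the scope $\bs'$ must list the block-variables in a fixed order consistent with how $R'$ is defined, and one should note that if the same variable $x$ occurs twice in a scope $\bs$, the corresponding block $Y_x$ is repeated, which is harmless since relations may have repeated coordinates. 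Once these points are nailed down, everything else is routine substitution bookkeeping, and the polynomial-time bound follows because $|Y|\le \sum_{x}\ell_{i_x}\le r|X|$ and each $g_i'$, $R'$ is computed by a single substitution.
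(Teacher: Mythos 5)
Your proof is correct and takes essentially the same approach as the paper's. What you package as a surjective map $H\colon\Sol(\cP')\to\Sol(\cP)$, $H(\psi)(x)=h_{i_x}(\psi(Y_x))$, together with the identity $g_i'(\psi)=g_i(H(\psi))$ and the Strong Nullstellensatz, is exactly the paper's two-directional construction of a solution of $\cP$ from one of $\cP'$ (via condition (3)) and conversely (via conditions (1b),(1c)), followed by the observation that $f'(\psi)=0$ iff $f(\vf)=0$ when $\vf=H(\psi)$.

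One small imprecision in your degree count: a polynomial $h_{i_{x_j}}$ interpolating $\mu_{i_{x_j}}\colon E^{\ell_{i_{x_j}}}\to D$ has total degree up to $\ell_{i_{x_j}}(|E|-1)$, not $\ell_{i_{x_j}}$, so the bound $\deg(g_i')\le rd$ does not follow from $r\ge\ell_i$ by the arithmetic you give. The paper is similarly informal on this point; in both cases the degree constraint is actually guaranteed by the hypothesis that the constructed $(M',\cP')$ belongs to $\cY$, which is built into the definition of substitution reducibility, so the argument is not harmed.
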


\begin{proof}
    Let $(M,\mc{P})$ with $M=(\vc g m)$ be an instance of $\mathcal X$. Moreover, suppose polynomials in $M$ have total degree at most $d$. By the definition of substitution reducibility, in polynomial time, we construct an instance $(M',\mc{P}')$ with $M'=(\vc {g'}m)$ from $\mathcal Y$ that satisfies the conditions in the definition. Note that each polynomial $h_i$ in the substitution collection has degree at most $\ell_i|D|$, therefore, each $g'_i$ in $M'$ has a bounded degree. We now prove $(M,\mc{P})$ is a \textbf{yes} instance if and only if $(M',\mc{P}')$ is a \textbf{yes} instance.
    
    Recall that  $Y=\bigcup_{x\in X}Y_x$ and set $X=\{\vc xn\}$, $Y=\{\vc y{n'}\}$. Let $\I(\mc{P})\subseteq \Field[X]$ and $\I(\mc{P}')\subseteq \Field[Y]$ be the ideals corresponding to $\mc{P}$ and $\mc{P}'$, respectively. Consider an arbitrary $\mathbf{c}\in \Field^m$ and set $f\in \Field[X]$ to be
    \begin{align*}
        f(\vc xn) &= \sum_{i=1}^m c_i g_i(\vc xn)\\
        &=\mathbf{c}M.
    \end{align*}
    Now define the polynomial $f'\in\Field[Y]$ to be
    \begin{align*}
        f' (\vc ym)&= \sum_{i=1}^m c_i g_i(h_{i_{x_1}}(Y_{x_1}),\dots,h_{i_{x_n}}(Y_{x_n}))\\
        &= \sum_{i=1}^m c_i g'_i(Y)\\
        &=\mathbf{c}M'.
    \end{align*}
    
    In what follows, we prove that we can construct a satisfying assignment for $\mc{P}'$ from a satisfying assignment for $\mc{P}$, and vice versa. Consider a satisfying assignment $\psi$ for the instance $\mc{P}'$. We find a mapping $\vf:X\to D$ and show it is a satisfying assignment for $\mc{P}$. Define $\vf$ as follows. For any $x\in X$ with $Y_x=\{y_{x,1},\dots,y_{x,\ell_{i_x}}\}$ let 
    \begin{align*}
        \vf(x)= h_{i_x}(\psi(y_{x,1}),\dots,\psi(y_{x,\ell_{i_x}})).
    \end{align*}
    Consider a constraint from $\mc{P}$, say $\ang{\bs,R}$ with $\bs=(x_1,\dots,x_t)$. By definition, there exists a constraint $\ang{\bs',R'}$ such that 
    \begin{itemize}
        \item[$-$]
        $\bs'=(x_{1,1}\zd x_{1,\ell_{x_1}},x_{2,1}\zd x_{t,\ell_{x_t}})$, where $Y_{x_j}=\{x_{j,1}\zd x_{j,\ell_j}\}\subseteq Y$;
        \item[$-$]
        $R'$ is an $\ell$-ary relation, $\ell=\ell_{x_1}+\dots+\ell_{x_t}$, such that $(a_{1,1}\zd a_{1,\ell_{x_1}},a_{2,1}\zd a_{t,\ell_{x_t}})\in R'$ if and only if $(h_{i_{x_1}}(a_{1,1}\zd a_{1,\ell_{x_1}})\zd h_{i_{x_t}}(a_{t,1}\zd a_{t,\ell_{x_t}}))\in R$.
    \end{itemize}
    Now since $\psi$ is a satisfying assignment of $\mc{P}'$ then 
    \[
        (\psi(x_{1,1})\zd \psi(x_{1,\ell_{x_1}}),\psi(x_{2,1})\zd \psi(x_{t,\ell_{x_t}}))\in R'
    \] 
    if and only if 
    \[
        \left(h_{i_{x_1}}(\psi(x_{1,1})\zd \psi(x_{1,\ell_{x_1}}))\zd h_{i_{x_t}}(\psi(x_{t,1})\zd \psi(x_{t,\ell_{x_t}}))\right)\in R.
    \]
    Hence, since $\psi$ is a satisfying assignment for $\mc{P}'$ then $\vf$ is a satisfying assignment for $\mc{P}$.
    
    Conversely, consider a satisfying assignment $\vf$ for the instance $\mc{P}$. We find a mapping $\psi:Y\to E$ and show it is a satisfying assignment for $\mc{P}'$. Define $\psi$ as follows. Let $x_1,\dots,x_r$ be all the variables such that $y\in Y_{x_i}$, $i\in [r]$. According to the definition, item 1(b), we must have $Y_{x_1}=\dots=Y_{x_r}=\{y_1,y_2,\dots,y_{\ell_j}\}$. Without loss of generality, suppose $y=y_1$. Now, according to item 1(b) of the definition, since $\vf$ is a solution of $\mc{P}$ there are values $a_1,\dots,a_{\ell_{j}}\in E$ such that $\vf(x_i)=h_{i_{x_i}}(\vc a{\ell_j})$ for all $ i\in [r]$. Hence, in this case we set $\psi(y)=a$.

    
    Now we show that $\psi$ is a satisfying assignment for $\mc{P}'$. Consider a constraint from $\mc{P}'$, let's say $\ang{\bs',R'}$. By item (3) in the definition, there exists a constraint $\ang{\bs,R}$ with $\bs=(x_1,\dots,x_t)$ in $\mc{P}$ that gives rise to $\ang{\bs',R'}$ such that 
    \begin{itemize}
        \item[$-$]
        $\bs'=(x_{1,1}\zd x_{1,\ell_{x_1}},x_{2,1}\zd x_{t,\ell_{x_t}})$, where $Y_{x_j}=\{x_{j,1}\zd x_{j,\ell_j}\}\subseteq Y$;
        \item[$-$]
        $R'$ is an $\ell$-ary relation, $\ell=\ell_{x_1}+\zd+\ell_{x_t}$, such that $(a_{1,1}\zd a_{1,\ell_{x_1}},a_{2,1}\zd a_{t,\ell_{x_t}})\in R'$ if and only if $(h_{i_{x_1}}(a_{1,1}\zd a_{1,\ell_{x_1}})\zd h_{i_{x_t}}(a_{t,1}\zd a_{t,\ell_{x_t}}))\in R$.
    \end{itemize}
    Now since $\vf$ is a satisfying assignment of $\mc{P}$ then 
    \begin{align*}
        &(\vf(x_1)\zd \vf(x_{n}))  \in R \\
        \implies &\left(h_{i_{x_1}}(\psi(x_{1,1})\zd \psi(x_{1,\ell_{x_1}}))\zd h_{i_{x_t}}(\psi(x_{t,1})\zd \psi(x_{t,\ell_{x_t}}))\right) \in R
    \end{align*}
    if and only if 
    \[
        (\psi(x_{1,1})\zd \psi(x_{1,\ell_{x_1}}),\psi(x_{2,1})\zd \psi(x_{t,\ell_{x_t}}))\in R'
    \] 
     Hence, since $\vf$ is a satisfying assignment for $\mc{P}$ then $\psi$ is a satisfying assignment for $\mc{P}'$.
    
     It remains to show $f'(\psi)=0$ if and only if $f(\vf)=0$. This is the case because
     \begin{align*}
        f'(\psi) &=  \sum_{i=1}^m c_i g'_i(\psi)\\
        &= \sum_{i=1}^m c_i g_i\left(h_{i_{x_1}}(\psi(Y_{x_1})),\dots,h_{i_{x_n}}(\psi(Y_{x_n}))\right)\\
        &= \sum_{i=1}^m c_i g_i(\vf(x_1),\dots,\vf(x_n))\\
        &= f(\vf)
    \end{align*}
    Therefore we have proved
    \[
        \exists \mathbf{c}\in \Field^m \text{~with~} \mathbf{c}M\in \I(\mc{P}) \iff \exists \mathbf{c}'\in \Field^{m} \text{~with~} \mathbf{c}'M'\in \I(\mc{P}').
    \]
    This finishes the proof.
\end{proof}

Theorem~\ref{thm:GB-XIMP} and the above lemma provide a powerful tool for solving the \xIMP. That is, if $\mc{X}$ is substitution reducible to $\mc{Y}$ and furthermore $\mc{Y}$ is such that it admits a polynomial time algorithm to construct a \GB, then instances of $\mc{X}$ are solvable in polynomial time too. More formally,

\begin{theorem}
\label{thm:sub+GB}
    Let $r \geq 1$ and let $\mc{X},\mc{Y}$ be restricted \CSP-based subproblems of the $\xIMP_d$ and $\xIMP_{rd}$, respectively, such that $\mc{X}$ is substitution reducible to $\mc Y$ with a substitution collection $\{\vc hk\}$ of substitution arities $\{\vc\ell k\}$ and $r\ge\ell_i$ for $i\in[k]$. Suppose there exists a polynomial time algorithm that for any instance $(M',\mc P')$ of $\mc Y$ constructs a (truncated) \GB, then
    \begin{enumerate}
        \item every instance $(M,\mc P)$ of the search version of $\mc X$ is polynomial time solvable;
        \item there exists a polynomial time algorithm that for any instance $(M,\mc P)$ of $\mc X$ constructs a $d$-truncated \GB\ for $\I(\mc P)$. 
    \end{enumerate}
\end{theorem}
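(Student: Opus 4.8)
The plan is to chain together the three tools already in hand: the substitution reduction (Lemma~\ref{lem:reduct-sub}), the passage from a \GB\ of an ideal to a polynomial time solver for the search version of the \xIMP\ on that ideal (Theorem~\ref{thm:GB-XIMP}), and the passage from a solver for the search version of $\xIMP_d$ on a class of ideals back to a $d$-truncated \GB\ algorithm (Theorem~\ref{thm:GB+xIMP}). For part~(1): by Lemma~\ref{lem:reduct-sub}, substitution reducibility of $\mc X$ to $\mc Y$ together with $r\ge\ell_i$ for all $i\in[k]$ gives a polynomial time algorithm that turns an instance $(M,\mc P)$ of $\mc X$ into an instance $(M',\mc P')$ of $\mc Y$; moreover, inspecting that proof one sees that $\bc M\in\I(\mc P)$ if and only if $\bc M'\in\I(\mc P')$ for the \emph{same} coefficient vector $\bc$, since $\bc M'$ is obtained from $\bc M$ by the substitution, so the reduction also transports solutions of the search problem. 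By hypothesis a (truncated) \GB\ of $\I(\mc P')$ can be computed in polynomial time (truncation at a degree exceeding that of $M'$ suffices), whence Theorem~\ref{thm:GB-XIMP} solves the search version of the \xIMP\ for $(M',\mc P')$ in polynomial time, returning a witnessing $\bc$ or reporting that none exists; transporting $\bc$ back solves $(M,\mc P)$. Composing the reduction with this solver yields the claimed polynomial time algorithm for $\mc X$.

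For part~(2), I would take $\mc H=\{\I(\mc P)\mid (M,\mc P)\text{ is an instance of }\mc X\}$. Because $\mc X$ is a restricted \CSP-based class of the $\xIMP_d$, the polynomial vector $M$ in any of its instances has total degree at most $d$, so part~(1) says exactly that the search version of $\xIMP_d$ is polynomial time solvable for every ideal in $\mc H$. Theorem~\ref{thm:GB+xIMP} then applies to $\mc H$ and produces a polynomial time algorithm (of running time $O(n^d)$) that, given $\I(\mc P)\in\mc H$ with $\I(\mc P)\subseteq\Field[\vc xn]$, outputs a $d$-truncated \GB\ of $\I(\mc P)$; this is precisely part~(2).

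The only point needing genuine care --- and the place where the numerical hypotheses are actually consumed --- is the degree bookkeeping internal to the substitution reduction: one must check that each substituted polynomial $g'_i=g_i(h_{i_{x_1}}(Y_{x_1}),\dots,h_{i_{x_t}}(Y_{x_t}))$ still has total degree within the bound defining $\mc Y=\xIMP_{rd}$. Since each $h_i$ interpolates a surjection $E^{\ell_i}\to D$ and therefore has degree bounded in terms of $\ell_i$ and the fixed cardinalities of $D$ and $E$, substituting it into a polynomial of degree at most $d$ keeps the degree $O(d)$, with $r\ge\ell_i$ absorbing the dependence on $\ell_i$; I would simply cite Lemma~\ref{lem:reduct-sub} for this estimate rather than rederive it. Apart from this bookkeeping the argument is a routine composition of the cited theorems, and I do not foresee any further obstacle.
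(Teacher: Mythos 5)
Your proof is correct and follows essentially the same route as the paper: part~(1) composes the substitution reduction of Lemma~\ref{lem:reduct-sub} with the \GB-based solver for the search $\xIMP$ of Theorem~\ref{thm:GB-XIMP}, and part~(2) feeds part~(1) into Theorem~\ref{thm:GB+xIMP}. Your explicit remark that the same coefficient vector $\bc$ witnesses membership on both sides makes the transport step slightly cleaner than the paper's phrasing (which speaks of $\bc'$ as linear combinations of the entries of $\bc$), but the substance and the chain of cited results are identical.
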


\begin{proof}
    Suppose $M$ contains $m$ polynomials $\vc gm$. From instance $(M,\mc P)$ of $\mc X$ we construct an instance $(M',\mc P')$ of $\mc Y$ as explained above. Now by Lemma~\ref{lem:reduct-sub} these two instances are equivalent. 
    
    The objective is to find $\mathbf{c}\in\Field^m$ such that $f=\mb{c}M\in\I(\mc{P})$, if one exists. After carrying out the construction the coefficients in all the polynomials $\vc {g'}m\in M'$ are linear combination of elements of $\mb{c}$. Hence, we have polynomial $f'=\mb{c}'M'$ where each entry of $\mb{c}'$ is a linear combination of elements of $\mb{c}$. By our assumption, if we can construct a \GB\ for $\I(\mc{P}')$ then we can check in polynomial time if such $\mb{c}'$ exists. If no such $\mb{c}'$ exists then $(M,\mc P)$ is a \textbf{no} instance, else we can solve a system of linear equations over the elements of $\mb{c}$ and find a solution $\mb{c}$. 
    
    Finally, for the second part of the theorem since $(M,\mc P)$ is polynomial time solvable, by Theorem~\ref{thm:GB+xIMP}, we can construct a $d$-truncated \GB\ for $\I(\mc{P})$ in polynomial time.
\end{proof}

\subsection{Applications of reduction by substitution}
In this section we demonstrate that the notion of reduction by substitution introduced in the previous section is applicable to various cases, in particular the case of \CSP s over constraint languages closed under the affine operation of a finite Abelian group. We start off with the case of pp-interpretation.

\begin{lemma}
\label{lem:sub+pp-int}
    Let $\Gm$ and $\Dl$ be multi-sorted constraint languages over finite collection of sets $\mc D=\{D_t\mid t\in T\}$, $\mc E=\{E_s\mid s\in S\}$, respectively. Let $\mc X$ be a problem of the of the form $\xIMP_d(\Dl)$ for some $d$, and suppose $\Gm$ pp-interprets $\Dl$. Then there is a constraint language $\Gm'$ on $\mc E$ pp-definable in $\Gm$ such that for the CSP-based problem $\mc Y$ defined as $\xIMP(\Gm')$, the problem $\mc X$ is substitution reducible to $\mc Y$.
\end{lemma}

\begin{proof}
Let $F_s,\pi_s$, $s\in S$, be the sets and mappings given in a pp-interpretation of $\Dl$ by $\Gm$, see Definition~\ref{pp-interpret-multi-sorted}. Set
\[
\Gm'=\{\pi^{-1}(Q)\mid Q\in\Dl\}.
\]
By the requirements of pp-interpretations, $\Gm'$ is pp-definable in $\Gm$.

Now, let $(M, \mc{P})$ be an instance of $\mc X$. We provide an algorithm that constructs an instance $(M',\mc{P}')$ of $\mc Y$ in such a way that it satisfies the conditions for reduction by substitution.
    
Define the substitution collection $\{h_s\mid s\in S\}$ to be the set of polynomials where each $h_s$ interpolates the onto mapping $\pi_s : F_s \to E_s$. For every constraint $\ang{\mb{v},R}$ in $\mc{P}$ with $\mb{v}=(\vc xt)$, $\mc{P}'$ contains the constraint $\ang{\mb{v}',R'}$ with
\begin{itemize}
\item [$-$] $\mb{v}'=(x_{1,1},\dots,x_{1,\ell_{s_1}},\dots,x_{t,1},\dots,x_{t,\ell_{s_t}})$, and
\item [$-$] $R'$ is such that
\[
\pi^{-1}(R)(x_{1,1},\ldots , x_{1,\ell_{s_1}},x_{2,1},\ldots,x_{2,\ell_{s_2}},\ldots,x_{t,1},\ldots,x_{t,\ell_{s_t}})\qquad \text{is true}
\]
if and only if
\[
R(h_{s_1}(x_{1,1},\ldots , x_{1,\ell_{s_1}}),\dots,h_{s_k}(x_{k,1},\ldots,x_{k,\ell_{s_t}})) \qquad\text{is true}.
\]
\end{itemize}
By the definition of $\Gm'$, $R'\in\Gm'$. Now for each $x_i$ with $\delta_\Dl(x_i)=s$ we have $Y_{x_i}=\{x_{i,1},\ldots , x_{i,\ell_{s}}\}$. Note that for every distinct $x_i$ and $x_j$ we have $Y_{x_i}\cap Y_{x_j}=\emptyset$. This satisfies conditions 1(b),(c). Moreover, according to pp-interpretability and the way $(M',\mc{P}')$ is constructed condition (3) is also satisfied.
\end{proof}

From Lemma~\ref{lem:sub+pp-int} and Theorems~\ref{the:ximp-pp-defininition},~\ref{thm:sub+GB} we obtain the following corollary.

\begin{corollary}
    Let $\Gm$ and $\Dl$ be multi-sorted constraint languages over finite collection of sets $\mc D=\{D_t\mid t\in T\}$, $\mc E=\{E_s\mid s\in S\}$, respectively. Suppose $\Gm$ pp-interprets $\Dl$ and there exists a polynomial time algorithm that  for any instance $(M',\mc P')$ of $\xIMP_{O(d)}(\Gm)$ constructs a (truncated) \GB, then 
    
    \begin{enumerate}
        \item every instance $(M,\mc P)$ of $\xIMP_d(\Dl)$ is polynomial time solvable;
        \item there exists a polynomial time algorithm that for any $d\in\zN$ and any instance $\cP$ of $\CSP(\Dl)$ constructs a $d$-truncated \GB\ for $\I(\mc P)$. 
    \end{enumerate}
\end{corollary}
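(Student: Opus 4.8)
The plan is to chain together the two results immediately preceding the corollary, Lemma~\ref{lem:sub+pp-int} and Theorem~\ref{thm:sub+GB}, the only real content being to check that the degree and size bookkeeping comes out to $O(d)$. First I would apply Lemma~\ref{lem:sub+pp-int} with $\mc X=\xIMP_d(\Dl)$ and $\mc Y=\xIMP(\Gm)$: since $\Gm$ pp-interprets $\Dl$, the lemma produces the substitution collection $\{h_s\mid s\in S\}$, where $h_s$ interpolates the onto map $\pi_s:F_s\to E_s$ with $F_s\sse D_{i_{s,1}}\tm\dots\tm D_{i_{s,\ell_s}}$, and it shows that $\mc X$ is substitution reducible to $\mc Y$ via this collection.

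Next I would record the relevant constants. Because the languages $\Dl,\Gm$ and the pp-interpretation data are fixed rather than part of the input, the arities $\ell_s$ are bounded by a constant; set $r:=\max_{s\in S}\ell_s$, so $r\ge\ell_s$ for all $s$, and note $\deg h_s\le\ell_s\cdot\max_t|D_t|=O(1)$. Consequently the substituted vector $M'$ produced by the reduction consists of polynomials of degree $O(d)$, so the target of the reduction is a restricted class of $\xIMP_{O(d)}(\Gm)$ — exactly the class for which the hypothesis of the corollary supplies a polynomial-time algorithm that constructs a (truncated) \GB.

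Finally I would invoke Theorem~\ref{thm:sub+GB} with this $\mc X$, with $\mc Y=\xIMP_{O(d)}(\Gm)$, with the constant $r$, and with the substitution collection $\{h_s\mid s\in S\}$. Its three hypotheses — substitution reducibility of $\mc X$ to $\mc Y$, the bound $r\ge\ell_s$ for every $s$, and a polynomial-time \GB\ construction for instances of $\mc Y$ — have just been verified, and its two conclusions are precisely items~(1) and~(2) of the corollary.

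I do not expect any real obstacle: the corollary is a direct specialization of Theorem~\ref{thm:sub+GB} combined with Lemma~\ref{lem:sub+pp-int}. The one point that warrants care is making explicit that all the data attached to the pp-interpretation (the sets $F_s$, the maps $\pi_s$, and hence the arities $\ell_s$ and the degrees of the $h_s$) are of bounded size, so that both the $O(d)$ degree bound and the overall polynomial running time genuinely follow rather than being merely plausible.
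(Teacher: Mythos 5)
Your proposal is correct and matches the paper's intent exactly: the corollary is obtained by chaining Lemma~\ref{lem:sub+pp-int} (pp-interpretability gives substitution reducibility) with Theorem~\ref{thm:sub+GB} (substitution reducibility plus a Gr\"obner-basis algorithm for the target yields solvability and a $d$-truncated basis for the source). The degree bookkeeping you spell out --- that the $\ell_s$ and $\deg h_s$ are constants fixed by the pp-interpretation, so the substituted polynomials have degree $O(d)$ --- is the only point requiring care, and you identify and handle it correctly.
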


\subsubsection{Reduction by substitution for languages over Abelian groups}
In this section we prove that our reductions for constraint languages over finite Abelian groups is an example of reduction by substitution. Lemma~\ref{lem:sub+pp-int} states that reductions under pp-interpretability can be seen as reduction by substitution. This means the part of our reduction where we transform an instance of $\CSP$ over an Abelian group to an instance of $\CSP$ over $\zZ_{p_1^{m_1}}\zd\zZ_{p_s^{m_s}}$ can be seen as a reduction by substitution. We will show that the reduction to roots of unity is also a reduction by substitution.

Let $(M,\mc{P})$ be such that $M=(\vc gm)$ is a vector of polynomials of length $m$ where each $g_i\in M$ is from $\Complex[x_{1,1},\dots,x_{1,k_1},\dots, x_{s,1},\dots,x_{s,k_s}]$ and $\mc{P}$  an instance of $\CSP(\Gm)$. Here $\Gm$ is a constraint language invariant under the affine operation of $\zZ_{p_1^{m_1}}\zd\zZ_{p_s^{m_s}}$. Moreover, $\mc{P}$ can be represented as a collection of systems of linear equations $\mc{L}_1,\dots,\mc{L}_s$ where
\begin{itemize}
    \item[1.] each $\mc{L}_i$ is a system of linear equations over $\zZ_{p_i^{m_i}}$ with variables $X(\mc L_i)\cup Y(\mc Y_i)$, $X(\mc{L}_i)=\{x_{i,1},\dots,x_{i,k_i}\}$, $Y(\mc{L}_i)=\{y_{i,1},\dots,y_{i,r_i}\}$; 
    \item[2.] each $\mc{L}_i$ is of the following form
        \begin{align*}
            (\mathbb{1}_{k_i\times k_i}~~ B_i)(x_{i,1},\dots,x_{i,k_i},y_{i,1},\dots,y_{i,r_i},1)^T=\mathbf{0};
        \end{align*}
    \item[3.] $X(\mc{L}_i)\cap X(\mc{L}_j)=\emptyset$, $Y(\mc{L}_i)\cap Y(\mc{L}_j)=\emptyset$, for all $1\le i,j\leq s$ and $i\neq j$;
    \item[4.] an assignment $\vf$ to variables from $X$ is a solution of $\cP$ if and only if for every $i\in[s]$ there are values of variables from $Y(\mc L_i)$ that together with $\vf_{|X(\mc L_i)}$ satisfy $\mc L_i$.
\end{itemize}
Now for each $i\in[s]$ let $h_i$ be a polynomial that interpolates the mapping 
\[
    (0,\om_i^0),(1,\om_i),\dots,(p_i^{m_i}-1,\om_i^{(p_i^{m_i}-1)})
\]
where $\om_i$ is a primitive $p_i^{m_i}$-th root of unity. The substitution collection consists of all $h_i$, $i\in [s]$. For every variable $x_{i,j}, y_{i,j}$ we set $Y_{i,j}=\{x_{i,j}\}$ and $Y'_{i,j}=\{y_{i,j}\}$ satisfying condition 1(b). Then, for each variable $x_{i,j},y_{i,j}$ we choose $h_{i_{x_{i,j}}}=h_i$, $h_{i_{y_{i,j}}}=h_i$. Thus, condition 1(c) is satisfied.

Now for every constraint in $\mc{P}$ which is of the form 
\[
    x_{i,t} + \alpha_{t,1}~y_{i,1}+\dots + \alpha_{t,r_i}~y_{i,r_i}+\alpha_t  =  0  \pmod {p_i^{m_i}}
\]
we add the following constraint in $\mc{P}'$
\[
     x_{i,t}-\om_i^{\alpha_t}\cdot\left(y_{i,1}^{\alpha_{t,1}}\cdot\ldots \cdot y_{i,r_i}^{\alpha_{t,r_i}}\right) = 0.
\]
Such construction of $\mc{P}'$ guarantees that conditions in (3) hold. Now it is immediate that our transformation of the problem to an equivalent problem over roots of unities is indeed a reduction by substitution. This together with the fact that there exists an algorithm to construct a \GB\ for the equivalent problem over roots of unities, see Lemma~\ref{lem:GB-L'}, give us the following theorem. 
\begin{theorem}[\Cref{the:main-intro} paraphrased]
    Let $\zA$ be a finite Abelian group. Then $\IMP_d(\Dl)$ is polynomial time solvable for any finite constraint language $\Dl$ which is invariant under the affine operation of $\zA$. In fact, there exists a polynomial time algorithm that for any $d\in\zN$ and any instance $\cP$ of $\CSP(\Dl)$ constructs a $d$-truncated \GB\ for $\I(\mc P)$.
\end{theorem}
\begin{proof}
    The discussion above tells us that $\xIMP_d(\Dl)$ is substitution reducible to a class of \xIMP, say $\mc Y$, where for every instance $(M',\mc{P}')$ of $\mc Y$ we can construct a \GB. Hence, by \Cref{thm:sub+GB}, every instance $(M,\mc P)$ of $\chi\IMP_d(\Dl)$ is polynomial time solvable. Moreover, by item 2 of \Cref{thm:sub+GB}, we can construct a $d$-truncated \GB\ for $\I(\mc{P})$ thus $\IMP_d(\Dl)$ is polynomial time solvable.
\end{proof}

\addcontentsline{toc}{section}{References}
\bibliographystyle{plain}
\bibliography{Reference}

\end{document}